\numberwithin{equation}{section}
\newtheorem{theorem}{Theorem}[section]
\newtheorem{lemma}{Lemma}[section]
\newtheorem{prop}{Proposition}[section]
\newtheorem{definition}{Definition}[section]
\newtheorem{remark}{Remark}[section]
\newtheorem{example}{Example}[section]
\newcommand{\ri}{\mathrm{i}}
\newcommand{\rd}{\mathrm{d}}
\newcolumntype{M}[1]{>{\centering\arraybackslash}m{#1}}
\title{Hamiltonian reductions in Matrix Painlev\'e systems}
\author[1,2,3]{Mikhail Bershtein \thanks{mbersht@gmail.com}}
\author[2,3,4]{Andrei Grigorev \thanks{andrey4287252@gmail.com}}
\author[2,3]{Anton Shchechkin \thanks{a.shchechkin@skoltech.ru}}
\affil[1]{Landau Institute for Theoretical Physics, Chernogolovka, Russia}
\affil[2]{Skolkovo Institute of Science and Technology, Moscow, Russia}
\affil[3]{National Research University Higher School of Economics, Moscow, Russia}
\affil[4]{Moscow Institute of Physics and Technology, Dolgoprudny, Russia}
\date{}
\begin{document}
\pagenumbering{arabic}

\maketitle

\begin{abstract}
%
	
For certain finite groups \(G\) of B\"acklund transformations we show that the dynamics of \(G\)-invariant configurations of \(n|G|\) Calogero--Painlev\'e particles is equivalent to certain \(n\)-particle  Calogero--Painlev\'e system. We also show that the reduction of dynamics on \(G\)-invariant subset of \(n|G|\times n|G|\) matrix Painlev\'e system is equivalent to certain \(n\times n \)  matrix Painlev\'e system. The groups \(G\) correspond to folding transformations of Painlev\'e equations. The proofs are based on the Hamiltonian reductions.

\end{abstract}

\tableofcontents

\section{Introduction}\label{sec:intro}

In the paper we construct a relation between matrix Painlev\'e equations of different size and Calogero--Painlev\'e systems of different number of particles. Such relations are in correspondence with folding transformations in the Painlev\'e theory. We recall these notions and illustrate our results by simple instructive examples.

\paragraph{Folding transformations.} This notion was introduced in \cite{TOS05}, but actual examples of such transformations were known for many years. By definition, the folding transformation is an algebraic (of degree greater than 1) map between solutions of the Painlev\'e equations. Moreover, this map should go through a quotient of Okamoto-Sakai space of initial conditions (see e.g. \cite{KNY15} for the review).  Probably the simplest example is the folding transformation of Painlev\'e II to itself.

\begin{example}\label{ex:intro_folding}
	The Painlev\'e $\mathrm{II}$ equation is a second order differential equation with parameter $\theta$  
	\begin{equation}
		\frac{d^2q}{dt^2} = 2q^3 + tq + \theta. 
	\end{equation}
    This equation is equivalent to a Hamiltonian system with Hamiltonian
    \begin{equation}\label{Ham_PII_intro}
    	H_{\mathrm{II}}(p,q;t)=\frac{1}{2}p^2 - \frac{1}{2}\left(q^2 + \frac{t}{2}\right)^2-\theta q.
    \end{equation}
    This equation (system) has a natural symmetry \(r\) which transforms parameter \(\theta \mapsto -\theta\) and maps \((p,q) \mapsto (-p,-q)\). Such symmetries are called B\"acklund transformations. For the special value of the parameter \(\theta=0\) the equation is preserved by \(r\) so one can ask for the equation on the functions invariant under this transformation. If we introduce new invariant coordinates \(P,Q\) and new time \(s\)
    \begin{equation}\label{PII_coord}
    	Q=-2^{-1/3}\frac{p}{q}, \quad P=2^{1/3}\left(q^2-\frac{p^2}{2q^2}\right)-\frac{s}2, \qquad s = -2^{1/3}t
    \end{equation}
    we get Painlev\'e $\mathrm{II}$ equation on $Q$ with parameter \(\theta=-1/2\)
	\begin{equation}
		\frac{\rd^2 Q}{\rd s^2} = 2Q^3 + sQ - \frac12.
	\end{equation}
    This is a transformation of degree 2 between the spaces of initial conditions.

\end{example}
Note that in this example we start from the Painlev\'e equation with the special value of the parameter (namely \(\theta=0\)) and come to the Painlev\'e equation with the special value of the parameter (namely $\theta=-1/2$). But, it appears that we can come to the equation with an arbitrary value of parameter if we start from Calogero--Painlev\'e system.

\paragraph{Calogero--Painlev\'e systems.}
These systems can be viewed as an \(N\)-particle generalization of the Painlev\'e equations. Let \(q_1,\dots,q_{N}\) be coordinates of these particles and \(p_1,\dots,p_{N}\) be the corresponding momenta with the standard Poisson bracket. The dynamics of a Calogero--Painlev\'e system is defined by the Hamitonian of the form \cite{Takasaki:2001Painleve}
\begin{equation}\label{Ham_CPII}
	H_{\mathrm{CP}}(\{(p_i,q_i)\};t)=\frac12\sum_{i=1}^N p_i^2+\sum_{i=1}^N V^{(1)}(q_i) +\sum_{1 \le i<j\le N} V^{(2)}(q_i,q_j).
\end{equation}
Here \(V^{(1)}(q)\) is a Painlev\'e potential (i.e. for \(N=1\) we get a Hamiltonian of the Painlev\'e equation) and \(V^{(2)}(q_1,q_2)\) is a Calogero-type interaction. Note that Hamiltonian \eqref{Ham_CPII} is symmetric. By definition the phase space of Calogero--Painlev\'e system is a quotient by the action of permutation group \(S_{N}\).

Note that Hamiltonian \eqref{Ham_CPII} is non-autonomous, since Painlev\'e potential \(V^{(1)}\) depends on time~\(t\). In the autonomous limit, in which \(t\) is a coupling constant, Hamiltonian \eqref{Ham_CPII} belongs to Inozemtsev extension of Calogero integrable system \cite{Inozemtsev:1985Extension}, \cite{Inozemtsev:1989Lax}.  

Another important feature of Calogero--Painlev\'e systems is that they describe isomonodromic deformations of certain natural \(2N\times 2N\) systems \cite{Kawakami:2015Matrix}, \cite{BCR17}.

In the paper we construct a natural analog of the folding transformations for the Calogero--Painlev\'e systems. Let us give an example of Calogero--Painlev\'e II for \(N=2\).

\begin{example} \label{ex:intro_Calogero}
Hamiltonian \eqref{Ham_CPII} in this case has the form 
\begin{equation}\label{Ham_CPII_2}
	H_{\mathrm{CPII}}(\{(p_i,q_i)\};t) =\sum_{i=1}^2\left(\frac{1}{2}p_i^2 - \frac{1}{2}\left(q_i^2 + \frac{t}{2}\right)^2-\theta q_i\right)  + \frac{g^2}{(q_1 - q_2)^2}.
\end{equation}
This system has a symmetry $r_{\mathrm{CP}}: \{(p_1, q_{1}), (p_{2}, q_{2})\} \mapsto \{(-p_1, -q_{1}), (-p_{2}, -q_{2})\}$ with $\theta\mapsto -\theta$. For $\theta=0$ the subset of $r_{\mathrm{CP}}$--invariant points is defined by the equations (recall that we factor over the permutation group $S_2$)
\begin{equation}
q_{1} + q_{2} = 0,\;\; p_1 + p_2 = 0.
\end{equation}
It is easy to check that these equations are preserved by the dynamics. 

For $g = 0$ Calogero--Painlev\'e system is equivalent to two noninteracting Painlev\'e particles \(q_1,q_2\) up to the permutation. Hence by the Example \ref{ex:intro_folding} dynamics on $r_{\mathrm{CP}}$-invariant subset is equivalent to the Painlev\'e~$\mathrm{II}$ equation with \(\theta=-1/2\).	

For \(g \neq 0\) let us take the following coordinates $P,Q$ on $r_{\mathrm{CP}}$-invariant subset and rescale the time
\begin{equation}
	Q=-2^{-1/3}\left(\frac{p_1}{q_1}-\frac{\ri g}{2q_1^2}\right), \quad P=2^{1/3} q_1^2-Q^2-\frac{s}2, \qquad s=-2^{1/3}t.	
\end{equation}
Note that these formulas are $g$-deformed version of formulas \eqref{PII_coord}. It is straightforward to compute the dynamics in terms of \(p_1,q_1\)
\begin{equation}
	\frac{dq_1}{dt}=p_1, \qquad \frac{dp_1}{dt}=2q_1^3+tq_1+\frac{g^2}{4q_1^3},	
\end{equation}
and then get 
\begin{equation}\label{ex_CPII}
	\frac{dQ}{ds}=P,\qquad \frac{dP}{ds}=2Q^3+sQ-\ri g-1/2,	
\end{equation}
which is the Painlev\'e $\mathrm{II}$ equation with parameter \(\theta=-\frac12 -\ri g\).
\end{example}
It appears that this example can be generalized to the Calogero--Painlev\'e system with more than $2$ particles. Namely, for \(2n\)-particle Calogero Painlev\'e~II with \(\theta=0\) the dynamics on (dense open subset of) \(r_{\mathrm{CP}}\)-invariant subset is equivalent to the dynamics of \(n\)-particle Calogero Painlev\'e~II systems with \(\theta=-\frac12 -\ri g\).  Moreover, similar statements hold for other folding transformations, see Theorem~\ref{thm:CP_red} and remarks after it.  This is one of the main results of the paper.

\begin{remark}
	Part of our motivation comes from the papers by Rumanov \cite{R13},\cite{R14}, where certain solutions of \(N\)-particle Calogero--Painlev\'e II systems are related to a spectrum of \(\beta\)-ensemble with \(\beta=2N\). It was also observed in \cite{R14} that this solution for \(N=2\) is described by the scalar Painlev\'e II equation. It appears that this solution is invariant with respect to \(r_{\mathrm{CP}}\), hence this observation is a corollary of Example \ref{ex:intro_Calogero} above. It would be interesting to study subsets corresponding to Rumanov's solutions.
\end{remark}

\paragraph{Matrix Painlev\'e systems.}
Calogero--Painlev\'e systems can be obtained by Hamiltonian reduction \`a la Kazhdan-Kostant-Sternberg \cite{KKS:1978} from the matrix Painlev\'e systems~\cite{BCR17}~\footnote{In the paper we consider only matrix Painlev\'e systems that are Hamiltonian, there are more interesting matrix Painlev\'e analogs, see e.g \cite{Adler:2021Matrix}}. In the Calogero--Painlev\'e case we consider the B\"acklund invariant subset of the phase space. It appears that in the matrix case it is natural to consider a subset invariant under the B\"acklund transformation twisted by the conjugation by a certain permutation matrix. In the matrix case we also perform an additional reduction.

\begin{example} \label{ex:intro_matrix}	
	The phase space of matrix Painlev\'e $\mathrm{II}$ consists of pairs of \(N\times N\) matrices \(p,q\) with symplectic form $\mathrm{Tr}(\rd p\wedge \rd q)$. The Hamiltonian has the form (cf. Hamiltonian in scalar case \eqref{Ham_PII_intro})
	\begin{equation}\label{Ham_MPII_intro}
		H_{\mathrm{MPII}}(p,q;t)=\mathrm{Tr} \left(\frac{1}{2}p^2 - \frac{1}{2}\left(q^2 + \frac{t}{2}\right)^2-\theta q\right),
	\end{equation}	
	where parameter $\theta$ remains a scalar variable. 

	As before we have B\"acklund transformation $r: (p,q)\mapsto (-p,-q)$ with $\theta\mapsto -\theta$. Let us take $2n\times 2n$ matrix Painlev\'e $\mathrm{II}$ with $\theta=0$ and consider the subset of the phase space invariant under $\mathrm{Ad}_{S_2} \circ r$, where $S_2=\mathrm(\mathbf{1}_{n\times n}, -\mathbf{1}_{n\times n})$. This invariant subset is given by block matrices with $n\times n$ blocks
	\begin{equation}
		p=\begin{pmatrix}
		 0	&  \mathfrak{p}_{12}\\
		 \mathfrak{p}_{21} & 0 
		\end{pmatrix},	\quad 
		q=\begin{pmatrix}
			0	&  \mathfrak{q}_{12}\\
			\mathfrak{q}_{21} & 0
		\end{pmatrix}.
	\end{equation}
	It is easy to see that the dynamics preserves this subset and in coordinates \(\mathfrak{p}_{12},\mathfrak{p}_{21},\mathfrak{q}_{12},\mathfrak{q}_{21}\) has the form 
	\begin{equation}
			\dot{\mathfrak{q}}_{12}=\mathfrak{p}_{12}, \quad 	\dot{\mathfrak{q}}_{21}=\mathfrak{p}_{21}, \quad
			\dot{\mathfrak{p}}_{12}=2\mathfrak{q}_{12}\mathfrak{q}_{21}\mathfrak{q}_{12}+t\mathfrak{q}_{12}, \quad \dot{\mathfrak{p}}_{21}=2\mathfrak{q}_{21}\mathfrak{q}_{12}\mathfrak{q}_{21}+t\mathfrak{q}_{21}.
	\end{equation}
	One can check that $m_2=\mathfrak{p}_{21}\mathfrak{q}_{12} - \mathfrak{q}_{21}\mathfrak{p}_{12}$ is an integral of motion. Let us fix its value as $m_2=\ri g \mathbf{1}_{n\times n}$. Then taking $n\times n$ matrices
	$(P,Q)$ and rescaling the time
	\begin{equation}
		Q=-2^{-1/3}\mathfrak{p}_{12}\mathfrak{q}_{12}^{-1},\quad 	P=2^{1/3}\mathfrak{q}_{12}\mathfrak{q}_{21}+Q^2+s/2, \qquad s=-2^{1/3}t
	\end{equation}
	we obtain matrix Painlev\'e $\mathrm{II}$ with $\theta=-\ri g-1/2$
	\begin{equation}\label{ex_MPII}
		\frac{dQ}{ds}=P,\qquad \frac{dP}{ds}=2Q^3+sQ-\ri g-1/2.
	\end{equation}
Note that the parameter value for the resulting matrix Painlev\'e system coincides with one obtained for the scalar Painlev\'e obtained in Example \ref{ex:intro_Calogero} (cf. \eqref{ex_MPII} with \eqref{ex_CPII}).
\end{example}	
Actually the last step in the example is a Hamiltonian reduction with respect to the conjugation by certain \(\mathrm{GL}_n\).
In Theorem \ref{thm:gen_constr} we construct such Hamiltonian reductions of matrix Painlev\'e systems for all folding transformations of the Painlev\'e equations. This is one of the main results of the paper. Moreover, our proof of Theorem \ref{thm:CP_red} mentioned above is based on the Theorem \ref{thm:gen_constr}.


\paragraph{Plan of the paper.}
In Sec. \ref{sec:Backlund} we recall matrix Painlev\'e equations in their  Hamiltonian forms. We also lift all B\"acklund transformations known in scalar case to the matrix case. This lift is not completely straightforward due to noncommutativity of variables, see e.g. Tables \ref{table:matrix_P2_Backlund} and \ref{table:matrix_P6_Backlund} below.

In Sec. \ref{sec:block_reductions} we construct block reductions for the matrix Painlev\'e systems. 
The construction works for B\"acklund transformation which preserves parameters of certain Painlev\'e system but acts on \(p,q\) nontrivially. 
Let \(w\) be such a transformation, \(d\) denotes  order of \(w\) and  \(\bar{w}\) denotes \(w\) twisted by adjoint action of a certain permutation matrix of order \(d\). In Theorem \ref{thm:gen_constr} we consider subset in the phase space of \(nd \times nd\) matrix Painlev\'e system  that is invariant under the action of \(\bar{w}\), and show that the dynamics on its Hamiltonian reduction is equivalent to \(n\times n\) matrix Painlev\'e system. The proof of this theorem is based on case by case considerations, which occupy the bulk of Sec. \ref{sec:block_reductions}. In Sec.~\ref{ssec:special_C2xC2} we prove similar statement for the non-cyclic subgroups.

In Sec.~\ref{ssec:CP_systems_intro} we recall definition of Calogero--Painlev\'e systems. In the 
following Sec.~\ref{ssec:redCP} we prove Theorem \ref{thm:CP_red}. Roughly speaking, the theorem states that the dynamics on \(w\)-invariant set of \(nd\) Calogero--Painlev\'e particles is equivalent to the dynamics of \(n\) Calogero--Painlev\'e particles. As we mentioned above, the proof is based on Theorem \ref{thm:gen_constr} and construction of Calogero--Painlev\'e systems via Hamiltonian reduction of matrix Painlev\'e systems. In particular, we do not need case by case analysis here.

It appears that there are more relations between matrix Painlev\'e systems and Calogero--Painlev\'e systems similar to ones found in Theorems \ref{thm:gen_constr} and \ref{thm:CP_red}. 
We do not intend to classify them and just give several examples in Sec. \ref{sec:further}. 
In particular, in Sec. \ref{ssec:matr_gen} we study another block reductions of matrix Painlev\'e equations. 
In Sec. \ref{ssec:algebrCP} we study \(w\)-invariant configurations of Calogero--Painlev\'e particles in which several particles evolve by algebraic solutions of the Painlev\'e equation.
Finally, in Sec.~\ref{ssec:spin} we discuss spin generalization of the Calogero--Painlev\'e systems.

\paragraph{Acknowledgments.}
We are grateful to V.~Adler, I.~Sechin, V.~Sokolov and V.~Roubtsov for useful discussions and paying attention to the literature.

Large part of calculations related to Sec. \ref{sec:Backlund} and Sec. \ref{sec:block_reductions} is performed with the help of the package for noncommutative computations \cite{NCAlgebra}.

Authors were partially supported by the HSE University basic research program.

\section{Matrix Painlev\'e systems and their B\"acklund transformations}\label{sec:Backlund}

We follow \cite{BCR17} in conventions on matrix Painlev\'e systems.

\subsection{Painlev\'e II}
\label{ssec:PII}
\paragraph{Scalar case.}

We will consider parametrized family of ordinary differential equations (dot is $\frac{d}{dt}$)
\begin{equation}\label{PII_scalar}
	\ddot{q} = 2q^3 + 2tq + \theta .
\end{equation}
We will denote equation \eqref{PII_scalar} by PII$\left(-\theta + \frac{1}{2}, \theta + \frac{1}{2}\right)$.
This equation is Hamiltonian. Namely, one can take $\mathbb{C}^2$ with standard symplectic structure $\omega = \mathrm{d}p\wedge \mathrm{d}q$, then the Hamiltonian \eqref{Ham_PII_intro}
\begin{equation}
H(p,q;t) = \frac{p^2}{2} - \frac{1}{2}\left(q^2 + \frac{t}{2}\right)^2 - \theta q,
\end{equation}
leads to dynamics \eqref{PII_scalar}.

This family of ordinary differential equations has discrete symmetries. For example it is easy to see that if $q(t)$ is a solution of PII$\left(-\theta + \frac{1}{2}, \theta+ \frac{1}{2}\right)$, then $-q(t)$ is a solution of PII$\left(\theta + \frac{1}{2}, -\theta + \frac{1}{2}\right)$.

Let us define B\"acklund transformations in general case. Let $\mathbb{D}\left(\alpha \right)$ be the system of ordinary differential equations with extended phase space $\mathrm{M}$, which depends on the set of parameters $\alpha \in \mathbb{C}^k$.

\begin{definition}
Pair of maps $\left(\pi, \tilde{\pi}\right): \mathrm{M}\times \mathbb{C}^k  \rightarrow \mathrm{M}\times \mathbb{C}^k$ is called B\"acklund transformation if $\pi$ maps solutions of $\mathbb{D}\left(\alpha\right)$ to solutions of $\mathbb{D}\left(\tilde{\pi}\left(\alpha\right)\right)$.
\end{definition}
Let the dynamics of an ordinary differential equation $\mathbb{D}\left(\alpha \right)$ be defined on a symplectic manifold $\left(M, \omega\right)$ by a Hamiltonian $H_{\alpha}(x, t)$. Then a sufficient condition for $\left(\pi, \tilde{\pi}\right)$ to be a B\"acklund transformation is
\begin{equation}\label{symm_cond_2_form}
\pi^*\left(\omega - \mathrm{d}H_{\tilde{\pi}(\alpha)}\wedge \mathrm{d}t\right) = h(x,t,\alpha)\left(\omega - \mathrm{d}H_{\alpha}\wedge \mathrm{d}t\right),
\end{equation} for a certain function $h(x,t,\alpha)$.
Note that it is important to consider extended phase space since we work with non-authonomous system and non-authonomous symmetries.

Below we follow \cite{TOS05} in the description of B\"acklund transformations.

\paragraph{Matrix generalisation.}
Hamiltonian system corresponding to PII$\left(-\theta + \frac{1}{2}, \theta + \frac{1}{2}\right)$ can be generalized to the matrix case. Let us consider $\mathrm{Mat}_{N}\left(\mathbb{C}\right) \times \mathrm{Mat}_{N}\left(\mathbb{C}\right)$ with coordinates $(p, q)$ and symplectic structure $\omega = \mathrm{Tr}\left(\mathrm{d}p\wedge \mathrm{d}q\right)$. Matrix PII$\left(-\theta + \frac{1}{2}, \theta + \frac{1}{2}\right)$ can be defined by the Hamiltonian \eqref{Ham_MPII_intro}
\begin{equation}\label{Ham_MPII}
H_{\theta}(p,q;t) = \mathrm{Tr}\left(\frac{p^2}{2} - \frac{1}{2}\left(q^2 + \frac{t}{2}\right)^2 - \theta q \right).
\end{equation}
Note that the parameter $\theta$ remains scalar. In all cases below we will consider matrix Painlev\'e equations with scalar parameters only. Hamiltonian \eqref{Ham_MPII} leads to equations of motion

\begin{equation}
\dot{q} = p, \qquad
\dot{p} = 2q^3 + tq + \theta.
\end{equation}
In case $N = 1$ this system is equivalent to the equation PII$\left(-\theta + \frac{1}{2}, \theta + \frac{1}{2}\right)$.

The first goal is to generalize formulas for B\"acklund transformations of scalar PII$\left(-\theta + \frac{1}{2}, \theta + \frac{1}{2}\right)$ to the case of matrix PII$\left(-\theta + \frac{1}{2}, \theta + \frac{1}{2}\right)$. 

\begin{prop}\label{prop:matrix_P2_Backlund}
Transformations in the table below are B\"acklund transformations of matrix $\mathrm{PII}$. These transformations generate group $C_2\ltimes W\left(A_1^{(1)}\right)$.
\end{prop}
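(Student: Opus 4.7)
The plan is to verify the proposition by direct computation: for each map $(\pi,\tilde\pi)$ appearing in the table, I would check the sufficient condition \eqref{symm_cond_2_form}, i.e.\ that the pullback of the characteristic 2-form $\omega - \rd H_{\tilde\pi(\alpha)}\wedge \rd t$ agrees with the original $\omega - \rd H_\alpha\wedge \rd t$ up to a scalar factor $h$. Since the Hamiltonian \eqref{Ham_MPII} is a trace of a non-commutative polynomial in $p,q$, this verification reduces to checking two things: that the pullback of $\mathrm{Tr}(\rd p\wedge \rd q)$ is preserved (possibly up to a sign), and that $H_{\tilde\pi(\alpha)}\circ\pi$ differs from $H_\alpha$ only by terms proportional to $t$ alone or by a total derivative along the flow.

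I would split the generators of $C_2\ltimes W(A_1^{(1)})$ into two types and treat them separately. The $C_2$ factor is the sign symmetry $r:(p,q,\theta)\mapsto(-p,-q,-\theta)$ from Example~\ref{ex:intro_folding}; here the check is immediate because each term in \eqref{Ham_MPII} is even of the appropriate total degree so that $H_{-\theta}(-p,-q;t)=H_\theta(p,q;t)$, and $\mathrm{Tr}(\rd(-p)\wedge \rd(-q))=\mathrm{Tr}(\rd p\wedge \rd q)$. The nontrivial generators of $W(A_1^{(1)})$ are the ones that realize the parameter shift $\theta\mapsto\theta\pm 1$; in the scalar case these are given by rational expressions such as $q\mapsto q+(2\theta+1)/(2p-2q^2-t)$, and the point of the proposition is to fix the correct non-commutative lift.

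The main obstacle is precisely this non-commutative lift. When $p$ and $q$ are $N\times N$ matrices, a scalar expression like $(2p-2q^2-t)^{-1}$ admits several inequivalent orderings once it is multiplied by other non-commuting factors, and only a specific symmetrized ordering will make the 2-form $\mathrm{Tr}(\rd p\wedge\rd q)$ invariant. To find it, I would parameterize the ansatz for $\pi$ as a non-commutative rational expression with unknown ordering, impose \eqref{symm_cond_2_form} as a set of identities in the free algebra generated by $p,q,(2p-2q^2-t)^{-1}$, and solve for the coefficients/orderings. These computations are the kind that genuinely need a non-commutative algebra system such as \cite{NCAlgebra}, as already noted in the Acknowledgments, and are the bulk of the technical work.

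Once each transformation in the table is verified, the group-theoretic claim follows by computing the pairwise compositions of the generators acting on $(p,q;\theta)$ and checking the defining relations of $C_2\ltimes W(A_1^{(1)})$: involutivity of each generator, the infinite-order braid-type relation generating the affine translation, and the semidirect-product action of the $C_2$ factor by conjugation on the $W(A_1^{(1)})$ factor. Because all of these relations already hold in the scalar case and the non-commutative lift is uniquely determined (up to conjugation by $\mathrm{GL}_N$) by the requirement \eqref{symm_cond_2_form}, the matrix relations reduce to their scalar analogues plus checks that the ordering chosen is compatible with compositions, which is again a direct algebraic verification.
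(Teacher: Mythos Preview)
Your outline is correct in spirit but differs from the paper's argument in two places, and in both the paper is considerably simpler.

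First, for the generator $s_1$ you propose to discover the correct non-commutative ordering by parameterizing an ansatz and solving identities in the free algebra. But the proposition already gives the transformation explicitly, so there is nothing to discover---only to verify. The paper's key simplification is to pass to the auxiliary coordinates $(f,q)$ with $f=p+q^2+t/2$; one computes once that $\omega-\rd H_\theta\wedge\rd t=\mathrm{Tr}(\rd f\wedge\rd q)-\rd H_{\alpha_1}\wedge\rd t$ with $H_{\alpha_1}$ as in \eqref{Ham_MPII:f}, and then observes that $s_1$ acts by $f\mapsto f$, $q\mapsto q+\alpha_1 f^{-1}$, $\alpha_1\mapsto-\alpha_1$. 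In these coordinates the verification of \eqref{symm_cond_2_form} is a two-line calculation: $\mathrm{Tr}(\rd f\wedge\rd(\alpha_1 f^{-1}))=0$ and the cross terms in $H$ cancel. No CAS or ordering search is needed.

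Second, for the group structure you list ``the infinite-order braid-type relation'' among the things to verify. For $A_1^{(1)}$ the edge label is $\infty$, so there is no braid relation to check; the only relations in the presentation are $s_1^2=r^2=e$ (and the conjugation action of $r$). The paper checks these two involutions directly and then argues that the group cannot be smaller than $C_2\ltimes W(A_1^{(1)})$ because the induced action on the parameters $(\alpha_0,\alpha_1)$ already coincides with the faithful scalar-case action of that group. This replaces your proposed verification of compositions by a one-sentence faithfulness argument.
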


\begin{table}[H]
\begin{tabular}{M{10cm}M{4cm}}
	\begin{tabular}{ |c|c|c|c|c|c| } 
\hline
 & $q$ & $p$  & $t$ \\
\hline
 $s_1$ & $q + \alpha_1 f^{-1}$ & $p - \alpha_1 \left(qf^{-1} + f^{-1}q\right) - \alpha_1^2f^{-2}$ & $t$\\ 
 \hline
 $r$ & $-q$ & $-p$ & $t$ \\
 \hline
\end{tabular}
&
\begin{tikzpicture}[elt/.style={circle,draw=black!100,thick, inner sep=0pt,minimum size=2mm},scale=1.75]
			\path 	(-1,0) 	node 	(a0) [elt] {}
			(1,0) 	node 	(a1) [elt] {};
		
		     	\node at ($(a0.west) + (-0.2,0)$) 	{$\alpha_0$};
		     	\node at ($(a1.east) + (0.2,0)$) 	{$\alpha_1$};

		    \draw[<->, dashed] (a0) to [bend left=40] node[fill=white]{$r$}(a1);
		    \draw [black,line width=1pt, double distance=.07cm] (a0) -- (a1);
			\end{tikzpicture} 
			\\

\end{tabular}

\caption{B\"acklund transformations for matrix PII.}
\label{table:matrix_P2_Backlund}
\end{table}
Let us explain the notation. Here $f = p + q^2 + \frac{t}{2},\;\;\alpha_1 = \theta + \frac{1}{2},\;\;\alpha_0 = 1 - \alpha_1$. Parameters $\alpha_0$ and $\alpha_1$ are called root variables. Action of $s_1$ on them is given by $s_{1}(\alpha_1) = -\alpha_1,\;\;s_{1}(\alpha_0) = \alpha_0 + \alpha_1 $. The action of $r$ on root variables is indicated on the diagram.

For any Painlev\'e equation group of B\"acklund transformations is isomorphic to an extended affine Weyl group. Thus it is convenient to encode generators and relations of these groups through diagrams. Solid lines and nodes define affine Dynkin diagram. On the $i$-th node we write corresponding root variable $\alpha_i$. By $\{s_i\}$ we denote the reflection corresponding to $i$'th simple root. Solid line between $i$'th and $j$'th nodes corresponds to the relation $s_is_js_i = s_js_is_j$. Absense of solid line between $i$'th and $j$'th nodes corresponds to the relation $s_is_j = s_js_i$. The action on the parameters of the equation is given by 
\begin{equation}
s_{i}(\alpha_j) = \alpha_{j} - a_{ij}\alpha_i,
\end{equation}
where $\{a_{ij}\}$ is the Cartan matrix corresponding to given Dynkin diagram. Then we extend Coxeter group by finite group, acting by automorphisms of the diagram. Dashed arrows show action of automorphisms on parameters and define adjoint action on $\{s_i\}$
\begin{equation}
 \text{if }g(\alpha_i) = \alpha_j\text{, then }gs_ig^{-1} = s_j\text{.}
\end{equation}  
The last elements coming from automorphisms of the diagram will be very important in Sec. \ref{sec:block_reductions}.
\begin{proof}[Proof of proposition \ref{prop:matrix_P2_Backlund}] Let us check if transformations from Table \ref{table:matrix_P2_Backlund} are symmetries of matrix PII$\left(-\theta + \frac{1}{2}, \theta + \frac{1}{2}\right)$ using \eqref{symm_cond_2_form}.

For transformation $r$ it is easy to see that it maps Hamiltonian with parameter $\theta$ to the Hamiltonian with parameter $-\theta$. Also it preserves $\omega$ and $t$. Thus $r$ satisfies equation \eqref{symm_cond_2_form} with $h(x, t, \theta) = 1$.

For transformation $s_1$ let us consider matrix coordinates $\left(f, q\right)$. Then
\begin{align} 
\omega &= \mathrm{Tr}\left(\mathrm{d}f \wedge \mathrm{d} q\right) + \frac{1}{2}\mathrm{d}\mathrm{Tr}\left(q\right)\wedge \mathrm{d}t,
\\ \label{Ham_MPII:f}
H_{\theta}(f,q;t) 
&= \mathrm{Tr}\left(\frac{1}{2}f^2 - f\left(q^2 + \frac{t}{2}\right) - \theta q \right).
\end{align}
So we have
\begin{equation}
\omega - \rd H_{\theta} \wedge \rd t = \mathrm{Tr}\left(\rd f \wedge \rd q\right) - \rd H_{\alpha_1} \wedge \rd t,\;\;\text{where }\alpha_1 = \theta + \frac{1}{2}.
\end{equation}
It is easy to see that $s_1: f \mapsto f, \alpha_1 \mapsto -\alpha_1$. Hence we get
\begin{equation}
\begin{aligned}
s_1^*\left(\mathrm{Tr}\left(\mathrm{d}f\wedge \mathrm{d}q\right)\right) &= \mathrm{Tr}\left(\mathrm{d}f\wedge \mathrm{d}\left(q + \alpha_1 f^{-1}\right)\right) =  \mathrm{Tr}\left(  \mathrm{d}f\wedge \mathrm{d}q\right),\\
s_1^*\left(H_{-\alpha_1}\right) &= \mathrm{Tr}\left(\frac{1}{2}f^2 - f\left(\left(q + \alpha_1 f^{-1}\right)^2 + \frac{t}{2}\right) + \alpha_1\left(q + \alpha_1 f^{-1}\right) \right) = H_{\alpha_1},\\
s_1^*(t) &= t.
\end{aligned}
\end{equation}
Then we have
\begin{equation}
s_1^*(\mathrm{Tr}\left(\mathrm{d}f\wedge \mathrm{d}q\right) - \mathrm{d}H_{-\alpha_1} \wedge \mathrm{d}t) = \mathrm{Tr}\left(\mathrm{d}f\wedge \mathrm{d}q\right) - \mathrm{d}H_{\alpha_1} \wedge \mathrm{d}t.
\end{equation}
Here $s_1$ is a symmetry of the matrix PII$\left(-\theta + \frac{1}{2}, \theta + \frac{1}{2}\right)$.

Now it remains to check group relations between generators. In the case considered we have to check that $s_1$, $r$ are involutions, which can be done by straightforward computation. The group obtained is not smaller than  $C_2 \ltimes W\left(A_{1}^{(1)}\right)$ since the action on parameters is the same as in the case of scalar PII$\left(-\theta + \frac{1}{2}, \theta + \frac{1}{2}\right)$ and action on parameters determines element of $C_2 \ltimes W\left(A_{1}^{(1)}\right)$ uniquely.
\end{proof}
\subsection{Painlev\'e VI}
Definition, Hamiltonian and B\"acklund transformations of scalar PVI$\left(\theta\right)$ can be found in \cite{TOS05}. Let us start from the definition of matrix PVI. Here and below to define matrix Painlev\'e equation we will consider symplectic structure $\omega = \mathrm{Tr}\left(\rd p \wedge \rd q\right)$ on the space of pairs of matrices $\mathrm{Mat}_{n}\left(\mathbb{C}\right) \times \mathrm{Mat}_{n}\left(\mathbb{C}\right) \ni (p, q)$. Then the dynamics can be defined by the Hamiltonian, which for matrix PVI is
\begin{equation}
\begin{aligned}
&t(t-1)H(p,q;t) = \mathrm{Tr}(pq(q-1)p(q-t)-\\
& - (\alpha_4 (q-1)(q-t) + \alpha_3 q(q-t) + (\alpha_0-1)q(q-1))p +
\alpha_2(\alpha_1+\alpha_2)(q-t)).
\end{aligned}
\end{equation}

\begin{prop}\label{prop:matrix_P6_Backlund}
Transformations in the table below are B\"acklund transformations of matrix $\mathrm{PVI}$. These transformations generate group $S_4 \ltimes W\left(D_4^{(1)}\right)$.
\end{prop}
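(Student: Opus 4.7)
The plan is to follow exactly the same strategy as in the proof of Proposition~\ref{prop:matrix_P2_Backlund}: first verify each tabulated generator satisfies the symplectic criterion \eqref{symm_cond_2_form}, and then check the defining relations of $S_4\ltimes W(D_4^{(1)})$.

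For the generators coming from automorphisms of the affine $D_4^{(1)}$ Dynkin diagram (the $S_4$ factor), one expects simple formulas in $(p,q)$ such as $q\mapsto t/q$, $q\mapsto 1-q$, or $q\mapsto tq^{-1}$ together with the corresponding transformation of $p$ and a rescaling of the time $t$. For each of them I would directly check that the pullback of $\omega-\rd H\wedge\rd t$ equals a scalar multiple of the same two-form with shifted parameters, which is a short computation since these maps are essentially Möbius transformations on $q$ with the companion $p$-transformation dictated by preservation of $\mathrm{Tr}(\rd p\wedge\rd q)$ up to a total differential. For the node-permutation actions on the root variables $\alpha_0,\alpha_1,\alpha_3,\alpha_4$ one reads off the corresponding permutation from the diagram automorphism, and verifies compatibility with the Hamiltonian by substitution.

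The reflections $s_i$ are the subtle part. As in the PII case the trick is to pick, for each simple root, an adapted Darboux pair $(f_i,g_i)$ in which $s_i$ acts as a shift of one coordinate by a rational expression in the other, analogous to $s_1:(f,q)\mapsto(f,q+\alpha_1 f^{-1})$ in Table~\ref{table:matrix_P2_Backlund}. Concretely, after a symplectic change of variables such that the divisor of the relevant Hamiltonian vector field is cut out by a single matrix $f_i$ (playing the role of $p+q^2+t/2$ for PII), the Hamiltonian becomes linear in the conjugate coordinate. Then $s_i$ is simply a translation of that conjugate variable by $\alpha_i f_i^{-1}$, the symplectic form is preserved on the nose, and the Hamiltonian transforms into the one with $\alpha_i\mapsto -\alpha_i$ by a short direct calculation. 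The hard part in the matrix setting is noncommutativity: one must be careful to write the Hamiltonian and the transformation formulas in a symmetrised form so that factors like $qf^{-1}+f^{-1}q$ appear in place of the scalar $2qf^{-1}$, and to verify that with this ordering the identity \eqref{symm_cond_2_form} still holds. This is precisely the place where the matrix formulas genuinely differ from their scalar prototypes in \cite{TOS05}, and where the software \cite{NCAlgebra} is used.

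Once every generator has been checked, it remains to verify the group relations. The involutivity of each $s_i$ and each diagram automorphism, the braid/commutation relations encoded in the $D_4^{(1)}$ diagram (edges give $s_is_js_i=s_js_is_j$, non-edges give $s_is_j=s_js_i$), and the twisting relations $g s_i g^{-1}=s_{g(i)}$ for the $S_4$-part are all checked by direct substitution of the explicit formulas; none of them should fail because they already hold on the scalar quotient. Finally, to see that the group generated is not a proper quotient of $S_4\ltimes W(D_4^{(1)})$, I would invoke the same argument as at the end of the PII proof: the induced action on the root variables $(\alpha_0,\alpha_1,\alpha_2,\alpha_3,\alpha_4)$ coincides with the known faithful action in the scalar case, so the map from $S_4\ltimes W(D_4^{(1)})$ onto the group generated must be an isomorphism.
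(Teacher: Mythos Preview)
Your overall strategy matches the paper's, but you have misidentified where the genuinely noncommutative difficulty lies, and this is the main content of the paper's argument.

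You write that for the diagram automorphisms the check ``is a short computation since these maps are essentially M\"obius transformations,'' while ``the hard part in the matrix setting'' is the reflections $s_i$. In fact the paper shows the opposite. For the reflections $s_0,\dots,s_4$ the Hamiltonian is at most quartic in $(p,q)$, and by Remark~\ref{rem:comm_tr} every trace monomial that appears is determined by its bidegree; hence the form $\omega-\rd H\wedge\rd t$ transforms exactly as in the scalar case, with no reordering issues. No adapted Darboux coordinates or symmetrised formulas are needed here.

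The nontrivial point is the automorphisms $\sigma_{14}$ and $\sigma_{03}$. If you take the naive scalar formula, say $\tilde\sigma_{03}:q\mapsto q/t,\;p\mapsto tp,\;t\mapsto t^{-1}$, and compute the pullback of the cubic-in-$p$ term $\mathrm{Tr}(pq(q-1)p(q-t))/(t(t-1))$, the factors $(q-1)$ and $(q-t)$ get swapped and you pick up an extra $\frac{1}{t}\mathrm{Tr}(pqpq-p^2q^2)$ in the Hamiltonian. This does \emph{not} vanish in the matrix case, so $\tilde\sigma_{03}$ fails \eqref{symm_cond_2_form}. The fix, and the reason the table carries the conjugation by $t^{[p,q]}$, is Remark~\ref{rem:equiv_Ham}: two Hamiltonians differing by $f(t)\,\mathrm{Tr}(pqpq-p^2q^2)$ are intertwined by the non-autonomous change $(\tilde p,\tilde q)=\bigl(s^{[p,q]}p\,s^{-[p,q]},\,s^{[p,q]}q\,s^{-[p,q]}\bigr)$ with $s=\exp\int f(t)\,\rd t$. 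Composing $\tilde\sigma_{03}$ with this conjugation gives the corrected $\sigma_{03}$ in the table, and similarly for $\sigma_{14}$. Without this observation your proposed verification for the $S_4$ generators would simply fail.

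The remaining pieces of your sketch (reducing to a generating set, checking relations directly, and using the faithful parameter action to rule out a proper quotient) are fine and agree with the paper.
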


\begin{table}[H]
\begin{adjustwidth}{-1cm}{}

\begin{tabular}{M{12cm}M{3cm}}
\vspace{-2.5cm}
\begin{tabular}{|c|c|c|c|} 
\hline
 & $q$ & $p$ &  $t$ \\
\hline
$s_0$ & $q$ & $p - \alpha_0 (q-t)^{-1}$ &  \\ 
 \cline{1-3}
 $s_1$ & $q$ & $p$ & \\ 
 \cline{1-3}
 $s_2$ & $q + \alpha_2 p^{-1}$ & $p$  & $t$\\
 \cline{1-3}
 $s_3$ & $q$ & $p - \alpha_3 (q-1)^{-1}$ &\\ 
 \cline{1-3}
 $s_4$ & $q$ & $p - \alpha_4 q^{-1}$ & \\ 
 \hline
  \hline
 $\sigma_{34}$ & $1-q$ & $-p$  & $1-t$\\
  \hline
 $\sigma_{14}$ & $t^{[p,q]}q^{-1}t^{-[p,q]}$ & $ -t^{[p,q]}q(pq + \alpha_2)t^{-[p,q]}$ &  $t^{-1}$\\
  \hline
 $\sigma_{03}$ & $t^{[p,q]}(t^{-1}q)t^{-[p,q]}$ & $ t^{[p,q]}(t p)t^{-[p,q]}$  & $t^{-1}$\\
\hline
 \hline
 $\pi_1$ & $tq^{-1}$ & $ -t^{-1}q(pq + \alpha_2)$  & \parbox{0.25cm}{\vspace{0.5cm} $t$}\\
\cline{1-3}
 $\pi_2$ & $t(q - 1)(q - t)^{-1}$ & $-\left( t(t-1)\right)^{-1}(q-t)\left(p(q-t) + \alpha_2\right)$  & \\
  \hline
 
 \end{tabular}
 &
 \begin{tikzpicture}[elt/.style={circle,draw=black!100,thick, inner sep=0pt,minimum size=2mm},scale=1.75]
			\path 	(-1,-1) 	node 	(a0) [elt] {}
			(-1,1) 	node 	(a1) [elt] {}
			(0,0) node  	(a2) [elt] {}
			(1,-1) node  	(a3) [elt] {}
			(1,1) node  	(a4) [elt] {};
			
		     	\node at ($(a0.west) + (-0.2,0)$) 	{$\alpha_0$};
		     	\node at ($(a1.west) + (-0.2,0)$) 	{$\alpha_1$};
		     	\node at ($(a2.north) + (0,0.2)$) 	{$\alpha_2$};
		     	\node at ($(a3.east) + (0.2,0)$) 	{$\alpha_3$};
		     	\node at ($(a4.east) + (0.2,0)$) 	{$\alpha_4$};
		     	
		    \draw[<->, dashed] (a3) to [bend right=40] node[fill=white]{$\sigma_{34}$} (a4);
		     \draw[<->, dashed] (a1) to [bend left=40] node[fill=white]{$\sigma_{14}$} (a4);
		     \draw[<->, dashed] (a3) to [bend left=40] node[fill=white]{$\sigma_{03}$} (a0);
		     
		     \draw[<->, dashed] (a3) to [bend left=40] node[fill=white]{$\pi_2$} (a4);
		     \draw[<->, dashed] (a1) to [bend right=40] node[fill=white]{$\pi_1$} (a4);
		     \draw[<->, dashed] (a3) to [bend right=40] node[fill=white]{$\pi_1$} (a0);
		      \draw[<->, dashed] (a1) to [bend left=40] node[fill=white]{$\pi_2$} (a0);
		   
		    \draw [black,line width=1pt] (a0) -- (a2) -- (a4) (a1) -- (a2) -- (a3);
		   
			\end{tikzpicture} 
			\\
\end{tabular}
 
\caption{B\"acklund transformations for matrix PVI.}
\label{table:matrix_P6_Backlund}
\end{adjustwidth}
\end{table}

Automorphisms $\mathrm{Aut}\left(D_4^{(1)}\right)=S_4$ are generated by permutations $\sigma_{34}, \sigma_{14}, \sigma_{03}$.
In general this group changes the time variable $t$, its time preserving subgroup is $C_2^2$,
generated by $\pi_1=\sigma_{14}\sigma_{03}, \, \pi_2=(\sigma_{34}\sigma_{03}\sigma_{14})^2$.
This is just a group of automorphisms of affine Weyl group.

Proof of Proposition \ref{prop:matrix_P6_Backlund} can be done by tedious but straightforward calculation, similar to PII case.

Note that $s_1, s_2$ and $\sigma$'s generate the rest of transformations. Therefore it suffices to check B\"acklund symmetry condition only for them. The following remarks simplify the proofs.
\begin{remark}\label{rem:comm_tr}
Let $m_1, m_2$ be monomials in $p, q$ with coefficient $1$ such that 
\begin{equation}
\deg_{p}(m_1) = \deg_{p}(m_2),\;\;\deg_{q}(m_1) = \deg_{q}(m_2).
\end{equation}
If $\deg_{p}(m_1) + \deg_{q}(m_1) \leq 3$, then 
\begin{equation}
\mathrm{Tr}\left(m_1\right) = \mathrm{Tr}\left(m_2\right).
\end{equation}
If $\deg_{p}(m_1) + \deg_{q}(m_1) = 4$, then the same is true except for the case $\deg_{p}(m_1) = \deg_{q}(m_1) = 2$. In this case either $\mathrm{Tr}\left(m_1\right) = \mathrm{Tr}\left(p^2q^2\right)$ or $\mathrm{Tr}\left(m_1\right) = \mathrm{Tr}\left(pqpq\right)$.
\end{remark}

\begin{remark}\label{rem:equiv_Ham}
Let $H_1$ and $H_2$ be two Hamiltonians such that $H_1 - H_2 = f(t)\left(\mathrm{Tr}\left(pqpq\right) - \mathrm{Tr}\left(p^2q^2\right)\right)$. Then
\begin{equation}\label{Change_conjugation}
\mathrm{Tr}\left(\mathrm{d}p\wedge \mathrm{d}q\right) -\rd H_1 \wedge \rd t = \mathrm{Tr}\left(\mathrm{d}\tilde{p}\wedge \mathrm{d}\tilde{q}\right) -\rd H_2 \wedge \rd t,
\end{equation}
where 
\begin{equation}\label{Conjugation_transform}
	\tilde{p} = s^{[p, q]}ps^{-[p,q]}\;\;\tilde{q} = s^{[p,q]}qs^{-[p,q]},\;\; s = \exp \left(\int f(t)\rd t\right).
\end{equation}
In other words, the dynamics generated by $H_1$ can be mapped to the dynamics generated by $H_2$ by a certain non-autonomous change of variables.
\end{remark}

One can see with the help of Remark \ref{rem:comm_tr} that for the transformations $s_0, s_2$ form $\omega - \mathrm{d}H \wedge \mathrm{d}t$ changes in the same way as in commutative case.

There is an additional conjugation in $\sigma_{14}, \sigma_{03}$ which disappears in commutative case. Let us explain it's appearance. Consider $\tilde{\sigma}_{03}$ defined by the formula
\begin{equation}
\tilde{\sigma}_{03} : q \mapsto \frac{q}{t},\;\; p \mapsto tp,\;\;t\mapsto t^{-1}.
\end{equation}
In commutative case $\tilde{\sigma}_{03}$ is a B\"acklund transformation. In the matrix case almost all terms in Hamiltonian transform in the same way as in commutative case. The only difference appears in transformation of the first term in the Hamiltonian
\begin{equation}
\tilde{\sigma}_{03}^*\left(\rd t \wedge \rd \left(\frac{1}{t(t-1)}\mathrm{Tr}\left(pq(q-1)p(q-t)\right)\right)\right) = \rd t \wedge \rd \left(\frac{1}{t(t-1)}\mathrm{Tr}\left(pq(q-t)p(q-1)\right)\right).
\end{equation}
Hence we have
\begin{equation}
\begin{aligned}
\tilde{\sigma}_{03}^*\left(\omega - \rd H_{\sigma_{03}(\alpha)} \wedge \rd t\right) = \omega - \rd H_{\alpha} \wedge \rd t - \rd \left(\frac{\mathrm{Tr}\left(pq((q{-}t)p(q{-}1)-(q{-}1)p(q{-}t))\right)}{t(t-1)}\right) \wedge dt = \\ = \omega - \rd \left(H_{\alpha} - \frac{1}{t}\mathrm{Tr}\left(pqpq - p^2q^2\right)\right)\wedge \rd t.
\end{aligned}
\end{equation}
We see that $\tilde{\sigma}_{03}$ fails to satisfy equation \eqref{symm_cond_2_form}. But it follows from Remark \ref{rem:equiv_Ham} that for $\sigma_{03}$ se have
\begin{equation}
\sigma_{03}^*\left(\omega - \rd H_{\sigma_{03}(\alpha)} \wedge \rd t\right) = \omega - \rd H_{\alpha} \wedge \rd t.
\end{equation}
Transformations $\sigma_{14}, \sigma_{34}$ can be treated similarly.

To finish the proof it remains to check group relations encoded in Table \ref{table:matrix_P6_Backlund}. It can be done directly. 

\subsection{Answers for the other Painlev\'e equations}
In this part we encounter groups of B\"acklund transformations which generalize B\"acklund groups of scalar Painlev\'e equations for the matrix case following the next scheme:

\begin{enumerate}
\item Hamiltonian which defines dynamics.
\item Table and diagram which encode action of B\"acklund transformations of corresponding equation.
\item Answer for the B\"acklund group.
\end{enumerate}

Note that all transformations we enlist are symplectic. Also, they do preserve $[p, q]$.

\paragraph{Painlev\'e V.}
The system is defined by Hamiltonian
\begin{equation}
tH(p,q;t) = \mathrm{Tr}\left(p(p+t)q(q-1) - \left(\alpha_1 + \alpha_3 \right)pq + \alpha_1 p + \alpha_2 tq\right).
\end{equation}

\begin{table}[H]
\begin{tabular}{M{10cm}M{4cm}}
\vspace{-1.5cm}
\begin{tabular}{ |c|c|c|c|c| } 
\hline
 & $q$ & $p$  & $t$ \\
\hline
 $s_0$ & $q + \alpha_0(p + t)^{-1}$ & $p$ & $t$\\ 
 \hline
 $s_1$ & $q$ & $p - \alpha_1 q^{-1}$ & $t$\\ 
 \hline
 $s_2$ & $q + \alpha_2 p^{-1}$ & $p$ & $t$\\
 \hline
  $s_3$ & $q$ & $p - \alpha_3 (q - 1)^{-1}$ & $t$\\
 \hline
 $\pi$ & $-t^{-1}p$ & $ t(q - 1)$ & $t$\\
\hline
$\sigma$ & $1-q$ & $ -p-t$ & $-t$\\
\hline

\end{tabular}
&
\begin{tikzpicture}[elt/.style={circle,draw=black!100,thick, inner sep=0pt,minimum size=2mm},scale=1.25]
			\path 	(-1,-1) 	node 	(a0) [elt] {}
			(-1,1) 	node 	(a1) [elt] {}
			(1,-1) node  	(a3) [elt] {}
			(1,1) node  	(a2) [elt] {};
			
		     	\node at ($(a0.west) + (-0.2,0)$) 	{$\alpha_0$};
		     	\node at ($(a1.west) + (-0.2,0)$) 	{$\alpha_1$};
		     	\node at ($(a2.east) + (0.2,0)$) 	{$\alpha_2$};
		     	\node at ($(a3.east) + (0.2,0)$) 	{$\alpha_3$};

		    \draw[->, dashed] (a0) to [bend left=40] node[fill=white]{$\pi$} (a1);
		     \draw[->, dashed] (a1) to [bend left=40] node[fill=white]{$\pi$} (a2);
		    \draw[->, dashed] (a2) to [bend left=40] node[fill=white]{$\pi$} (a3);
		    \draw[->, dashed] (a3) to [bend left=40] node[fill=white]{$\pi$} (a0);
			\draw[<->, dashed] (a3) to [bend right=0] node[fill=white]{$\sigma$} (a1);	   
			\draw [black,line width=1pt] (a0) -- (a1);
			\draw [black,line width=1pt] (a1) -- (a2);
			\draw [black,line width=1pt] (a2) -- (a3);
			\draw [black,line width=1pt] (a3) -- (a0);

			\end{tikzpicture} 
			\\
\end{tabular}
\caption{B\"acklund transformations for matrix PV.}
\label{table:matrix_P5_Backlund}
\end{table}
These transformations generate group $\left( C_2\ltimes C_4 \right)\ltimes W(A_3^{(1)})$.

\paragraph{Painlev\'e III$\mathrm{\big(D_6^{(1)}\big)}$.}
The system is defined by Hamiltonian
\begin{equation}\label{MPIIID6}
tH(p,q;t) = \mathrm{Tr}\left(p^2q^2 - (q^2 - (\alpha_1 + \beta_1)q -t)p - \alpha_1 q\right).
\end{equation}

\begin{table}[H]
\begin{tabular}{M{10cm}M{4cm}}
\begin{tabular}{|c|c|c|c|c|c|c|} 
\hline
 & $q$ & $p$ & $t$ \\
\hline
 $s_1$ & $q + \alpha_1 p^{-1}$ & $p$ &$t$\\ 
 \hline
 $\pi \circ \pi'$ & $ -q $&$1-p + (\alpha_0{-} \beta_1) q^{-1} - t q^{-2}$ & $t$ \\
 \hline
 $s_1'$ & $q + \beta_1 (p - 1)^{-1}$ & $p$ & $t$\\ 
 \hline
  $\pi' $ & $ tq^{-1}$&$-t^{-1}q \left(pq + \alpha_1\right)$ & $t$ \\
  \hline
 $\sigma$ & $-q$ & $1-p$& $-t$\\
  \hline
\end{tabular}

&

\begin{tikzpicture}[elt/.style={circle,draw=black!100,thick, inner sep=0pt,minimum size=2mm},scale=1.5]
			\path 	(-1,0.5) 	node 	(a0) [elt] {}
			(1,0.5) 	node 	(a1) [elt] {}
			(-1,-0.5) node  	(b0) [elt] {}
			(1,-0.5) node  	(b1) [elt] {};
			
		     	\node at ($(b0.west) + (-0.2,0)$) 	{$\beta_0$};
		     	\node at ($(a0.west) + (-0.2,0)$) 	{$\alpha_0$};
		     	\node at ($(a1.east) + (0.2,0)$) 	{$\alpha_1$};
		     	\node at ($(b1.east) + (0.2,0)$) 	{$\beta_1$};
		     	
		    \draw[<->, dashed] (a0) to [bend right=40] node[fill=white]{$\sigma$} (b0);
		    \draw[<->, dashed] (a1) to [bend left=40] node[fill=white]{$\sigma$} (b1);
		    \draw[<->, dashed] (a0) to [bend left=40] node[fill=white]{$\pi$} (a1);		    
			\draw[<->, dashed] (b0) to [bend right=40] node[fill=white]{$\pi'$} (b1);	
		    \draw [black,line width=1pt, double distance=.07cm] (a0) -- (a1);
		    \draw [black,line width=1pt, double distance=.07cm] (b0) -- (b1);

			\end{tikzpicture} 
			\\
\end{tabular}
\caption{B\"acklund transformations for matrix PIII$\mathrm{\big(D_6^{(1)}\big)}$.}
\label{table:matrix_P3D6_Backlund}

\end{table}
These transformations generate group $\left(C_2 \ltimes \left(C_2 \times C_2\right)\right)\ltimes W\left( A_1^{(1)}\right)^2$. 

\paragraph{Painlev\'e III$\mathrm{\big(D_7^{(1)}\big)}$.}
The system is defined by Hamiltonian
\begin{equation}
tH(p,q;t) = \mathrm{Tr}\left(pqpq + \alpha_1 pq + tp + q\right).
\end{equation}

\begin{table}[H]
\begin{tabular}{M{10cm}M{4cm}}
\begin{tabular}{ |c|c|c|c|} 
\hline
 & $q$ & $p$ & $t$ \\
\hline
 $s_0$ & $q$ & $p - \alpha_0 q^{-1} + tq^{-2}$ & $-t$\\  
 \hline
 $\sigma$ & $tp$ & $-t^{-1}q$ & $-t$ \\
 \hline

\end{tabular}
&
\begin{tikzpicture}[elt/.style={circle,draw=black!100,thick, inner sep=0pt,minimum size=2mm},scale=1.75]
			\path 	(-1,0) 	node 	(a0) [elt] {}
			(1,0) 	node 	(a1) [elt] {};
		
		     	\node at ($(a0.west) + (-0.2,0)$) 	{$\alpha_0$};
		     	\node at ($(a1.east) + (0.2,0)$) 	{$\alpha_1$};

		    \draw[<->, dashed] (a0) to [bend left=40] node[fill=white]{$\sigma$}(a1);
		    \draw [black,line width=1pt, double distance=.07cm] (a0) -- (a1);
			\end{tikzpicture} 
			\\

\end{tabular}
\caption{B\"acklund transformations for matrix PIII$\mathrm{\big(D_7^{(1)}\big)}$.}
\label{table:matrix_P3D7_Backlund}
\end{table}

These transformations generate group $C_2\ltimes W(A_1^{(1)})$.

\paragraph{Painlev\'e III$\mathrm{\big(D_8^{(1)}\big)}$.}
The system is defined by Hamiltonian
\begin{equation}
tH(p,q;t) = \mathrm{Tr}\left(pqpq + pq - q - tq^{-1}\right).
\end{equation}

\begin{table}[H]
\begin{center}
\begin{tabular}{ |c|c|c|c|c| } 
\hline
 & $q$ & $p$ & $t$ \\
\hline
 $\pi$ & $tq^{-1}$ & $-t^{-1}q\left(pq + \frac{1}{2}\right)$ & $t$\\ 
 \hline
\end{tabular}
\caption{B\"acklund transformations for matrix PIII$\mathrm{\big(D_8^{(1)}\big)}$.}
\label{table:matrix_P3D8_Backlund}
\end{center}
\end{table}

This transformation generates group $C_2$.

\paragraph{Painlev\'e IV.}
The system is defined by Hamiltonian
\begin{equation}
H(p,q;t) = \mathrm{Tr}\left((p-q-2t)pq - 2\alpha_1 p - 2\alpha_2 q\right).
\end{equation}
\begin{table}[H]
\begin{tabular}{M{11cm}M{4cm}}
\vspace{-2.5cm}
\begin{tabular}{ |c|c|c|c|c|c| } 
\hline
 & $q$ & $p$& $t$ \\
\hline
 $s_0$ & $q + 2\alpha_0(p - q - 2t)^{-1}$ & $p + 2\alpha_0 (p - q - 2t)^{-1}$ & $t$\\ 
 \hline
 $s_1$ & $q$ & $p - 2\alpha_1 q^{-1}$ & $t$\\ 
 \hline
 $s_2$ & $q +2\alpha_2 p^{-1}$ & $p$ & $t$\\
 \hline
 $\pi$ & $-p$ & $-p + q + 2t$ & $t$\\ 
\hline
 $\sigma_1$ & $-\ri p$ & $-\ri q$ & $\ri t$\\
 \hline
\end{tabular}
&
\begin{tikzpicture}[elt/.style={circle,draw=black!100,thick, inner sep=0pt,minimum size=2mm},scale=1.25]
			\path 	(-1,0) 	node 	(a0) [elt] {}
			(1,0) 	node 	(a1) [elt] {}
			(0, 1.73205) 	node 	(a2) [elt] {};
		
		     	\node at ($(a0.west) + (-0.2,0)$) 	{$\alpha_0$};
		     	\node at ($(a1.east) + (0.2,0)$) 	{$\alpha_1$};
		     	\node at ($(a2.north) + (0,0.2)$) 	{$\alpha_2$};

		    \draw[<->, dashed] (a0) to [bend left=40] node[fill=white]{$\sigma_1$}(a1);
		    \draw[->, dashed] (a0) to [bend right=40] node[fill=white]{$\pi$}(a1);
		    \draw[->, dashed] (a1) to [bend right=40] node[fill=white]{$\pi$}(a2);
		    \draw[->, dashed] (a2) to [bend right=40] node[fill=white]{$\pi$}(a0);
		    \draw [black,line width=1pt] (a0) -- (a1);
		    \draw [black,line width=1pt] (a1) -- (a2);
		    \draw [black,line width=1pt] (a2) -- (a0);
			\end{tikzpicture} 
			\\

\end{tabular}

\caption{B\"acklund transformations for matrix PIV.}
\label{table:matrix_P4_Backlund}
\end{table}

These transformations generate group $\left(C_4 \ltimes C_3\right)\ltimes W(A_2^{(1)})$. Structure of the semidirect product in $\mathbb{Z}_4 \ltimes \mathbb{Z}_3$ is defined by the relations
\begin{equation}
\pi^3 = e\;\;, 
\sigma_1^4 = e,\quad \sigma_1\pi\sigma_1^{-1} = \pi^{-1}
\end{equation}
 
\begin{remark}\label{remark:P4_automorphisms}
There is central subgroup $\{\sigma_1^2, e\} \subset C_4 \ltimes C_3$ which acts trivially on the parameters. Quotient $C_4 \ltimes C_3 / \{\sigma_1^2, e\}$ is isomorphic to the group of automorphisms of the diagram. This is only the case when the finite group we extend affine Weyl group by is not isomorphic to the group of automorphisms of the diagram.
\end{remark}

\paragraph{Painlev\'e I.}
The system is defined by the Hamiltonian
\begin{equation}
H(p,q;t) = \mathrm{Tr}\left(\frac{p^2}{2} - \frac{q^3}{2} - \frac{tq}{4}\right).
\end{equation}

\begin{table}[H]
\begin{center}
\begin{tabular}{ |c|c|c|c|c|c| } 
\hline
 & $q$ & $p$& $t$ \\
\hline
 $\pi$ & $\mu^3 q$ & $\mu^3 p$ & $\mu t$\\ 
 \hline
\end{tabular}
\caption{B\"acklund transformations for matrix PI}
\end{center}
\end{table}
Here $\mu$ is a scalar such that $\mu^5 = 1,\;\;\mu\neq 1$. This transformation generates group $C_5$.
\section{Block reduction of Matrix Painlev\'e systems}\label{sec:block_reductions}

\subsection{General construction}\label{ssec:gen_constr}
In this section we provide a construction which connects two matrix Painlev\'e systems of different sizes. In this construction input is a matrix Painlev\'e system and a B\"acklund transformation $w$ of this system. Output is the matrix Painlev\'e system which we call by image. We specify input and output in the Table~\ref{table:automorphisms} below. Let us denote extended phase space by $M = \mathrm{Mat}_{N}\left(\mathbb{C}\right)\times \mathrm{Mat}_{N}\left(\mathbb{C}\right) \times \mathbb{C} \times \mathbb{C}^k = \{(p, q, t, \alpha)\}$, where $k$ is the number of parameters of the equation. For the phase space we will use notations $M_{\alpha} = \mathrm{Mat}_{N}\left(\mathbb{C}\right)\times \mathrm{Mat}_{N}\left(\mathbb{C}\right) \times \mathbb{C} = \{(p, q, t)\},\;\; M_{\alpha, t} = \mathrm{Mat}_{N}\left(\mathbb{C}\right)\times \mathrm{Mat}_{N}\left(\mathbb{C}\right) = \{(p,q)\}$. By $d$ we mean the order of the B\"acklund transformation $w$.
\begin{table}[h]
\begin{center}
\begin{tabular}{|c|c|c|c|c|c|c|c|}
\hline
\textnumero & Equation & Image & $w$ & $d$ & $ q $ & $p$ & Section \\
\hline
1 & PII & PII &$r$& $2$ & $-q$ & $-p$ & \ref{sssec:II_II} \\
\hline
2 & PIII$\mathrm{\big(D_{6}^{(1)}\big)}$ & PIII$\mathrm{\big(D_{8}^{(1)}\big)}$ & $\pi\circ \pi'$ &$2$ &$-q$ & $1 - p - tq^{-2}$ & \ref{sssec:IIID6_IIID8} \\
\hline
3 & PV & PIII$\mathrm{\big(D_{6}^{(1)}\big)}$ &$\pi^2$  & $2$ &$1 - q$ & $-t-p$ & \ref{sssec:V_IIID6} \\
\hline
4 & PIV  & PIV  & $\pi$ & $3$ & $-p$ & $-p + q + 2t$ & \ref{sssec:IV_IV}\\
\hline
5 & PV  & PV  & $\pi$ & $4$ & $-t^{-1}p$ & $t(q-1)$ & \ref{sssec:V_V} \\
\hline
6 & PIII$\mathrm{\big(D_{8}^{(1)}\big)}$ & PIII$\mathrm{\big(D_{6}^{(1)}\big)}$ & $\pi$ & $2$ & $t q^{-1}$ &$-t^{-1}q\left(pq + \frac{1}{2}\right)$ & \ref{sssec:IIID8_IIID6} \\
\hline
7 & PIII$\mathrm{\big(D_{6}^{(1)}\big)}$ & PV& $\pi'$ & $2$ &$t q^{-1}$ &$-t^{-1}q\left(pq + \alpha_1\right)$ & \ref{sssec:IIID6_V} \\
\hline
8 & PVI & PVI & $\pi_1$ & $2$ & $t q^{-1}$ &$-t^{-1}q\left(pq + \alpha_2\right)$ & \ref{sssec:VI_VI} \\
\hline
\end{tabular}
\caption{Input and output.}
\label{table:automorphisms}
\end{center}
\end{table}

\begin{remark}\label{remark_folding_transformations}
Transformations from Table \ref{table:automorphisms} are specified by following conditions
\begin{itemize}
\item There exists $\alpha\in \mathbb{C}^k$ such that $M_{\alpha, t}$ is invariant under the action of $w$.
\item $w$ acts on $M_{\alpha, t}$ nontrivially.
\end{itemize}
Note that all these transformations come from automorphisms of corresponding Dynkin diagrams and preserve time. They are known to be related to folding transformations and are classified in \cite{TOS05}.
\end{remark}

For each case from Table \ref{table:automorphisms} we consider matrix Painlev\'e of size $nd\times nd$. There is the adjoint action of $\mathrm{GL}_{nd}\left(\mathbb{C}\right)$ on $M_{\alpha, t}$, namely $S:(p,q)\mapsto (SpS^{-1}, SqS^{-1})$. Consider twisted B\"acklund transformation $\bar{w} = \mathrm{Ad}_{S_{d}} \circ w$, where $S_d = \mathrm{diag}\left(\mathbf{1}_{n\times n}, e^{\frac{2\pi \ri }{d}}\mathbf{1}_{n\times n}, ..., e^{\frac{2\pi (d-1) }{d}}\mathbf{1}_{n\times n}\right)$. Transformation $\bar{w}$ is a symmetry of the equation of order $d$. Let $M$ be the phase space of the equation considered. Then for $M^{\bar{w}} = \{x \in M | \bar{w}(x) = x\}$ standard arguments imply
\begin{itemize}
\item $M_{\alpha, t}^{\bar{w}}$ is a symplectic submanifold of $M_{\alpha, t}$.
\item $M_{\alpha}^{\bar{w}}$ is preserved by the dynamics of the equation.
\end{itemize}

\begin{lemma}\label{lm:gen_constr_Ham_dyn}
Let $w$ be a transformation from Table \ref{table:automorphisms}. Let $\alpha\in \mathbb{C}^k$ be preserved by $w$. Let $M$ be the extended phase space of the corresponding matrix $ n d\times nd$ Painlev\'e system and $H$ be the corresponding Hamiltonian. Then restriction of the dynamics, generated by $H$ to $M^{\bar{w}}_{\alpha}$ is Hamiltonian.
\end{lemma}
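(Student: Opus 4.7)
The plan is to exhibit $H_\alpha|_{M^{\bar{w}}_{\alpha,t}}$ as the Hamiltonian of the restricted dynamics. The two ingredients I need are: \textbf{(i)} the map $\bar{w}$ is a symplectomorphism of $M_{\alpha,t}$ preserving $H_\alpha$ up to an additive function of $t$ (which is irrelevant for the Hamiltonian flow); and \textbf{(ii)} the standard principle of symplectic geometry that if a Hamiltonian $H$ is invariant under a finite-order symplectic automorphism $\sigma$, then the Hamiltonian vector field $X_H$ is tangent to $M^\sigma$ and its restriction agrees with the Hamiltonian vector field of $H|_{M^\sigma}$ computed with $\omega|_{M^\sigma}$ (which is non-degenerate, as recalled before the lemma).

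The verification of (i) proceeds in two steps. For $w$ itself, the general B\"acklund condition \eqref{symm_cond_2_form} at $\tilde{w}(\alpha)=\alpha$ and $h\equiv 1$ (which holds for all diagram automorphisms in Table~\ref{table:automorphisms}, as can be checked by the type of computation performed in Section~\ref{sec:Backlund}), together with $w^*t=t$, yields $w^*\omega=\omega$ and $d_{M_{\alpha,t}}(w^*H_\alpha - H_\alpha)=0$, i.e.\ $w^*H_\alpha = H_\alpha + f(t)$ for some function $f(t)$. For $\mathrm{Ad}_{S_d}$, conjugation by the constant invertible matrix $S_d$ preserves $\omega=\mathrm{Tr}(dp\wedge dq)$ and preserves every trace of a noncommutative polynomial in $p,q$ with scalar coefficients, by cyclicity of the trace; hence it preserves $H_\alpha$ in all cases of Section~\ref{sec:Backlund}. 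Composing, $\bar{w}$ has the required properties. Applying (ii) and pulling back the identity $\iota_{X_{H_\alpha}}\omega = dH_\alpha$ along the inclusion $M^{\bar{w}}_{\alpha,t}\hookrightarrow M_{\alpha,t}$ gives
\begin{equation*}
\iota_{X_{H_\alpha}|_{M^{\bar{w}}_{\alpha,t}}}\bigl(\omega|_{M^{\bar{w}}_{\alpha,t}}\bigr) = d\bigl(H_\alpha|_{M^{\bar{w}}_{\alpha,t}}\bigr),
\end{equation*}
identifying $H_\alpha|_{M^{\bar{w}}_{\alpha,t}}$ as the Hamiltonian of the restricted dynamics.

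The delicate point is verifying $w^*\omega = \omega$ and $w^*H_\alpha = H_\alpha + f(t)$ for those entries of Table~\ref{table:automorphisms} whose raw expression involves the noncommutative conjugation factors $s^{[p,q]}$ of Remark~\ref{rem:equiv_Ham} — notably $\pi_1$ for PVI, built from $\sigma_{14}\sigma_{03}$, and analogously for PV and PIII$\mathrm{\big(D_6^{(1)}\big)}$. For these one must check that the conjugation factors cancel in the composition, so that $w$ really acts as the polynomial-in-$p,q$ symplectomorphism of $M_{\alpha,t}$ displayed in Table~\ref{table:automorphisms}. This is a short noncommutative computation per entry of the same flavor as in Section~\ref{sec:Backlund}; once performed, the argument above applies uniformly with no further case analysis.
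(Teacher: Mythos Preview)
Your fiberwise argument is correct but insufficient for a \emph{non-autonomous} system, and this is exactly where the paper's proof diverges from yours. You establish that for each fixed $t$, the map $\bar w$ is a symplectomorphism of $M_{\alpha,t}$ preserving $H_\alpha(\,\cdot\,,t)$ up to a constant, hence $X_{H_\alpha}$ is tangent to $M^{\bar w}_{\alpha,t}$ and restricts to the Hamiltonian vector field of $\iota^*H_\alpha$ there. But the dynamics on the extended phase space is the kernel of $\omega - \rd H\wedge \rd t$, and what you must control is $\iota^*(\omega - \rd H\wedge \rd t)$ as a $2$-form on $M^{\bar w}_\alpha$, \emph{including the $t$-direction}. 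In cases~6,~7,~8 of Table~\ref{table:automorphisms} the transformation $w$ depends explicitly on $t$ (e.g.\ $q\mapsto tq^{-1}$), so the fixed locus $M^{\bar w}_{\alpha,t}$ moves with $t$ and $\iota^*\omega$ acquires genuine $\rd t$-terms. The paper computes this directly via the Liouville form: equation~\eqref{canonical_1form_restriction_non-linear} gives
\[
\iota^*\Theta = \mathrm{Tr}\bigl(\mathfrak{p}_{11}\,\rd\mathfrak{q}_{11} + \mathfrak{p}_{12}\,\rd\mathfrak{q}_{21}\bigr) - \mathrm{Tr}\bigl(\tilde{\mathfrak{p}}_{21}\tilde{\mathfrak{q}}_{21}^{-1}\bigr)\,\rd t,
\]
so the restricted Hamiltonian is $\iota^*H + \mathrm{Tr}(\tilde{\mathfrak{p}}_{21}\tilde{\mathfrak{q}}_{21}^{-1})$, \emph{not} $\iota^*H$ as you claim. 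Your final displayed identity therefore does not describe the restricted flow in those cases.

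To repair your argument into a genuine general proof you would need one more step: since $\omega=\rd\Theta$ is exact, write $\iota^*\Theta=\theta'-F\,\rd t$ with $\theta'$ purely fiberwise, and then observe that $\iota^*(\omega-\rd H\wedge\rd t)=\rd\theta'-\rd(\iota^*H+F)\wedge\rd t$; together with the fiberwise nondegeneracy you already have, this shows the restricted dynamics is Hamiltonian with Hamiltonian $\iota^*H+F$. The paper instead obtains $F$ explicitly case by case (it vanishes in the linear cases of Section~\ref{ssec:linear_cases} and equals $\mathrm{Tr}(\tilde{\mathfrak{p}}_{21}\tilde{\mathfrak{q}}_{21}^{-1})$ in the non-linear cases of Section~\ref{ssec:nonlinear_cases}), because those explicit Hamiltonians are needed anyway for Theorem~\ref{thm:gen_constr}.
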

Let us denote the corresponding Hamiltonian by $\tilde{H}$. Note that adjoint action of $\mathrm{GL}_{nd}\left(\mathbb{C}\right)$ restricted to $\mathrm{GL}_{n}^d\left(\mathbb{C}\right) = \{\mathrm{diag} \left(h_1, h_2, ..., h_d\right) | h_{i}\in \mathrm{GL}_{n}\left(\mathbb{C}\right)\}$ do commute with $\bar{w}$. Thus this action preserves $M^{\bar{w}}_{\alpha}$. Consider Hamiltonian reduction on $M^{\bar{w}}_{\alpha}$ by subgroup $\mathrm{GL}^{d-1}_{n}\left(\mathbb{C}\right) = \{\mathrm{diag}\left(1, h_2, ..., h_d\right) | h_{i}\in \mathrm{GL}_{n}\left(\mathbb{C}\right)\} \subset \mathrm{GL}^{d}_{n}\left(\mathbb{C}\right)$ with the moment map value fixed as $\mathbf{g} = (\ri g_2\mathbf{1}_{n\times n}, ..., \ri g_{d}\mathbf{1}_{n\times n})$. Let us denote the reduced space by $\mathbb{M}_{\alpha} := M^{\bar{w}}_{\alpha}{//}_{\mathbf{g}}\mathrm{GL}_{n}^{d-1}\left(\mathbb{C}\right)$.

\begin{theorem}\label{thm:gen_constr}
 Under the conditions from Lemma \ref{lm:gen_constr_Ham_dyn} the dynamics on the manifold $\mathbb{M}_{\alpha}$ corresponding to $\tilde{H}$ is equivalent to the dynamics of matrix $n\times n$ Painlev\'e system written as Image in Table \ref{table:automorphisms}.
\end{theorem}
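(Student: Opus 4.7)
The plan is to prove Theorem~\ref{thm:gen_constr} by a case-by-case analysis of the eight rows of Table~\ref{table:automorphisms}, with the general symplectic-geometric framework set up uniformly. First I would invoke Lemma~\ref{lm:gen_constr_Ham_dyn} to get that $\tilde H := H|_{M^{\bar w}_\alpha}$ is a Hamiltonian for the restricted flow on the symplectic submanifold $M^{\bar w}_{\alpha,t}$. Next I would note that the residual $\operatorname{GL}_n^{d-1}$-action commutes with $\bar w$, since it is block-diagonal and $S_d$ is scalar on each $n\times n$ block, so standard Marsden--Weinstein reduction produces a symplectic quotient $\mathbb M_\alpha$ on which $\tilde H$ descends. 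What remains is to identify $\mathbb M_\alpha$ and the descended flow with the Image matrix Painlev\'e system row by row.

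For each row the first concrete step is to solve $\bar w(p,q)=(p,q)$ explicitly. The conjugation $\operatorname{Ad}_{S_d}$ multiplies the $(i,j)$ block of any $nd\times nd$ matrix by $\zeta^{i-j}$ with $\zeta=e^{2\pi\ri/d}$; combined with the explicit $w$-action from Section~\ref{sec:Backlund} this forces $p,q$ to be supported on one or two cyclic block-diagonals, as illustrated in Example~\ref{ex:intro_matrix} for row~1. For the linear $w$'s (rows~1, 3, 4) the invariance equations are purely linear; for the rational $w$'s (rows~2, 5--8) the same block sparsity appears together with additional algebraic relations among the surviving blocks. Restricting $\omega$ and $H$ to the resulting parametrization yields $\tilde H$ as an explicit noncommutative (Laurent) polynomial in a small number of $n\times n$ blocks. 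The moment map $\mu$ of $\operatorname{GL}_n^{d-1}$ has commutator-type components coming from the $\operatorname{Tr}(p[\xi,q])$ pairing restricted to each diagonal block of $[p,q]$ (for row~1 this recovers $m_2=\mathfrak p_{21}\mathfrak q_{12}-\mathfrak q_{21}\mathfrak p_{12}$); setting $\mu=\mathbf g$ imposes $d-1$ matrix identities that eliminate $d-1$ blocks algebraically. I would then define the reduced coordinates $(P,Q)$ by the formulas in the Image column of Table~\ref{table:automorphisms}, viewed as $\mathbf g$-deformations of the scalar folding formulas of \cite{TOS05}, and verify that they are $\operatorname{GL}_n^{d-1}$-invariant and descend to Darboux coordinates on $\mathbb M_\alpha$.

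The final step, which I expect to be the main obstacle, is to compute $\dot P$ and $\dot Q$ from $\tilde H$, use the fixed value $\mathbf g$ of the moment map to eliminate the auxiliary blocks, and verify that the result agrees with the Hamilton equations of the $n\times n$ matrix Painlev\'e system in the Image column, with parameter shifts (e.g.\ $\theta\mapsto -\ri g-1/2$ for row~1) arising from the $\mathbf g$-dependent terms. The difficulty is computational but genuinely nontrivial: rows~1--3 are degree-two foldings with a clean two-block decomposition, rows~4 and 5 (with $d=3,4$) introduce more blocks and more moment-map components, and the rows built from rational $w$'s (PIII$(D_6^{(1)})$, PV, PVI as Image) produce noncommutative Laurent identities that must be massaged into canonical matrix Painlev\'e form. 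Each verification is a finite computation in a noncommutative polynomial ring, and in practice I would carry them out with the \cite{NCAlgebra} package acknowledged by the authors.
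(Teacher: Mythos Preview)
Your proposal is essentially the paper's own proof: the four stages you describe (solve the $\bar w$-invariance equations for block sparsity, compute the blockwise moment map, find invariant Darboux coordinates on the quotient, identify the reduced system) are exactly Steps~1--4 carried out case by case in Sections~\ref{ssec:linear_cases}--\ref{ssec:nonlinear_cases}. Two small corrections: the Image column of Table~\ref{table:automorphisms} contains only the \emph{names} of the target equations, not $(P,Q)$ formulas --- those are discovered ad hoc in each subsection, starting from the $\mathrm{GL}_n^{d-1}$-invariants $(\tilde P,\tilde Q)$ --- and the paper verifies by matching \emph{Hamiltonians} rather than equations of motion, which in several cases (rows~2, 6--8 and the $d=3,4$ rows) forces a final time-dependent conjugation of the form $\operatorname{Ad}_{s^{[P,Q]}}$ (Remark~\ref{rem:equiv_Ham}) to absorb residual $\operatorname{Tr}(pqpq-p^2q^2)$ terms that your equations-of-motion check would encounter as well.
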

Proof of Theorem \ref{thm:gen_constr} and Lemma \ref{lm:gen_constr_Ham_dyn} is given below case by case. We perform the following steps for all cases.
\begin{enumerate}
\item[\textbf{Step 1.}]  Solve equations on fixed point and obtain Darboux coordinates on $M^{\bar{w}}_{\alpha}$.
\item[\textbf{Step 2.}] Compute the action of $\mathrm{GL}^{d}_{n}\left(\mathbb{C}\right)$ on $M^{\bar{w}}$ and moment map of this action  in these coordinates.
\item[\textbf{Step 3.}] Obtain Darboux coordinates on $\mathbb{M}_{\alpha,t}$.
\item[\textbf{Step 4.}] Calculate the Hamiltonian for the dynamics on $\mathbb{M}_{\alpha}$ and find coordinates on $\mathbb{M}_{\alpha}$, in which the Hamiltonian is Painlev\'e's standard one.
\end{enumerate}

Step $1$, Step $2$, Step $3$ can be performed simultaneously for cases $1, 2, 3$ and for cases $6, 7, 8$ in Table \ref{table:automorphisms}. We call cases $1$ -- $5$ linear, since for them to obtain parametrization of $M^{\bar{w}}$ one has to solve only linear equations. Remaining cases $6, 7, 8$ are called non-linear.
\begin{remark}
We use the following notation.
\begin{itemize}
\item Standard small letters (for example, $(p,q)$) to denote canonical coordinates on $M_{\alpha, t}$.
\item Gothic letters (for example, $(\mathfrak{p}, \mathfrak{q})$) to denote coordinates on $M_{\alpha, t}^{\bar{w}}$.
\item Standard capital letters (for example, \((P,Q)\)) to denote coordinates on $\mathbb{M}_{\alpha, t}$.
\end{itemize}
\end{remark}

\subsection{Linear cases}\label{ssec:linear_cases}
\paragraph{Step 1.} Equations that determine the symplectic submanifold $M_{\alpha, t}^{\bar{w}}$ for cases $1, 2, 3$
\begin{equation}
    S_2 q S_2^{-1}= \xi - q,\;\; S_2 p S_2^{-1}= -p + \eta(t) - \zeta t q^{-2}.
\end{equation}
Here $\zeta, \xi \in \mathbb{C}$, $\eta\in \mathbb{C}[t]$ are specified for each case. For all cases which we consider $\zeta \xi = 0$ and for them we get
\begin{equation}\label{linear_case_solutions}
    q= \begin{pmatrix} \frac{\xi}{2}\mathbf{1}_{n\times n} & \mathfrak{q}_{12} \\ \mathfrak{q}_{21} & \frac{\xi}{2}\mathbf{1}_{n\times n} \end{pmatrix},\;\;
    p= \begin{pmatrix} \frac{\eta(t)}{2}\mathbf{1}_{n\times n} - t\frac{\zeta}{2}\mathfrak{q}_{21}^{-1}\mathfrak{q}_{12}^{-1} & \mathfrak{p}_{12} \\ \mathfrak{p}_{21} & \frac{\eta(t)}{2}\mathbf{1}_{n\times n} - t\frac{\zeta}{2}\mathfrak{q}_{12}^{-1}\mathfrak{q}_{21}^{-1} \end{pmatrix}.
\end{equation}
Hence we get a symplectic submanifold of dimension \(4n^2\) with symplectic form \(\operatorname{Tr}(\rd \mathfrak{p}_{12}\wedge \rd \mathfrak{q}_{21}) + \operatorname{Tr}(\rd \mathfrak{p}_{21}\wedge \rd \mathfrak{q}_{12})\).

Recall that $H$ is the Hamiltonian which defines matrix Painlev\'e's dynamics on $M_{\alpha}$. Then the equations of motion can be identified with the one dimensional distribution on $M_{\alpha}$ defined as
\begin{equation}
\mathrm{Ker}\left(\omega - \mathrm{d}H\wedge \mathrm{d}t\right).
\end{equation} 
Let us denote the embedding of the set of $\bar{w}$--invariant points as $\iota: M_{\alpha}^{\bar{w}} \rightarrow M_{\alpha}$. Since $\iota^*(\omega) = \operatorname{Tr}(\rd \mathfrak{p}_{12}\wedge \rd \mathfrak{q}_{21}) + \operatorname{Tr}(\rd \mathfrak{p}_{21}\wedge \rd \mathfrak{q}_{12})$ and the dynamics on $M_{\alpha}^{\bar{w}}$ is defined as $\mathrm{Ker}\left(\iota^*\left(\omega - \mathrm{d}H\wedge \mathrm{d}t\right)\right)$ it follows that the dynamics on $M_{\alpha}^{\bar{w}}$ is also Hamiltonian and defined by Hamiltonian $\iota^*\left(H\right)$ and $\left(\mathfrak{p}_{12},\mathfrak{p}_{21}, \mathfrak{q}_{21}, \mathfrak{q}_{12} \right)$ are Darboux coordinates on $M_{\alpha}^{\bar{w}}$.
\paragraph{Step 2.} The remaining gauge freedom consists of block diagonal matrices $h=\mathrm{diag}(h_1, h_2)$ and the moment map is also block diagonal 
\begin{equation}
   [p,q]=m=\begin{pmatrix} m_1 & 0 \\ 0 & m_2\end{pmatrix} = \begin{pmatrix} \mathfrak{p}_{12}\mathfrak{q}_{21} - \mathfrak{q}_{12}\mathfrak{p}_{21} & 0 \\ 0 & \mathfrak{p}_{21}\mathfrak{q}_{12} - \mathfrak{q}_{21}\mathfrak{p}_{12} \end{pmatrix}. 
\end{equation}
\begin{remark}\label{rem:moment_bl_diag}
Block structure of $[p,q]$ appears not accidentally. Consider $(p, q)\in M_{\alpha, t}^{\bar{w}}$ with $\bar{w} = \mathrm{Ad}(S_2) \circ w$. As we mentioned above $[p,q]$ is preserved by B\"acklund transformation $w$. Then we have
\begin{equation*}
\bar{w}^{*}([p,q]) = [p, q] = w^{*}([p,q]),
\end{equation*} 
which implies
\begin{equation*}
\mathrm{Ad}(S_2)^*([p, q]) = [p, q] \Rightarrow [S_2, [p, q]] = 0.
\end{equation*}
From the last equation it follows that $[p,q]$ is block diagonal.
\end{remark}
\paragraph{Step 3.} Let us perform Hamiltonian reduction with respect to $\mathrm{GL}_{n}\left(\mathbb{C}\right) = \{\mathrm{diag}(\mathbf{1}_{n\times n}, h_2) | h_2 \in \mathrm{GL}_{n}\left(\mathbb{C}\right)\}$. We fix the moment map value as follows $m_2 = \mathfrak{p}_{21}\mathfrak{q}_{12} - \mathfrak{q}_{21}\mathfrak{p}_{12} = \ri g_2\mathbf{1}_{n\times n}$. Hence we get \(\mathfrak{p}_{21}=(\mathfrak{q}_{21}\mathfrak{p}_{12}+\ri g_2)\mathfrak{q}_{12}^{-1}\). The functions $\tilde{P} = \mathfrak{p}_{12} \mathfrak{q}_{12}^{-1}$ and $\tilde{Q} = \mathfrak{q}_{12}\mathfrak{q}_{21}$ are invariant under the action of $\mathrm{diag}\left(1, h_2\right)$ and thus define functions on $\mathbb{M}_{\alpha}$. Moreover $\big(\tilde{P}, \tilde{Q}\big)$ are Darboux coordinates on $\mathbb{M}_{\alpha, t}$.

\subsubsection{\texorpdfstring{PII}{2} to \texorpdfstring{PII}{2}}
\label{sssec:II_II}
\paragraph{Steps 1,2,3.} We consider matrix $2n \times 2n$ PII with $\alpha_1 = \frac{1}{2}$. In this case $\xi = 0,\; \eta(t) = 0,\; \zeta = 0$.
\paragraph{Step 4.} Substituting Darboux coordinates $\Tilde{P}, \Tilde{Q}$ into the Hamiltonian of matrix PII we get
\begin{equation}
    H(\Tilde{P}, \Tilde{Q}; t) = \mathrm{Tr}\left(\Tilde{P}\Tilde{Q}\Tilde{P} +\ri g_2\Tilde{P} - \left(\Tilde{Q} + \frac{t}{2}\right)^2\right) = -\mathrm{Tr}\left(\tilde{Q}(\tilde{Q}- \tilde{P}^2 + t) -\ri g_2\tilde{P}\right) - \frac{nt^2}{4}.
\end{equation}
 After the change of coordinates
\begin{equation}\label{P2_proper_coordinates}
    \breve{Q} = -\frac{1}{\sqrt[3]{2}}\tilde{P},\quad \breve{P} = \sqrt[3]{2}\tilde{Q},\qquad s = -\sqrt[3]{2}t,
\end{equation}
we obtain a system with matrix Darboux coordinates  $(\breve{P}, \breve{Q})$ and Hamiltonian
\begin{equation}
    H(\breve{P},\breve{Q}; s) = \mathrm{Tr}\left(\frac{1}{2}\breve{P}(\breve{P}-2\breve{Q}^2 - s) - (-\ri g_2)\breve{Q}\right).
\end{equation}
Then after the last change of coordinates
\begin{equation}\label{P2_proper_std_coordinates}
P = \breve{P} - \breve{Q}^2 -\frac{s}{2},\quad Q = \breve{Q}
\end{equation}
we obtain system with matrix Darboux coordinates  $(P, Q)$ and Hamiltonian
\begin{equation}
    H(P, Q;s) = \mathrm{Tr}\left(\frac{P^2}{2} - \frac{1}{2}\left(Q^2 + \frac{s}{2}\right)^2 - \left(-\ri g_2-\frac{1}{2} \right) Q \right),
\end{equation}
which is the Hamiltonian of matrix $n\times n$ PII$\big(1 + \ri g_2, - \ri g_2\big)$.

\subsubsection{\texorpdfstring{PIII$\mathrm{\big(D_6^{(1)}\big)}$}{3D6} to \texorpdfstring{PIII$\mathrm{\big(D_8^{(1)}\big)}$}{3D8}}\label{sssec:IIID6_IIID8}
\paragraph{Steps 1,2,3.} We consider matrix $2n\times 2n$ PIII$\mathrm{\big(D_{6}^{(1)}\big)}$ with $\alpha_0 = \alpha_1 = \frac{1}{2},\; \beta_0 = \beta_1 = \frac{1}{2}$.
In this case we have $\xi = 0,\; \eta(t) = 1,\; \zeta = 1$.
\paragraph{Step 4.} Substituting Darboux coordinates $\tilde{P}, \Tilde{Q}$ into the Hamiltonian of matrix PIII$\mathrm{\big(D_6^{(1)}\big)}$ we get
\begin{equation}
	tH(\tilde{P}, \tilde{Q}; t)=\mathrm{Tr}\left(\tilde{P}^2\tilde{Q}^2 + \tilde{P}\tilde{Q}\tilde{P}\tilde{Q} + 2(1 +\ri g_2)\tilde{P}\tilde{Q} -\frac{1}{2}\tilde{Q} - \frac{1}{2}t^2\tilde{Q}^{-1}\right).
\end{equation}
After the change of variables
\begin{equation}\label{Proper_coordinates_case_3D6_to_3D8}
	P = 4t^{[\tilde{P}, \tilde{Q}]}\left(\tilde{P} + \left(\frac{\ri g_2}{2}\right)\tilde{Q}^{-1}\right)t^{-[\tilde{P}, \tilde{Q}]},\quad Q = \frac{1}{4}t^{[\tilde{P}, \tilde{Q}]}\tilde{Q}t^{-[\tilde{P}, \tilde{Q}]},\qquad s = \frac{t^2}{16},
\end{equation}
we obtain a system with matrix Darboux coordinates  $(P, Q)$ and Hamiltonian
\begin{equation}
	sH(P,Q;s) = \mathrm{Tr}\left(PQPQ + PQ - Q - sQ^{-1}\right),
\end{equation}
which is the Hamiltonian of matrix $n\times n$ PIII$\mathrm{\big(D_8^{(1)}\big)}$.

\subsubsection{\texorpdfstring{PV}{5} to \texorpdfstring{PIII$\mathrm{\big(D_6^{(1)}\big)}$}{3D6}}
\label{sssec:V_IIID6}
\paragraph{Steps 1,2,3.} We consider matrix $2n \times 2n$ PV with $\alpha_0 = \alpha_2 = \epsilon + \frac{1}{2},\; \alpha_1 = \alpha_3 = -\epsilon$.
In this case $\xi = 1,\; \eta(t) = -t,\; \zeta = 0$.
\paragraph{Step 4.} Substituting Darboux coordinates $\Tilde{P}, \tilde{Q}$ into the Hamiltonian of matrix PV we get
\begin{equation}
    tH(\tilde{P}, \tilde{Q}; t) = \mathrm{Tr}\left(\tilde{P}^2\tilde{Q}^2 + \tilde{P}\tilde{Q}\tilde{P}\tilde{Q} - \frac{\tilde{P}^2\tilde{Q}}{4} + (4\epsilon + 2\ri g_2)\tilde{P}\tilde{Q} - \frac{\ri g_2}{2}\tilde{P} + \frac{t^2\tilde{Q}}{2}\right).
\end{equation}
 After the change of variables
\begin{equation}
    P = 4t^{-[\tilde{P}, \tilde{Q}]}\tilde{Q}t^{[\tilde{P}, \tilde{Q}]},\quad Q = -\frac{1}{4}t^{-[\tilde{P}, \tilde{Q}]}\tilde{P}t^{[\tilde{P}, \tilde{Q}]},\qquad s = -\frac{t^2}{16},
\end{equation}
we obtain a system with matrix Darboux coordinates $(P, Q)$ and Hamiltonian
\begin{equation}
    sH(P, Q; s) = \mathrm{Tr}\left(P^2Q^2 - (Q^2 + (2\epsilon +\ri g_2)Q - s)P +\ri g_2Q\right),
\end{equation}
which is the Hamiltonian of matrix $n\times n$ PIII$\mathrm{\big(D_6^{(1)}\big)}\big(1+\ri g_2, -\ri g_2, 1+2\epsilon, -2\epsilon\big)$.

\subsubsection{\texorpdfstring{PIV}{4} to \texorpdfstring{PIV}{4}}
\label{sssec:IV_IV}
\paragraph{Step 1.} This is case $4$ in Table \ref{table:automorphisms}. We consider matrix $3n\times 3n$ PIV with $\alpha_0 = \alpha_1 = \alpha_2 = \frac{1}{3}$.

Equations that determine symplectic submanifold $M^{\bar{w}}_{\alpha,t}$ are 
\begin{equation}
    S_{3} q S_{3}^{-1}= - p,\;\; S_{3} p S_{3}^{-1}= -p + q + 2t.
\end{equation}
Then
\begin{equation}
    q= \begin{pmatrix} 
 -\frac{2 t}{3}\mathbf{1}_{n\times n} & \mathfrak{q}_{12} & \mathfrak{q}_{13} \\
 \mathfrak{q}_{21} & -\frac{2 t}{3}\mathbf{1}_{n\times n} & \mathfrak{q}_{23} \\
 \mathfrak{q}_{31} & \mathfrak{q}_{32} & -\frac{2 t}{3}\mathbf{1}_{n\times n} \\
 \end{pmatrix},\;\;
    p= \begin{pmatrix}
 \frac{2 t}{3}\mathbf{1}_{n\times n} & -\varpi^2 \mathfrak{q}_{12} & -\varpi  \mathfrak{q}_{13} \\
 -\varpi  \mathfrak{q}_{21} & \frac{2 t}{3}\mathbf{1}_{n\times n} & -\varpi ^2 \mathfrak{q}_{23} \\
 -\varpi^2 \mathfrak{q}_{31} & -\varpi \mathfrak{q}_{32} & \frac{2 t}{3}\mathbf{1}_{n\times n} \\    
    \end{pmatrix},\quad \varpi = e^{\frac{2\pi \ri}{3}}.
\end{equation}
So we get a symplectic manifold of dimension \(6n^2\), with symplectic form \(\sqrt{3}\mathrm{i}\operatorname{Tr}(\rd\mathfrak{q}_{12}\wedge \rd\mathfrak{q}_{21} + \rd\mathfrak{q}_{31}\wedge \rd\mathfrak{q}_{13} + \rd\mathfrak{q}_{23}\wedge \rd\mathfrak{q}_{32})\). 

Similarly to cases above it follows that restriction of the dynamics defined by the Hamiltonian $H$ on $M^{\bar{w}}_{\alpha}$ is Hamiltonian and defined by restriction of $H$ on $M^{\bar{w}}_{\alpha}$.
\paragraph{Step 2.} The remained gauge freedom consists of block diagonal matrices $h=\mathrm{diag}(h_1, h_2, h_3)$ and moment map is also block diagonal 
\begin{equation}
    [p,q]=\begin{pmatrix} m_1 & 0 & 0 \\ 0 & m_2 & 0\\ 0 & 0 & m_3\end{pmatrix} = 
     \sqrt{3}\ri\begin{pmatrix} 
    \mathfrak{q}_{12}\mathfrak{q}_{21} - \mathfrak{q}_{13}\mathfrak{q}_{31} & 0 & 0\\
	0 & \mathfrak{q}_{23}\mathfrak{q}_{32}-\mathfrak{q}_{21}\mathfrak{q}_{12} & 0\\
	0 & 0 & \mathfrak{q}_{31}\mathfrak{q}_{13}-\mathfrak{q}_{32}\mathfrak{q}_{23}
    \end{pmatrix}.
\end{equation}
\paragraph{Step 3.} Now we can perform Hamiltonian reduction with respect to $\mathrm{GL}_n\left(\mathbb{C}\right)^2 = \{ \mathrm{diag}\left(1, h_2, h_3\right)| h_2, h_3 \in \mathrm{GL}_n\left(\mathbb{C}\right)\} $. We fix the moment map value as follows $m_2 = \ri g_2\mathbf{1}_{n\times n},\;\; m_3 = \ri g_3\mathbf{1}_{n\times n}$. 

The functions \(\tilde{P} = (\varpi - \varpi^{-1}) \mathfrak{q}_{12}\mathfrak{q}_{23}\mathfrak{q}_{13}^{-1}\) and \(\tilde{Q} = \mathfrak{q}_{13}\mathfrak{q}_{23}^{-1}\mathfrak{q}_{21}\) are invariant under the action of $\mathrm{GL}_n^2\left(\mathbb{C}\right)$, so define functions on the manifold $\mathbb{M}_{\alpha}$. Moreover these functions are Darboux coordinates on $\mathbb{M}_{\alpha, t}$.

\paragraph{Step 4.} Substitution $\tilde{P}, \tilde{Q}$ into the Hamiltonian gives
\begin{equation}
H(\tilde{P}, \tilde{Q};t) = \mathrm{Tr}\left(\tilde{P} \tilde{Q} (\tilde{P}+ \sqrt{3}\ri(\tilde{Q}+2 t))+\ri (g_2 + g_3)\tilde{P} -  \sqrt{3} g_2 \tilde{Q}\right).
\end{equation}

Then after the change of coordinates
\begin{equation}
	P = \frac{1}{\kappa}\tilde{P},\quad Q = \kappa\tilde{Q},\qquad s = \kappa t,\;\;\;\text{where }\kappa^2 =-\sqrt3 \ri,
\end{equation}
we get
\begin{equation}
H(P,Q;s) = \mathrm{Tr}\left( P Q (P-Q-2 s) + (\ri g_2 + \ri g_3)P - \ri g_2 Q \right),
\end{equation}
which is the Hamiltonian of matrix $n\times n$ PIV$\big(1+\frac{\ri g_3}{2}, -\frac{\ri g_2 + \ri g_3}{2}, \frac{\ri g_2}{2}\big)$.

\subsubsection{\texorpdfstring{PV}{5} to \texorpdfstring{PV}{5}}
\label{sssec:V_V}
\paragraph{Step 1.} This is case 5 from Table \ref{table:automorphisms}. We consider matrix $4n\times 4n$ PV with $\alpha_0 = \alpha_1 = \alpha_2 = \alpha_3 = \frac{1}{4}$.

Equations that determine the symplectic submanifold $M_{\alpha, t}^{\bar{w}}$ in case $5$ are
\begin{equation}
    S_4 q S_4^{-1}= - \frac{p}{t},\;\; S_4 p S_4^{-1}= t(q-1).
\end{equation}
The solution is
\begin{equation}
    q= \begin{pmatrix} 
 \frac{1}{2}\mathbf{1}_{n\times n} & 0 & \mathfrak{q}_{13} & \mathfrak{q}_{14} \\
 0 & \frac{1}{2}\mathbf{1}_{n\times n}  & \mathfrak{q}_{23} & \mathfrak{q}_{24}\\
   \mathfrak{q}_{31} & \mathfrak{q}_{32} & \frac{1}{2}\mathbf{1}_{n\times n} & 0\\
   \mathfrak{q}_{41} & \mathfrak{q}_{42} & 0 & \frac{1}{2}\mathbf{1}_{n\times n}\\
 \end{pmatrix},\;\;
    p= t \begin{pmatrix} 
 -\frac{1}{2}\mathbf{1}_{n\times n} & 0 & \mathfrak{q}_{13} &  -\ri \mathfrak{q}_{14} \\
    0  &  -\frac{1}{2}\mathbf{1}_{n\times n}   & \ri \mathfrak{q}_{23} &  \mathfrak{q}_{24}\\
   \mathfrak{q}_{31} & -\ri\mathfrak{q}_{32} &  -\frac{1}{2}\mathbf{1}_{n\times n}  & 0\\
    \ri\mathfrak{q}_{41} & \mathfrak{q}_{42} & 0 &  -\frac{1}{2}\mathbf{1}_{n\times n} \\
  \end{pmatrix}.
\end{equation}
So we get symplectic manifold of dimension \(8n^2\), with symplectic form
\begin{equation}
2\ri \mathrm{Tr}\left(\rd (\sqrt{t}\mathfrak{q}_{13}) \wedge \rd (\sqrt{t}\mathfrak{q}_{31}) + \rd (\sqrt{t}\mathfrak{q}_{24}) \wedge \rd (\sqrt{t}\mathfrak{q}_{42}) + \rd (\sqrt{t}\mathfrak{q}_{32}) \wedge \rd (\sqrt{t}\mathfrak{q}_{23}) + \rd (\sqrt{t}\mathfrak{q}_{41}) \wedge \rd (\sqrt{t}\mathfrak{q}_{14})\right).
\end{equation} 

Similarly to cases $1$--$3$ it follows that restriction of the dynamics defined by the Hamiltonian $H$ on $M^{\bar{w}}_{\alpha}$ is Hamiltonian and defined by restriction of $H$ on $M^{\bar{w}}_{\alpha}$.
\paragraph{Step 2.}The remained gauge freedom consists of block diagonal matrices $h=\mathrm{diag}(h_1, h_2, h_3, h_4)$ and moment map is also block diagonal
\begin{multline}
    [p,q]=\begin{pmatrix} m_1 & 0 & 0 & 0\\ 0 & m_2 & 0 & 0\\ 0 & 0 & m_3 & 0\\ 0 & 0 & 0 & m_4 \end{pmatrix} = 2\ri t \begin{pmatrix}
    \mathfrak{q}_{13}\mathfrak{q}_{31}{-} \mathfrak{q}_{14}\mathfrak{q}_{41} & 0 & 0 & 0\\ 0 &  \mathfrak{q}_{24}\mathfrak{q}_{42}{-} \mathfrak{q}_{23}\mathfrak{q}_{32} & 0 & 0\\ 0 & 0 &  \mathfrak{q}_{32}\mathfrak{q}_{23}{-} \mathfrak{q}_{31}\mathfrak{q}_{13} & 0\\ 0 & 0 & 0 & \mathfrak{q}_{41}\mathfrak{q}_{14}{-} \mathfrak{q}_{42}\mathfrak{q}_{24}
    \end{pmatrix}.
\end{multline}
\paragraph{Step 3.} Now we can perform reduction with respect to $\mathrm{GL}_n^{3}\left(\mathbb{C}\right) = \mathrm{diag}\left(
1, h_2, h_3, h_4|h_2, h_3, h_4\in \mathrm{GL}_n\left(\mathbb{C}\right)\right)$. We fix the moment map value as follows $m_2 = \ri g_2\mathbf{1}_{n\times n},\;\; m_3 = \ri g_3\mathbf{1}_{n\times n},\;\; m_4 = \ri g_4\mathbf{1}_{n\times n}$. 

The functions \(\tilde{P} = 2\ri t \mathfrak{q}_{13}\mathfrak{q}_{23}^{-1}\mathfrak{q}_{24}\mathfrak{q}_{41}\) and \(\tilde{Q} = \mathfrak{q}_{14}\mathfrak{q}_{24}^{-1}\mathfrak{q}_{23}\mathfrak{q}_{13}^{-1}\) are invariant under the action of $\mathrm{GL}_n^{3}\left(\mathbb{C}\right)$, so define functions on the manifold $\mathbb{M}_{\alpha}$. Moreover these functions are Darboux coordinates on $\mathbb{M}_{\alpha, t}$.
\paragraph{Step 4.} Substituting $\tilde{P}, \tilde{Q}$ into the Hamiltonian we get
\begin{multline}
tH(\tilde{P}, \tilde{Q}; t) = \mathrm{Tr}\left(\tilde{P}(\tilde{Q}-1)^2\tilde{P}\tilde{Q} + \left(-(\ri g_2 + \ri g_3 + 2\ri g_4)\tilde{Q}^2 + (2\ri t + 2\ri g_2 + \ri g_3 + 3 \ri g_4)\tilde{Q} -\right.\right. \\ \left. -(\ri g_2+ \ri g_4))\tilde{P} + \ri g_4 (\ri g_2 + \ri g_3 + \ri g_4)\tilde{Q}\right).
\end{multline} 
Then after the change of coordinates
\begin{equation}
P = -\left((\tilde{Q} - 1)\tilde{P} - \ri g_4\right)(\tilde{Q} - 1) ,\quad Q = \tilde{Q}(\tilde{Q} - 1)^{-1},\qquad s = -2\ri t,
\end{equation}
we get the Hamiltonian
\begin{equation}
sH(P,Q;s) = \mathrm{Tr}\left( P(P + s)Q(Q - 1) - (\ri g_4 - \ri g_3)PQ + (\ri g_2 + \ri g_4) P - \ri g_4 s Q \right),
\end{equation}
which is the Hamiltonian of matrix $n\times n$ PV$\big(1+\ri g_3, \ri g_2 + \ri g_4,-\ri g_4,  - \ri g_2 - \ri g_3\big)$.

\subsection{Non-linear cases}\label{ssec:nonlinear_cases}
\paragraph{Step 1.} Equations that determine  symplectic submanifold $M^{\bar{w}}_{\alpha,t}$ for cases $4$--$6$ are
\begin{equation}\label{nonlinear_leaf_fixed_points}
    S_2 q S_2^{-1} = tq^{-1},\quad S_2pS_2^{-1} = -\frac{q(pq + \nu)}{t}.
\end{equation}
On the open dense subset where the lower left $n\times n$ block of $q$ is invertible the solution is given by
\begin{equation}\label{nonlinear_case_solution}
    q = 
    \begin{pmatrix}
    \tilde{\mathfrak{q}}_{11} & (\tilde{\mathfrak{q}}_{11}^2 - t)\tilde{\mathfrak{q}}_{21}^{-1}\\ 
    \tilde{\mathfrak{q}}_{21} & \tilde{\mathfrak{q}}_{21}\tilde{\mathfrak{q}}_{11}\tilde{\mathfrak{q}}_{21}^{-1}
    \end{pmatrix},\;\;\;
    p = \begin{pmatrix}
    \tilde{\mathfrak{p}}_{11} & -\left((\tilde{\mathfrak{q}}_{11}^2 - t)\tilde{\mathfrak{q}}_{21}^{-1}\tilde{\mathfrak{p}}_{21} + [\tilde{\mathfrak{p}}_{11}, \tilde{\mathfrak{q}}_{11}]_{+} + \nu\right)\tilde{\mathfrak{q}}_{21}^{-1}\\
    \tilde{\mathfrak{p}}_{21} & [\tilde{\mathfrak{q}}_{21}\tilde{\mathfrak{q}}_{11}\tilde{\mathfrak{q}}_{21}^{-1}, \tilde{\mathfrak{p}}_{21}\tilde{\mathfrak{q}}_{21}^{-1}] + \tilde{\mathfrak{q}}_{21}\tilde{\mathfrak{p}}_{11}\tilde{\mathfrak{q}}_{21}^{-1}
    \end{pmatrix}.
\end{equation}
So, $ \tilde{\mathfrak{p}}_{11},  \tilde{\mathfrak{q}}_{11},  \tilde{\mathfrak{p}}_{21},  \tilde{\mathfrak{q}}_{21}$ are local coordinates on $M_{\alpha, t}^{\bar{w}}$. Let us denote the embedding of $M_{\alpha}^{\bar{w}}$ by $\iota : M_{\alpha}^{\bar{w}}  \rightarrow M_{\alpha}$.

We can obtain restriction of the canonical 1-form $\Theta$ on $M_{\alpha}^{\bar{w}}$, namely
\begin{equation}\label{canonical_1form_restriction_non-linear}
\iota^{*}\left(\Theta\right) = \mathrm{Tr}\left(\mathfrak{p}_{11}\mathrm{d}\mathfrak{q}_{11} + \mathfrak{p}_{12}\rd \mathfrak{q}_{21} - \tilde{\mathfrak{p}}_{21}\tilde{\mathfrak{q}}_{21}^{-1}\mathrm{d}t\right),
\end{equation}
where  Darboux coordinates on $M_{\alpha, t}^{\bar{w}}$ are
\begin{multline}\label{nonlinear_darboux_coordinates}
(\mathfrak{p}_{11}, \mathfrak{q}_{11}, \mathfrak{p}_{12}, \mathfrak{q}_{21})=(2(\tilde{\mathfrak{p}}_{11}{+} \tilde{\mathfrak{q}}_{11}\tilde{\mathfrak{q}}_{21}^{-1}\tilde{\mathfrak{p}}_{21}), \tilde{\mathfrak{q}}_{11}, 2\left(t\tilde{\mathfrak{q}}_{21}^{-1}\tilde{\mathfrak{p}}_{21}{-} \tilde{\mathfrak{p}}_{11}\tilde{\mathfrak{q}}_{11}{-} \tilde{\mathfrak{q}}_{11}\tilde{\mathfrak{q}}_{21}^{-1}\tilde{\mathfrak{p}}_{21}\tilde{\mathfrak{q}}_{11}{-} \frac{\nu}{2}\right)\tilde{\mathfrak{q}}_{21}^{-1}, \tilde{\mathfrak{q}}_{21}).
\end{multline}

Recall that $H$ is the Hamiltonian which defines matrix Painlev\'e's dynamics on $M_{\alpha}$. Then the equations of motion can be identified with the one dimensional distribution on $M_{\alpha}$ defined as
\begin{equation}
\mathrm{Ker}\left(\omega - \mathrm{d}H\wedge \mathrm{d}t\right).
\end{equation}
It follows that the dynamics on $M_{\alpha}^{\bar{w}}$ is also Hamiltonian and defined by the Hamiltonian $\iota^*\left(H\right) + \mathrm{Tr}\left(\tilde{\mathfrak{p}}_{21}\tilde{\mathfrak{q}}_{21}^{-1}\right)$.

Even though  $(\mathfrak{p}_{11}, \mathfrak{q}_{11}, \mathfrak{p}_{12}, \mathfrak{q}_{21})$ are Darboux coordinates on $M_{\alpha, t}^{\bar{w}}$, it will be more convenient for us to use $(\tilde{\mathfrak{p}}_{11}, \tilde{\mathfrak{q}}_{11}, \tilde{\mathfrak{p}}_{12}, \tilde{\mathfrak{q}}_{21})$.
\paragraph{Step 2.} As in the linear cases we have the action of block diagonal matrices
\begin{equation}
    h = \mathrm{diag}\left(h_1, h_2\right): (\{\tilde{\mathfrak{q}}_{ij}\}, \{\tilde{\mathfrak{p}}_{ij} \}) \mapsto (\{h_i \tilde{\mathfrak{q}}_{ij}h_{j}^{-1}\}, \{h_i \tilde{\mathfrak{p}}_{ij}h_{j}^{-1}\}).
\end{equation}
The moment map corresponding to this action is block diagonal, namely $[p,q]=\mathrm{diag}(m_1,m_2)$, where 
\begin{equation}
\begin{aligned}
&m_1=-2((\tilde{\mathfrak{q}}_{11}^2 - t)\tilde{\mathfrak{q}}_{21}^{-1}\tilde{\mathfrak{p}}_{21} + \tilde{\mathfrak{q}}_{11}\tilde{\mathfrak{p}}_{11}) - \nu,\\
&m_2=2\left(-t\tilde{\mathfrak{p}}_{21} + \tilde{\mathfrak{q}}_{21}\tilde{\mathfrak{p}}_{11}\tilde{\mathfrak{q}}_{11} + \tilde{\mathfrak{q}}_{21}\tilde{\mathfrak{q}}_{11}\tilde{\mathfrak{q}}_{21}^{-1}\tilde{\mathfrak{p}}_{21}\tilde{\mathfrak{q}}_{11}\right)\tilde{\mathfrak{q}}_{21}^{-1} + \nu.
\end{aligned}
\end{equation}

\paragraph{Step 3.}Let us perform Hamiltonian reduction with respect to $\mathrm{GL}_{n}\left(\mathbb{C}\right) = \{\mathrm{diag}(1, h_2) | h_2 \in \mathrm{GL}_{n}\left(\mathbb{C}\right)\}$. Then fixing $m_2 =\ri g_2\mathbf{1}_{n\times n}$ and resolving it with respect to $\tilde{\mathfrak{p}}_{11}$ on the open dense subset where $\tilde{\mathfrak{q}}_{11}$ is invertible we get
\begin{equation}
    \tilde{\mathfrak{p}}_{11} = t\tilde{\mathfrak{q}}_{21}^{-1}\tilde{\mathfrak{p}}_{21}\tilde{\mathfrak{q}}_{11}^{-1} - \tilde{\mathfrak{q}}_{11}\tilde{\mathfrak{q}}_{21}^{-1}\tilde{\mathfrak{p}}_{21} + \frac{1}{2}(\ri g_2 - \nu)\tilde{\mathfrak{q}}_{11}^{-1}.
\end{equation}
We can take the following coordinates on the reduction
\begin{equation}
    \tilde{Q} = \tilde{\mathfrak{q}}_{11},\quad \tilde{P} = 2t\tilde{\mathfrak{q}}_{21}^{-1}\tilde{\mathfrak{p}}_{21}\tilde{\mathfrak{q}}_{11}^{-1}.
\end{equation}
Then we have a section from $\mathbb{M}_{\alpha}$ to the intersection $M_{\alpha}^{\bar{w}}\cap \{m_2 =\ri g_2\mathbf{1}_{n\times n}\}$ which maps
\begin{equation}\label{Non-linear_cases_reduction_section}
    s: \left(\tilde{P}, \tilde{Q}, t\right) \mapsto \left(\tilde{\mathfrak{p}}_{11} = \frac{\tilde{P}}{2} - \frac{\tilde{Q}\tilde{P}\tilde{Q}}{2t} + \frac{1}{2}(\ri g_2 -\nu)\tilde{Q}^{-1} , \tilde{\mathfrak{q}}_{11} = \tilde{Q}, \tilde{\mathfrak{p}}_{21} = \frac{\tilde{P}\tilde{Q}}{2t},\tilde{\mathfrak{q}}_{21} = 1 , t\right).
\end{equation}
\paragraph{Step 4.} Hamiltonian on the reduction is
\begin{equation}
    s^*\left(H +  \mathrm{Tr}\left(\tilde{\mathfrak{p}}_{21}\tilde{\mathfrak{q}}_{21}^{-1}\right)\right) = s^* \left(H\right) + \mathrm{Tr}\left(\frac{\tilde{P}\tilde{Q}}{2t}\right).
\end{equation}

\subsubsection{\texorpdfstring{PIII$\mathrm{\big(D_8^{(1)}\big)}$}{3D8} to \texorpdfstring{PIII$\mathrm{\big(D_6^{(1)}\big)}$}{3D6}}
\label{sssec:IIID8_IIID6}
\paragraph{Steps 1,2,3.} We consider matrix $2n\times 2n$ PIII$\mathrm{\big(D_{8}^{(1)}\big)}$. In this case $\nu = \frac{1}{2}$.
\paragraph{Step 4.} Here we start from coordinates 
\begin{equation}
	\breve{P} = -\frac{1}{2}\left(\frac{\tilde{Q}}{\sqrt{t}} - 1\right),\; \breve{Q} = 2\sqrt{t}\left(\tilde{P}+\left(\ri g_2 -\frac{1}{2}\right)\tilde{Q}^{-1}\right),\; s = 16\sqrt{t},
\end{equation}
in which we get Hamiltonian
\begin{equation}
	sH(\breve{P}, \breve{Q}; s) = \mathrm{Tr}\left(\breve{P}\breve{Q}\breve{P}\breve{Q} - \left(\breve{Q}^2 - 2\ri g_2\breve{Q} - s\right)\breve{P} - \ri g_2\breve{Q} \right).
\end{equation}
Then after change of variables
\begin{equation}
	P = s^{[\breve{P}, \breve{Q}]}\breve{P}s^{-[\breve{P}, \breve{Q}]},\quad Q = s^{[\breve{P}, \breve{Q}]}\breve{Q}s^{-[\breve{P}, \breve{Q}]},
\end{equation}
we get
\begin{equation}
	sH(P,Q; s) = \mathrm{Tr}\left(P^2Q^2 - \left(Q^2 - 2\ri g_2 Q - s\right)P - \ri g_2 Q \right),
\end{equation}
which is the Hamiltonian of matrix $n\times n$ PIII$\mathrm{\big(D_{6}^{(1)}\big)}\big(1-\ri g_2,  \ri g_2 , 1-\ri g_2, \ri g_2\big)$.

\subsubsection{\texorpdfstring{PIII$\mathrm{\big(D_6^{(1)}\big)}$}{3D6} to \texorpdfstring{PV}{5}}
\label{sssec:IIID6_V}
\paragraph{Steps 1,2,3.} We consider matrix $2n\times 2n$ PIII$\mathrm{\big(D_{6}^{(1)}\big)}$ with $\beta_0=\beta_1= \frac{1}{2},\alpha_1=\epsilon$. In this case $\nu=\epsilon$.
\paragraph{Step 4.} Here we start from coordinates
\begin{equation}\label{D6_D5_int}
    \breve{P} = 2\sqrt{t}\left(\tilde{P}+(\ri g_2 -\epsilon)\tilde{Q}^{-1}\right),\quad \breve{Q} = \frac{1}{2}\left(\frac{\tilde{Q}}{\sqrt{t}} + 1\right),\qquad s = -8\sqrt{t},
\end{equation}
we get a system with Hamiltonian
\begin{equation}
    sH(\breve{P}, \breve{Q}; s) = \mathrm{Tr}\left(\breve{P}(\breve{P}{+}s)\breve{Q}(\breve{Q}{-}1) + (1{-}2\ri g_2)\breve{P}\breve{Q} + \left(\frac{2\ri g_2{-}1}{2}\right)\breve{P} + (\epsilon{-}\ri g_2)s\breve{Q} - \frac{[\breve{P}, \breve{Q}]^2}4\right).
\end{equation}
After change of coordinates
\begin{equation}
    P = s^{-\frac{1}{2}[ \breve{P},\breve{Q}]}\breve{P} s^{\frac{1}{2}[\breve{P},\breve{Q}]},\quad Q = s^{-\frac{1}{2}[\breve{P},\breve{Q}]}\breve{Q} s^{\frac{1}{2}[\breve{P},\breve{Q}]},
\end{equation}
we get
\begin{equation}
    sH(P,Q; s) = \mathrm{Tr}\left(P(P+s)Q(Q - 1) + (- 2\ri g_2)PQ + \ri g_2 P + (\epsilon -\ri g_2)sQ\right),
\end{equation}
which is the Hamiltonian of matrix $n\times n$ PV$\big(1-\ri g_2 - \epsilon, \ri g_2, \epsilon - \ri g_2, \ri g_2\big)$.

\subsubsection{\texorpdfstring{PVI}{6} to \texorpdfstring{PVI}{6}}
\label{sssec:VI_VI}
\paragraph{Steps 1,2,3.}  We consider matrix $2n\times 2n$ PVI with $\alpha_0=\alpha_3 =\epsilon_0,\; \alpha_1=\alpha_4 = \epsilon_1$. In this case $\nu = \alpha_2 = \frac{1}{2} - \epsilon_0 - \epsilon_1$.
\paragraph{Step 4.} We start from coordinates 
\begin{equation}
\breve{P} = 2\sqrt{t}\left(\tilde{P} + \left(\epsilon_0 + \epsilon_1 +\ri g_2 -\frac{1}{2}\right)\tilde{Q}^{-1}\right),\quad \breve{Q} = \frac{1}{2}\left(\frac{\tilde{Q}}{\sqrt{t}} + 1\right),\quad s = \frac{1}{2} + \frac{1}{4}\left(\sqrt{t} + \frac{1}{\sqrt{t}}\right),
\end{equation}
in which we get Hamiltonian
\begin{multline}
    s(s{-}1)H(\breve{P},\breve{Q}; s) = \mathrm{Tr}\Bigg(\breve{P} \breve{Q} (\breve{Q} - 1) \breve{P} (\breve{Q} - s) + \left(\ri g_2(\breve{Q}{-}1)(\breve{Q}{-}s) + \ri g_2\breve{Q}(\breve{Q}{-} s)  + (2\epsilon_0{-}1)\breve{Q}(\breve{Q}{-}1)\right)\breve{P} +\\ \left.+ \left(\left(
    \epsilon_0 - \ri g_2 + \frac{1}{2}\right)^2-\epsilon_1^2\right)\breve{Q} - \frac{\sqrt{s(s-1)}}{2}[\breve{P}, \breve{Q}]^2\right).
\end{multline}
After the last substitution
\begin{equation}
    P = e^{-2\cosh^{-1}\left(\sqrt{s}\right)[\breve{P}, \breve{Q}]}\breve{P}e^{2\cosh^{-1}\left(\sqrt{s}\right)[\breve{P}, \breve{Q}]},\quad Q = e^{-2\cosh^{-1}\left(\sqrt{s}\right)[\breve{P}, \breve{Q}]}\breve{P}e^{2\cosh^{-1}\left(\sqrt{s}\right)[\breve{P}, \breve{Q}]},
\end{equation}
we get
\begin{multline}
    s(s-1)H(P,Q; s) = \mathrm{Tr}\Bigg(P Q (Q - 1) P (Q - s) + \left(\ri g_2(Q - 1)(Q - s) +\ri g_2Q(Q - s)  +\right.\\ \left.\left.+ (2\epsilon_0 - 1)Q(Q - 1)\right)P  + \left(\left(
    \epsilon_0 - \ri g_2 + \frac{1}{2}\right)^2-\epsilon_1^2\right)Q\right),
\end{multline}
which is the Hamiltonian of matrix $n\times n$ PVI$\big(2\epsilon_0, 2\epsilon_1,\frac{1}{2}-\ri g_2 - \epsilon_0 - \epsilon_1, \ri g_2, \ri g_2 \big)$.

\subsection{Special $C_2\times C_2$ cases}\label{ssec:special_C2xC2}
Let $G$ be a group consisting of B\"acklund transformations which preserve a certain matrix Painlev\'e system and act trivially on time variable. It was shown in \cite{TOS05} that such group $G$ consists of transformations coming from automorphisms of diagram (Up to overall conjugation of $G$). From the description of transformations corresponding to automorphisms in Sec. \ref{sec:Backlund} it follows that possible $G$'s are either cyclic, or isomorphic to $C_2\times C_2$.

Twisting generators by $\mathrm{GL}_{n|G|}\left(\mathbb{C}\right)$ action we get the group of symmetries $\bar{G}$, and the submanifold $M_{\alpha}^{\bar{G}}$ which is preserved by the dynamics. Condition of $\bar{G}$-invariance in case of cyclic $G$ simplifies to invariance under the action of generator $\bar{w}$. These cases are described by Theorem \ref{thm:gen_constr}.

There are only two cases of non-cyclic $G$. For them there is a construction of reduction similar to one described by Theorem \ref{thm:gen_constr}. The input and output are given in the following table
\begin{table}[H]\label{table:automorphisms_C2xC2}
\begin{center}
\begin{tabular}{|c|c|c|c|c|c|c|c|}
\hline
\textnumero & Equation & Image  & Generators  & $ q $ & $p$ & Section \\
\hline
1 & PIII$\mathrm{\big(D_6^{(1)}\big)}$ & PIII$\mathrm{\big(D_6^{(1)}\big)}$  &  \begin{tabular}{c}
$\pi \circ \pi'$ \\
$\pi'$ \\
\end{tabular}&\begin{tabular}{c}
$-q$ \\
$tq^{-1}$ \\
\end{tabular} & \begin{tabular}{c}
$1-p - t q^{-2}$ \\
$-t^{-1}q \left(pq + \frac{1}{2}\right)$ \\
\end{tabular} & \ref{sssec:IIID6_IIID6} \\
\hline
2 & PVI & PVI   & \begin{tabular}{c}
$\pi_1$ \\
$\pi_2$ \\
\end{tabular}&\begin{tabular}{c}
$tq^{-1}$ \\
$t(q - 1)(q - t)^{-1}$ \\
\end{tabular} & \begin{tabular}{c}
$-t^{-1}q(pq + \alpha_2)$ \\
$\frac{(q-t)\left(p(q-t) + \alpha_2\right)}{t(1-t)}$ \\
\end{tabular} & \ref{sssec:VI_VI_4} \\
\hline
\end{tabular}
\caption{Input and output for $C_2\times C_2$ cases.}
\end{center}
\end{table}

Let us use notations $w_{+}, w_{-}$ for generators of $G$. We take twists $S_{+} = \mathrm{diag}\left(\mathbf{1}_{n\times n}, \mathbf{1}_{n\times n}, -\mathbf{1}_{n\times n}, -\mathbf{1}_{n\times n}\right)$, $S_{-} = \mathrm{diag}\left(\mathbf{1}_{n\times n}, -\mathbf{1}_{n\times n}, \mathbf{1}_{n\times n}, -\mathbf{1}_{n\times n}\right)$. Then twisted generators are $\bar{w}_{i} = w_{i}\circ Ad_{S_{i}}$.

\subsubsection{\texorpdfstring{PIII$\mathrm{\big(D_6^{(1)}\big)}$}{3D6} to \texorpdfstring{PIII$\mathrm{\big(D_6^{(1)}\big)}$}{3D6}}
\label{sssec:IIID6_IIID6}
\paragraph{Step 1.} Consider matrix $4n\times 4n$ PIII$\mathrm{\big(D_6^{(1)}\big)}$ with parameters $\alpha_1 = \beta_1 = \frac{1}{2}$. In this case we take $w_{+} = \pi'$ and $w_{-} = \pi \circ \pi'$.

Since $M_{\alpha}^{\bar{G}} = M_{\alpha}^{\bar{w}_{+}}\cap M_{\alpha}^{\bar{w}_{-}}$ let us start from $M_{\alpha}^{\bar{w}_{+}}$. Manifold $M_{\alpha}^{\bar{w}_{+}}$ is defined by equations \eqref{nonlinear_leaf_fixed_points} with $\nu = \frac{1}{2}$. So we have the solution on the dense open subset
\begin{equation}
q = \begin{pmatrix}
    \breve{\mathfrak{q}}_{11} & \breve{\mathfrak{q}}_{12} & * & *\\ 
    \breve{\mathfrak{q}}_{21} & \breve{\mathfrak{q}}_{22} & * & *\\ 
    \breve{\mathfrak{q}}_{31} & \breve{\mathfrak{q}}_{32} & * & *\\
    \breve{\mathfrak{q}}_{41} & \breve{\mathfrak{q}}_{42} & * & *\\  
    \end{pmatrix},\;\;
p = \begin{pmatrix}
    \breve{\mathfrak{p}}_{11} & \breve{\mathfrak{p}}_{12} & * & *\\ 
    \breve{\mathfrak{p}}_{21} & \breve{\mathfrak{p}}_{22} & * & *\\ 
    \breve{\mathfrak{p}}_{31} & \breve{\mathfrak{p}}_{32} & * & *\\
    \breve{\mathfrak{p}}_{41} & \breve{\mathfrak{p}}_{42} & * & *\\  
    \end{pmatrix}.
\end{equation}
Here $\breve{\mathfrak{q}}_{ij}, \breve{\mathfrak{p}}_{ij}$ are $n\times n$ blocks and $*$'s are defined in terms of $\breve{\mathfrak{q}}_{ij}$ and $\breve{\mathfrak{p}}_{ij}$ by \eqref{nonlinear_case_solution}. Note that for this case in equations \eqref{nonlinear_case_solution} blocks have size $2n\times 2n$.

Darboux coordinates on $M_{\alpha}^{\bar{w}_{+}}$ are given by \eqref{nonlinear_darboux_coordinates}. Note that one half of these coordinates are simply $\{\breve{\mathfrak{q}}_{ij}\}_{1\leq i\leq 4,\; 1\leq j\leq 2}$, the other half are coordinates conjugate to them, defined by rather complicated formulas following from \eqref{nonlinear_darboux_coordinates}.

Now let us impose invariance under $\bar{w}_{-}$ to obtain $M_{\alpha}^{\bar{G}}$ as the submanifold of $M_{\alpha}^{\bar{w}_{+}}$.
\begin{equation}
q = - S_2 q S_2^{-1},\;\; p = S_2 \left(-p + 1 - t q^{-2}\right) S_2^{-1}.
\end{equation}
These equations are solved by
\begin{equation}\label{3D6_C2xC2_invariant_solution}
\begin{aligned}
\breve{\mathfrak{q}}_{11}& = \breve{\mathfrak{q}}_{22} = \breve{\mathfrak{q}}_{31} = \breve{\mathfrak{q}}_{42} = 0,\\
\breve{\mathfrak{p}}_{11} = 1 - \frac{\breve{\mathfrak{q}}_{12}\breve{\mathfrak{q}}_{21}}{t},\;\; \breve{\mathfrak{p}}_{22} &= 1 -\frac{\breve{\mathfrak{q}}_{21}\breve{\mathfrak{q}}_{12}}{t},\;\; \breve{\mathfrak{p}}_{31} = \frac{\breve{\mathfrak{q}}_{32}\breve{\mathfrak{q}}_{21}}{t},\;\; \breve{\mathfrak{p}}_{42} = \frac{\breve{\mathfrak{q}}_{41}\breve{\mathfrak{q}}_{12}}{t}.
\end{aligned}
\end{equation}
So matrix coordinates on $M_{\alpha, t}^{\bar{G}}$ are $(\breve{\mathfrak{q}}_{12}, \breve{\mathfrak{q}}_{21}, \breve{\mathfrak{q}}_{32}, \breve{\mathfrak{q}}_{41}, \breve{\mathfrak{p}}_{12}, \breve{\mathfrak{p}}_{21}, \breve{\mathfrak{p}}_{32}, \breve{\mathfrak{p}}_{41})$.

From the consideration above it follows that Darboux coordinates on $M_{\alpha, t}^{\bar{G}}$ are $\breve{\mathfrak{q}}_{12}, \breve{\mathfrak{q}}_{21}, \breve{\mathfrak{q}}_{32}, \breve{\mathfrak{q}}_{41}$ and conjugate to them which are restrictions of ones conjugate to them on $M_{\alpha, t}^{\bar{w}_{+}}$ to $M_{\alpha, t}^{\bar{G}}$. For example, to obtain matrix coordinate conjugate to $\breve{\mathfrak{q}}_{21}$ one should take upper right $n\times n$ block of $\mathfrak{p}_{11}$ from formula \eqref{nonlinear_darboux_coordinates} and then restrict it to \eqref{3D6_C2xC2_invariant_solution}.  Darboux coordinates on $M_{\alpha, t}^{\bar{G}}$ are 
\begin{multline}\label{Darboux_special_3D6}
	\left(\hat{\mathfrak{p}}_{21}, \hat{\mathfrak{q}}_{12}, \hat{\mathfrak{p}}_{12}, \hat{\mathfrak{q}}_{21}, \hat{\mathfrak{p}}_{23}, \hat{\mathfrak{q}}_{32}, \hat{\mathfrak{p}}_{14}, \hat{\mathfrak{q}}_{41}\right) = \left( 2(\breve{\mathfrak{p}}_{21} + \breve{\mathfrak{q}}_{21}\breve{\mathfrak{q}}_{41}^{-1}\breve{\mathfrak{p}}_{41}), \breve{\mathfrak{q}}_{12}, 2(\breve{\mathfrak{p}}_{12} + \breve{\mathfrak{q}}_{12}\breve{\mathfrak{q}}_{32}^{-1}\breve{\mathfrak{p}}_{32}), \breve{\mathfrak{q}}_{21},\right.\\\left. 2\left(t \breve{\mathfrak{q}}_{32}^{-1}\breve{\mathfrak{p}}_{32} - \breve{\mathfrak{p}}_{21}\breve{\mathfrak{q}}_{12} - \breve{\mathfrak{q}}_{21}\breve{\mathfrak{q}}_{41}^{-1}\breve{\mathfrak{p}}_{41}\breve{\mathfrak{q}}_{12} -\frac{1}{4}\right)\breve{\mathfrak{q}}_{32}^{-1}, \breve{\mathfrak{q}}_{32}, 2\left( t \breve{\mathfrak{q}}_{41}^{-1}\breve{\mathfrak{p}}_{41} - \breve{\mathfrak{p}}_{12}\breve{\mathfrak{q}}_{21}- \breve{\mathfrak{q}}_{12}\breve{\mathfrak{q}}_{32}^{-1}\breve{\mathfrak{p}}_{32}\breve{\mathfrak{q}}_{21} - \frac{1}{4}\right)\breve{\mathfrak{q}}_{41}^{-1}, \breve{\mathfrak{q}}_{41}	\right) .
\end{multline}

\paragraph{Step 2.} We have the action of block diagonal matrices on $M_{\alpha}^{\bar{G}}$   which maps
\begin{equation}
h=\mathrm{diag}\left(h_1, h_2, h_3, h_4\right): ( \{\breve{\mathfrak{q}}_{ij}\},  \{\breve{\mathfrak{p}}_{ij}\}) \mapsto (\{h_i \breve{\mathfrak{q}}_{ij} h_{j}^{-1}\}, \{ h_i \breve{\mathfrak{p}}_{ij} h_{j}^{-1} \})
\end{equation}
Moment map of this action is defined by blocks of block--diagonal matrix $[p, q]$
\begin{equation}
\begin{aligned}
m_1 &= 2 (t \breve{\mathfrak{q}}_{41}^{-1}\breve{\mathfrak{p}}_{41}- \breve{\mathfrak{q}}_{12}\breve{\mathfrak{p}}_{21}- \breve{\mathfrak{q}}_{12}\breve{\mathfrak{q}}_{21}\breve{\mathfrak{q}}_{41}^{-1}\breve{\mathfrak{p}}_{41})-\frac{1}{2},\\
m_2 &= 2 (t \breve{\mathfrak{q}}_{32}^{-1}\breve{\mathfrak{p}}_{32}- \breve{\mathfrak{q}}_{21}\breve{\mathfrak{p}}_{12}- \breve{\mathfrak{q}}_{21}\breve{\mathfrak{q}}_{12}\breve{\mathfrak{q}}_{32}^{-1}\breve{\mathfrak{p}}_{32})-\frac{1}{2},\\
m_3 &= 2 (-t \breve{\mathfrak{p}}_{32}+ \breve{\mathfrak{q}}_{32}\breve{\mathfrak{p}}_{21}\breve{\mathfrak{q}}_{12} + \breve{\mathfrak{q}}_{32}\breve{\mathfrak{q}}_{21}\breve{\mathfrak{q}}_{41}^{-1}\breve{\mathfrak{p}}_{41}\breve{\mathfrak{q}}_{12})\breve{\mathfrak{q}}_{32}^{-1}+\frac{1}{2},\\
m_{4} &= 2(-t\breve{\mathfrak{p}}_{41} + \breve{\mathfrak{q}}_{41}\breve{\mathfrak{p}}_{12}\breve{\mathfrak{q}}_{21} + \breve{\mathfrak{q}}_{41}\breve{\mathfrak{q}}_{12}\breve{\mathfrak{q}}_{32}^{-1}\breve{\mathfrak{p}}_{32}\breve{\mathfrak{q}}_{21})\breve{\mathfrak{q}}_{41}^{-1} + \frac{1}{2}.
\end{aligned}
\end{equation}
\paragraph{Step 3.} Let us perform Hamiltonian reduction with respect to $\mathrm{GL}_{n}^3\left(\mathbb{C}\right) = \{\mathrm{diag}(1, h_2, h_3, h_4) | h_2, h_3, h_4 \in \mathrm{GL}_{n}\left(\mathbb{C}\right)\}$. We fix the moment map value as follows $m_2 = \ri g_2\mathbf{1}_{n\times n},\;\; m_3 = \ri g_3\mathbf{1}_{n\times n},\;\;  m_4 = \ri g_4\mathbf{1}_{n\times n}.$

Darboux coordinates on the reduction are $(\tilde{P}, \tilde{Q}) = \left(2(\breve{\mathfrak{q}}_{21}^{-1}\breve{\mathfrak{p}}_{21} + \breve{\mathfrak{q}}_{41}^{-1}\breve{\mathfrak{p}}_{41}), \breve{\mathfrak{q}}_{12}\breve{\mathfrak{q}}_{21} \right).$
\paragraph{Step 4.} As in non-linear cases restriction on the set of points invariant under $\pi'$ shifts the Hamiltonian by $\mathrm{Tr}\left(\tilde{\mathfrak{p}}_{21}\tilde{\mathfrak{q}}_{21}^{-1}\right)$, which is in our case $\mathrm{Tr}\left(\breve{\mathfrak{p}}_{32}\breve{\mathfrak{q}}_{32}^{-1} + \breve{\mathfrak{p}}_{41}\breve{\mathfrak{q}}_{41}^{-1}\right)$. Then it remains to restrict the Hamiltonian obtained on the set of points invariant under $\pi\circ \pi'$ and satisfying moment equations and substitute section from the reduction. Then we get
\begin{equation}
tH(\tilde{P}, \tilde{Q}; t) = \mathrm{Tr}\left(\tilde{P}^2\tilde{Q}^2 -\left(\tilde{P}^2 + (\ri g_4 + 2\ri g_3 + \ri g_{2} - 1)\tilde{P} - 4\right)\tilde{Q}  + (\ri g_2 + \ri g_3) \tilde{P}\right).
\end{equation}
Then after change of variables
\begin{equation}
Q = - t\tilde{P},\quad P = \frac{1}{t}\tilde{Q},\qquad s = -\frac{t}{4},
\end{equation}
we get
\begin{equation}
sH(P, Q; s) = \mathrm{Tr}\left( P^2 Q^2 - \left(Q^2 -\left( \left( \ri g_2+ \ri g_3\right)+ \left(\ri g_3+ \ri g_4\right)\right)
   Q-s\right)P- \left(\ri g_2+ \ri g_3\right) Q\right),
\end{equation}
which is the Hamiltonian of matrix $n\times n$ PIII$\mathrm{\big(D_{6}^{(1)}\big)}\big(1- \ri g_2+ \ri g_3, \ri g_2 + \ri g_3, 1 -  \ri g_3 - \ri g_4, \ri g_3 + \ri g_4\big)$.
\subsubsection{\texorpdfstring{PVI}{6} to \texorpdfstring{PVI}{6}}
\label{sssec:VI_VI_4}
\paragraph{Step 1.} Consider matrix $4n\times 4n$ PVI  with parameters $\alpha_0 = \alpha_1 = \alpha_3 = \alpha_4 = \epsilon$. In this case we take $w_{+} = \pi_1$ and $w_{-} = \pi_2$.

We can obtain $M_{\alpha}^{\bar{w}_{+}}$. This manifold is defined by equations \eqref{nonlinear_leaf_fixed_points} with $\nu = \alpha_2$. So we have the solution on an dense open subset
\begin{equation}
q = \begin{pmatrix}
    \breve{\mathfrak{q}}_{11} & \breve{\mathfrak{q}}_{12} & * & *\\ 
    \breve{\mathfrak{q}}_{21} & \breve{\mathfrak{q}}_{22} & * & *\\ 
    \breve{\mathfrak{q}}_{31} & \breve{\mathfrak{q}}_{32} & * & *\\
    \breve{\mathfrak{q}}_{41} & \breve{\mathfrak{q}}_{42} & * & *\\  
    \end{pmatrix},\;\;
p = \begin{pmatrix}
    \breve{\mathfrak{p}}_{11} & \breve{\mathfrak{p}}_{12} & * & *\\ 
    \breve{\mathfrak{p}}_{21} & \breve{\mathfrak{p}}_{22} & * & *\\ 
    \breve{\mathfrak{p}}_{31} & \breve{\mathfrak{p}}_{32} & * & *\\
    \breve{\mathfrak{p}}_{41} & \breve{\mathfrak{p}}_{42} & * & *\\  
    \end{pmatrix}.
\end{equation}
Here $\breve{\mathfrak{q}}_{ij}, \breve{\mathfrak{p}}_{ij}$ are $n\times n$ blocks and $*$'s are defined in terms of $\breve{\mathfrak{q}}_{ij}$ and $\breve{\mathfrak{p}}_{ij}$ by \eqref{nonlinear_case_solution}. Note that for this case in equations \eqref{nonlinear_case_solution} blocks $\tilde{\mathfrak{p}}_{ij}, \tilde{\mathfrak{q}}_{ij}$ have size $2n\times 2n$.

Let us obtain $M_{\alpha}^{\bar{G}} = M_{\alpha}^{\bar{w}_{+}}\cap M_{\alpha}^{\bar{w}_{-}} \subset M_{\alpha}^{\bar{w}_{+}}$. We have to solve on $M_{\alpha}^{\bar{w}_{+}}$
\begin{equation}
q-t =t(t-1) S_2 (q-t)^{-1} S_2^{-1},\;\;\;\; p = -\frac{1}{t(t-1)} S_2 \left((q-t)(p(q-t)+ \alpha_2)\right) S_2^{-1}.
\end{equation}
On the open dense subset where $\breve{\mathfrak{q}}_{21}, \breve{\mathfrak{q}}_{31}, \breve{\mathfrak{q}}_{41}$ are invertible these equations can be solved by
\begin{equation}\label{C2xC2_P6_fixed_points}
\begin{aligned}
\breve{\mathfrak{q}}_{12}& = (\breve{\mathfrak{q}}_{11} - t)(\breve{\mathfrak{q}}_{11} - 1)\breve{\mathfrak{q}}_{21}^{-1},\;\;\;\; \breve{\mathfrak{q}}_{22} = \breve{\mathfrak{q}}_{21}\breve{\mathfrak{q}}_{11}\breve{\mathfrak{q}}_{21}^{-1},\\
\breve{\mathfrak{q}}_{32}& = \breve{\mathfrak{q}}_{31}(\breve{\mathfrak{q}}_{11} - t)\breve{\mathfrak{q}}_{21}^{-1},\;\;\;\; \breve{\mathfrak{q}}_{42} = \breve{\mathfrak{q}}_{41}(\breve{\mathfrak{q}}_{11} - 1)\breve{\mathfrak{q}}_{21}^{-1},\\
\breve{\mathfrak{p}}_{11}& = \breve{\mathfrak{p}}_{22} = 0,\;\;\;\; \breve{\mathfrak{p}}_{31} = -\breve{\mathfrak{q}}_{31}(\breve{\mathfrak{q}}_{21}^{-1}\breve{\mathfrak{p}}_{21} + \breve{\mathfrak{q}}_{41}^{-1}\breve{\mathfrak{p}}_{41}),\\
\breve{\mathfrak{p}}_{12}& = -\left((\breve{\mathfrak{q}}_{11} - 1)\breve{\mathfrak{q}}_{31}^{-1}\breve{\mathfrak{p}}_{32} + (\breve{\mathfrak{q}}_{11} - t)\breve{\mathfrak{q}}_{41}^{-1}\breve{\mathfrak{p}}_{42}\right).
\end{aligned}
\end{equation}
So matrix coordinates on $M_{\alpha, t}^{\bar{G}}$ are $\breve{\mathfrak{q}}_{11}, \breve{\mathfrak{q}}_{21}, \breve{\mathfrak{q}}_{31}, \breve{\mathfrak{q}}_{41}, \breve{\mathfrak{p}}_{21}, \breve{\mathfrak{p}}_{41}, \breve{\mathfrak{p}}_{32}, \breve{\mathfrak{p}}_{42}$. Let us denote embedding by $\iota:M_{\alpha}^{\bar{G}}\rightarrow M_{\alpha}$. Note that on $M_{\alpha}^{\bar{G}}$ blocks of $q$ depend only of $\breve{\mathfrak{q}}_{11}, \breve{\mathfrak{q}}_{21}, \breve{\mathfrak{q}}_{31}, \breve{\mathfrak{q}}_{41}, t$, hence
\begin{equation}\label{C2xC2_P6_1form_restriction}
\iota^{*}\Theta = \mathrm{Tr}\left(\hat{\mathfrak{p}}_{11}\rd \breve{\mathfrak{q}}_{11} + \hat{\mathfrak{p}}_{12}\rd \breve{\mathfrak{q}}_{21} + \hat{\mathfrak{p}}_{13}\rd \breve{\mathfrak{q}}_{31} + \hat{\mathfrak{p}}_{14}\rd \breve{\mathfrak{q}}_{41}\right) - F\rd t,
\end{equation}
for certain $\hat{\mathfrak{p}}_{11}, \hat{\mathfrak{p}}_{12}, \hat{\mathfrak{p}}_{13}, \hat{\mathfrak{p}}_{14}, F$. One can calculate all of them, but we will need only $\hat{\mathfrak{p}}_{11}, F$
\begin{align}
\hat{\mathfrak{p}}_{11}& = -2\left(t\breve{\mathfrak{q}}_{21}^{-1}\breve{\mathfrak{p}}_{21} + (t-1)\breve{\mathfrak{q}}_{41}^{-1}\breve{\mathfrak{p}}_{41}\right),\\
F& = \frac{1}{t-1}\mathrm{Tr}\left((1-\breve{\mathfrak{q}}_{11})\breve{\mathfrak{q}}_{21}^{-1}\breve{\mathfrak{p}}_{21} + (t-1)\breve{\mathfrak{q}}_{41}^{-1}\breve{\mathfrak{p}}_{41} + \breve{\mathfrak{q}}_{41}^{-1}\breve{\mathfrak{p}}_{42}\breve{\mathfrak{q}}_{21} + \breve{\mathfrak{q}}_{31}^{-1}\breve{\mathfrak{p}}_{32}\breve{\mathfrak{q}}_{21} + \alpha_2\right).
\end{align}
\paragraph{Step 2.} We have a Hamiltonian action of $\mathrm{GL}_{n}^{4}\left(\mathbb{C}\right)$ on $M_{\alpha, t}^{\bar{G}}$
\begin{equation}
h=\mathrm{diag}(h_1, h_2, h_3, h_4): (\{\breve{\mathfrak{q}}_{ij}\}, \{\breve{\mathfrak{p}}_{ij}\}) \mapsto (\{h_i \breve{\mathfrak{q}}_{ij} h_{j}^{-1}\}, \{ h_i \breve{\mathfrak{p}}_{ij} h_{j}^{-1} \}).
\end{equation}
The moment map of this action is given by blocks of block--diagonal matrix $[p, q]$
\begin{equation}
\begin{aligned}
m_1 &= 2t(\breve{\mathfrak{q}}_{11} - 1)\breve{\mathfrak{q}}_{21}^{-1}\breve{\mathfrak{p}}_{21} + 2(t-1)\breve{\mathfrak{q}}_{11}\breve{\mathfrak{q}}_{41}^{-1}\breve{\mathfrak{p}}_{41} - \alpha_2, \\
m_2 &= 2t\breve{\mathfrak{q}}_{21}\breve{\mathfrak{q}}_{41}^{-1}\breve{\mathfrak{p}}_{42} - 2\breve{\mathfrak{q}}_{21}\breve{\mathfrak{q}}_{31}^{-1}\breve{\mathfrak{p}}_{32} - \alpha_2, \\
m_3 &= 2 \breve{\mathfrak{p}}_{32}\breve{\mathfrak{q}}_{21}\breve{\mathfrak{q}}_{31}^{-1} - 2t\breve{\mathfrak{q}}_{31}(\breve{\mathfrak{q}}_{21}^{-1}\breve{\mathfrak{p}}_{21} + \breve{\mathfrak{q}}_{41}^{-1}\breve{\mathfrak{p}}_{41})(\breve{\mathfrak{q}}_{11}-1)\breve{\mathfrak{q}}_{31}^{-1} + \alpha_2,\\
m_{4} &= 2\breve{\mathfrak{p}}_{41}(\breve{\mathfrak{q}}_{11} - t)\breve{\mathfrak{q}}_{41}^{-1} + 2t\breve{\mathfrak{p}}_{42}\breve{\mathfrak{q}}_{21}\breve{\mathfrak{q}}_{41}^{-1} + \alpha_2.
\end{aligned}
\end{equation}

\paragraph{Step 3.}  Let us perform Hamiltonian reduction with respect to $\mathrm{GL}_{n}^3\left(\mathbb{C}\right) = \{\mathrm{diag}(1, h_2, h_3, h_4) | h_2, h_3, h_4 \in \mathrm{GL}_{n}\left(\mathbb{C}\right)\}$. We fix the moment map value as follows $m_2 = \ri g_2\mathbf{1}_{n\times n},\;\; m_3 = \ri g_3\mathbf{1}_{n\times n},\;\;  m_4 = \ri g_4\mathbf{1}_{n\times n}.$

Darboux coordinates on the reduction are 
\begin{equation}
(\tilde{P}, \tilde{Q}) = \left(\hat{\mathfrak{p}}_{11}, \breve{\mathfrak{q}}_{11}\right) = \left(-2\left(t\breve{\mathfrak{q}}_{21}^{-1}\breve{\mathfrak{p}}_{21} + (t-1)\breve{\mathfrak{q}}_{41}^{-1}\breve{\mathfrak{p}}_{41}\right), \breve{\mathfrak{q}}_{11}\right).
\end{equation}

\paragraph{Step 4.} From \eqref{C2xC2_P6_1form_restriction} it follows that the dynamics on $M_{\alpha}^{\bar{G}}$ is given by the Hamiltonian $\iota^{*}(H) + F$.

The Hamiltonian on the reduction in the coordinates
\begin{equation}
\breve{P} = \frac{\tilde{P}}{t-1},\;\; \breve{Q} = (t-1)\tilde{Q} + 1,\;\; s = \frac{1}{1-t},
\end{equation}
is given by 
\begin{multline}
s(s-1)H(\breve{P}, \breve{Q}; s) = \mathrm{Tr}\left(\breve{P}\breve{Q}(\breve{Q}-s)\breve{P}(\breve{Q}-1) - \left((\ri g_3 + \ri g_4 - 1)\breve{Q}(\breve{Q}-1) + (\ri g_2 +\ri g_4)\breve{Q}(\breve{Q}-s)+\right.\right. \\ \left.\left.+ (\ri g_2 + \ri g_3) (\breve{Q}-1)(\breve{Q}-s) \right) \breve{P} + \frac{1}{4} \left(1 {-}2 \ri g_2{-}2 \ri g_3{-}2 \ri g_4{-}4 \epsilon\right) \left(1{-}2 \ri
   g_2{-}2 \ri g_3{-}2 \ri g_4+4 \epsilon\right) \breve{Q} \right).
\end{multline}
After change of coordinates
\begin{equation}
P = s^{-[\breve{P}, \breve{Q}]}\breve{P}s^{[\breve{P}, \breve{Q}]}, \;\; Q = s^{-[\breve{P}, \breve{Q}]}\breve{Q}s^{[\breve{P}, \breve{Q}]},
\end{equation}
we get
\begin{multline}
s(s-1)H(P, Q; s)=\mathrm{Tr}\Big(PQ(Q-1)P(Q-s) - \left( (\ri g_3 + \ri g_4 - 1)Q(Q-1) + (\ri g_2 + \ri g_4)Q(Q-s)\right. + \\ + \left. (\ri g_2 + \ri g_3) (Q-1)(Q-s) \right) P +  \frac{1}{4} \left(1 {-}2 \ri g_2{-}2 \ri g_3{-}2 \ri g_4{-}4 \epsilon \right) \left(1 {-}2 \ri g_2{-}2 \ri g_3{-}2 \ri g_4+4 \epsilon\right) Q \Big),
\end{multline}
which is the Hamiltonian of matrix $n\times n$ PVI$\big(\ri g_3{+}\ri g_4, 4\epsilon, \frac{1}{2}\left(1 - 4\epsilon - 2(\ri g_2{+}\ri g_3{+}\ri g_4) \right),\ri g_2{+}\ri g_4, \ri g_2{+}\ri g_3\big)$.

\section{Application to Calogero--Painlev\'e systems}
\label{sec:CP}
\subsection{From Matrix Painlev\'e to Calogero--Painlev\'e}\label{ssec:CP_systems_intro}
Every matrix $N\times N$ Painlev\'e system defined above corresponds to Calogero--Painlev\'e system. We briefly recall its construction following \cite{BCR17}.

Consider phase space of matrix Painlev\'e system, which is $M_{\alpha, t} = \{(p, q)\in \mathrm{Mat}_{N\times N}^{2}\left(\mathbb{C}\right)\}$. There is an action of $\mathrm{GL}_{N}\left(\mathbb{C}\right)$ by overall conjugation of $p$ and $q$. This action is Hamiltonian and the moment map equals $\mu_{N}(p, q) = [p,q]$. Let us define phase space of corresponding Calogero--Painlev\'e system as a Hamiltonian reduction
\begin{equation}\label{CP_phase_space_def}
\mathsf{M}_{\alpha, t} = M_{\alpha, t}//_{\mathbf{O}_{N,g}}\mathrm{GL}_{N}\left(\mathbb{C}\right).
\end{equation}
Here $\mathbf{O}_{N,g}$ is a coadjoint orbit $\mathbf{O}_{N,g} = \{\ri g(\mathbf{1}_{N\times N} - \xi \otimes \eta)| \xi \in \mathrm{Mat}_{N\times 1}\left(\mathbb{C}\right), \eta \in \mathrm{Mat}_{1\times N}\left(\mathbb{C}\right), \eta\xi = N\}$, $g\in \mathbb{C}$.

Let $H$ be the Hamiltonian of matrix $N\times N$ Painlev\'e system. The Hamiltonian $H$ is invariant with respect to the action of $\mathrm{GL}_{N}\left(\mathbb{C}\right)$, thus $H$ defines a Hamiltonian dynamics on $\mathsf{M}_{\alpha}$, which is called dynamics of the corresponding Calogero--Painlev\'e system.

The open dense subset of $\mathsf{M}_{\alpha, t}$ can be described as the phase space of system of particles considered up to permutations. This space is $\mathrm{T}^*\left(\left(\mathbb{C}^N\backslash\mathrm{diags}\right)/ \mathrm{S}_{N}\right)$, its points are sets $\{(p_j, q_j)\}_{j = 1,..., N}$ where $q_{j}$'s are distinct. Let us consider a map $\zeta_{N}: \mathrm{T}^*\left(\left(\mathbb{C}^N\backslash\mathrm{diags}\right)/ \mathrm{S}_{N}\right) \rightarrow M_{\alpha, t}$
\begin{align}\label{CP_section}
\zeta_N: \{(p_j, q_j)\}_{j = 1,..., N} \mapsto \Bigg(
\begin{pmatrix}
p_1& \frac{\mathrm{i}g}{q_1 {-} q_2}& \dots & \dots & \frac{\mathrm{i}g}{q_1 {-} q_N}\\
\frac{\mathrm{i}g}{q_2 {-} q_1} & p_2&  \frac{\mathrm{i}g}{q_2 {-} q_3}& \ddots & \frac{\mathrm{i}g}{q_2 {-} q_N}\\
\vdots & \frac{\mathrm{i}g}{q_3 {-} q_2} &  \ddots & \ddots & \vdots\\
\vdots & \ddots & \ddots & \ddots & \frac{\mathrm{i}g}{q_{N{-}1} {-} q_N}\\
\frac{\mathrm{i}g}{q_{N} {-} q_1}& \dots & \dots & \frac{\mathrm{i}g}{q_N {-} q_{N{-}1}} & p_N
\end{pmatrix},\;\;
\begin{pmatrix}
q_1& 0& \dots & \dots & 0\\
0& q_2& 0& \ddots & \vdots\\
\vdots & 0 &  \ddots & \ddots & \vdots\\
\vdots & \ddots & \ddots & \ddots & 0\\
0& \dots & \dots & 0 & q_N
\end{pmatrix}\Bigg).
\end{align}
Note that $\mathrm{Im}(\zeta_{N})\subset \mu_{N}^{-1}\left(\{\ri g\left(\mathbf{1}_{N\times N} - v_{N}\otimes v_{N}^t\right)\}\right)$, where $v_{N} = \begin{pmatrix} 1\\ \vdots \\ 1\end{pmatrix}$.

Let us denote by $[(p,q)]$ the orbit of $(p,q)$ with respect to the action of $\mathrm{GL}_{N}\left(\mathbb{C}\right)$. Then consider a map 
\begin{equation}
\begin{aligned}
\tilde{\zeta}_{N}:\mathrm{T}^*\left(\left(\mathbb{C}^N\backslash\mathrm{diags}\right)/ \mathrm{S}_{N}\right) &\rightarrow \mathsf{M}_{\alpha, t}\\
 \{(p_j, q_j)\}_{j = 1,..., N} &\mapsto [\zeta_{N}(\{(p_j, q_j)\})].
\end{aligned}
\end{equation}
 The map $\tilde{\zeta}_{N}$ is injective and the image of $\tilde{\zeta}_{N}$ consists of classes $[(p, q)]$ such that $q$ is diagonalisable with different eigenvalues. Let us use the notations $\mathrm{Im}(\tilde{\zeta}_{N}) = \mathsf{M}^{\mathrm{reg}}_{\alpha, t},\;\; M_{\alpha,t}^{\mathrm{reg}} = \{(p, q)\in \mathrm{Mat}_{N\times N}^{2}\left(\mathbb{C}\right): [(p,q)] \in\mathsf{M}^{\mathrm{reg}}_{\alpha, t}\}$. We have the inverse map
\begin{equation}
\begin{aligned}
\pi_{N}:	 M^{\mathrm{reg}}_{\alpha, t}\cap \mu_{N}^{-1}\left(\mathbf{O}_{N, g}\right) &\rightarrow \mathrm{T}^*\left(\left(\mathbb{C}^N\backslash\mathrm{diags}\right)/ \mathrm{S}_{N}\right)\\
	(p, q) &\rightarrow \{(p_j, q_j)\}_{j = 1,..., N}.
\end{aligned}
\end{equation}
Here $q_{j}$'s are eigenvalues of $q$ and $p_{j}$ is the $j$-th element on the diagonal of $p$ in a basis where $q = \mathrm{diag}\left(q_1, ..., q_N\right)$.

B\"acklund transformations of matrix Painlev\'e systems obtained in Sec. \ref{sec:Backlund} are rational in $p,q,t$ and thus do commute with the action of $\mathrm{GL}_{N}\left(\mathbb{C}\right)$. Also B\"acklund transformations preserve $[p,q]$. Thus we have
\begin{prop}
 B\"acklund transformation of matrix Painlev\'e system defines B\"acklund transformation for corresponding Calogero--Painlev\'e system.
\end{prop}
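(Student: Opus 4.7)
The plan is to promote the matrix B\"acklund transformation $\pi$ to a map on the reduced space $\mathsf{M}_{\alpha,t}$ in two clean stages: first show that $\pi$ restricts to the level set $\mu_N^{-1}(\mathbf{O}_{N,g})$, then use $\mathrm{GL}_N(\mathbb{C})$-equivariance to descend it to the quotient, and finally read off the B\"acklund property from the descended symplectic/Hamiltonian data.

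First, I would make precise the two facts stated before the proposition. Rationality of $\pi_p(p,q,t),\pi_q(p,q,t)$ in the noncommutative variables $p,q$ (with coefficients depending only on $t$ and the scalar parameters $\alpha$) implies that for every $g\in\mathrm{GL}_N(\mathbb{C})$,
\begin{equation*}
\pi\bigl(g p g^{-1},\, g q g^{-1},\, t\bigr)=\bigl(g \pi_p(p,q,t)g^{-1},\, g\pi_q(p,q,t)g^{-1},\, t\bigr),
\end{equation*}
that is, $\pi$ is $\mathrm{GL}_N(\mathbb{C})$-equivariant on its domain of definition (the complement of the loci where the inverse matrices in the formulas cease to exist). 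Case-by-case inspection of Tables~\ref{table:matrix_P2_Backlund}--\ref{table:matrix_P4_Backlund} shows that $\pi$ also preserves the commutator $[p,q]$; equivalently, $\mu_N\circ\pi=\mu_N$. Consequently $\pi$ sends each fiber $\mu_N^{-1}(X)$ into itself for every $X\in\mathbf{O}_{N,g}$, and in particular restricts to an equivariant self-map of $\mu_N^{-1}(\mathbf{O}_{N,g})$ (wherever it is defined).

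Next I would descend. Equivariance guarantees that $\pi$ passes to a well-defined rational map
\begin{equation*}
\hat{\pi}\colon \mathsf{M}_{\alpha,t} \dashrightarrow \mathsf{M}_{\tilde{\pi}(\alpha),t},\qquad [(p,q)]\mapsto [\pi(p,q,t)],
\end{equation*}
on a dense open subset (the BT are rational in the matrix entries, so the resulting map on the reduction is likewise rational and defined off a proper Zariski-closed subset). By construction $\hat{\pi}$ intertwines the parameter map $\tilde{\pi}$. To see that $\hat{\pi}$ is a B\"acklund transformation of the Calogero--Painlev\'e system, I would invoke the symmetry condition~\eqref{symm_cond_2_form} satisfied by $\pi$ on $M_{\alpha}\times\mathbb{C}^k$. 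Writing $\omega_{\mathrm{red}}$ and $H^{\mathrm{red}}_\alpha$ for the symplectic form and Hamiltonian on $\mathsf{M}_{\alpha,t}$ obtained by restriction to $\mu_N^{-1}(\mathbf{O}_{N,g})$ and descent, the standard functoriality of Hamiltonian reduction under equivariant symplectomorphisms yields
\begin{equation*}
\hat{\pi}^{*}\bigl(\omega_{\mathrm{red}}-\mathrm{d}H^{\mathrm{red}}_{\tilde{\pi}(\alpha)}\wedge\mathrm{d}t\bigr)=h(x,t,\alpha)\bigl(\omega_{\mathrm{red}}-\mathrm{d}H^{\mathrm{red}}_\alpha\wedge\mathrm{d}t\bigr),
\end{equation*}
with the same scalar factor $h$ as in~\eqref{symm_cond_2_form}. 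This is exactly the B\"acklund condition for the reduced system, so $\hat{\pi}$ maps solutions of the Calogero--Painlev\'e system with parameter $\alpha$ to solutions with parameter $\tilde{\pi}(\alpha)$.

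The only mild subtlety I anticipate concerns loci of non-definition: some formulas in the tables involve inverses of matrices (such as $q^{-1}$, $p^{-1}$ or $(p+t)^{-1}$), and on the open stratum $\mathsf{M}^{\mathrm{reg}}_{\alpha,t}$ these correspond to conditions on the eigenvalues of $q$ (or $p$, etc.) being nonzero, which cut out a proper closed subset. Thus $\hat{\pi}$ is a birational B\"acklund transformation, exactly as in the scalar case; this is in line with the usual convention that B\"acklund transformations of Painlev\'e equations are rational maps defined away from a divisor. No further obstacle arises, because equivariance and preservation of the moment map are purely algebraic identities and hence continue to hold wherever both sides make sense.
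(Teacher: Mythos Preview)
Your proposal is correct and follows exactly the same approach as the paper: the paper's entire argument is the two sentences preceding the proposition, namely that rationality in $p,q,t$ gives $\mathrm{GL}_N(\mathbb{C})$-equivariance and that the tables show $[p,q]$ is preserved, whence the transformation descends to the reduction. You have simply written out in detail what the paper leaves as an immediate consequence of these two facts.
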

Let $w$ be a B\"acklund transformation of a matrix Painlev\'e system. The corresponding transformation of the Calogero--Painlev\'e system can be written as $\pi_N \circ w\circ \zeta_{N} = \mathsf{w}$.
\begin{remark}
Let us denote $w((p,q)) = (\tilde{p}, \tilde{q})$. Then 
\begin{equation}
[(p,q)]\in \mathsf{M}_{\alpha, t}^{\pi_N \circ w\circ \zeta_{N}} \Leftrightarrow	 \pi_N((p,q)) = \pi_N((\tilde{p}, \tilde{q})) \Leftrightarrow	 \exists S\in \mathrm{GL}_{N}\left(\mathbb{C}\right):\;\; (\tilde{p}, \tilde{q}) = (SpS^{-1}, SqS^{-1}).
\end{equation}
\end{remark}

Let us illustrate constructions above by an example (in addition to Example \ref{ex:intro_Calogero})

\begin{example}
To obtain Calogero--Painlev\'e  $\mathrm{III\big(D_6^{(1)}\big)}$ Hamiltonian one should restrict the corresponding matrix Painlev\'e $\mathrm{III\big(D_6^{(1)}\big)}$ Hamiltonian \eqref{MPIIID6} on the image of the map \eqref{CP_section}. In this way we obtain
\begin{equation}\label{Ham_CPIIID6}
t H_N(\{(p_i,q_i)\};t)= \sum_{i=1}^N (p_i^2 q_i^2+(-q_i^2+(\alpha_1+\beta_1)q_i+t)p_i-\alpha_1q_i)+ g^2\sum_{1\leq j<i\leq N} \frac{q_i^2+q_j^2}{(q_i-q_j)^2}.
\end{equation}

Note that this Hamiltonian is not of the physical form \eqref{Ham_CPII}, to obtain such form, one should make 
logarithmic change of variables
\begin{equation}
	\mathsf{t}=\log t, \qquad \mathsf{q}_i=\log q_i+\frac{\mathsf{t}}2, \quad 
	\mathsf{p}_i=p_iq_i-\frac{q_i}2+\frac{t}{2q_i}+\frac{\alpha_1+\beta_1}2,
\end{equation}
which gives
\begin{equation}\label{Ham_CPIIID6_phys}
	H_{phys}=\sum_{i=1}^N \left(\mathsf{p}_i^2-e^{\mathsf{t}} \sinh^2 \mathsf{q}_i+e^{\mathsf{t}/2} \left(\left(\beta_1{+}\frac12\right) \cosh \mathsf{q}_i-\left(\alpha_1{+}\frac12\right)\sinh \mathsf{q}_i \right) \right)+\!\!\sum_{1\leq j<i \leq N}\frac{g^2}{2\sinh^2 \left(\frac{\mathsf{q}_i-\mathsf{q}_j}2\right)}.
\end{equation}
So this is system of trigonometric Calogero type, in difference with rational Calogero--Painlev\'e $\mathrm{II}$ \eqref{Ham_CPII_2}, \eqref{Ham_CPII}.
However, Hamiltonian \eqref{Ham_CPIIID6} is
more convenient for us than \eqref{Ham_CPIIID6_phys} because it is given in terms of rational functions.
\end{example}

\subsection{Reduction at Calogero--Painlev\'e level}\label{ssec:redCP}
Let $G$ be a group of symmetries of a certain matrix Painlev\'e system and let $\mathsf{G}$ be the group of corresponding symmetries of the Calogero--Painlev\'e system. Then $\mathsf{M}_{\alpha}^{\mathsf{G}}$ is preserved by the Calogero--Painlev\'e dynamics. We aim to study the dynamics on this subset.

Consider a certain matrix $n|G|\times n|G|$ Painlev\'e system with the finite group $G$ of its symmetries from Table \ref{table:automorphisms} or from Table \ref{table:automorphisms_C2xC2}.
\begin{theorem}\label{thm:CP_red}
Let $G\cong C_2$ or $G\cong C_2 \times C_2$. Then there is an open subset $U \subset \left(\mathsf{M}^{\mathrm{reg}}_{\alpha}\right)^{\mathsf{G}}$ such that
\begin{itemize}
\item The dynamics on $U$ is equivalent to the dynamics of the $n$--particle Calogero--Painlev\'e system called by the Image in Tables \ref{table:automorphisms}, \ref{table:automorphisms_C2xC2}
with coupling constant $|G|g$.
\item $U$ is open and dense on the connected component of the largest dimension in $\left(\mathsf{M}^{\mathrm{reg}}_{\alpha}\right)^{\mathsf{G}}$.
\end{itemize}
\end{theorem}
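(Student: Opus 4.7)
The plan is to deduce Theorem~\ref{thm:CP_red} from Theorem~\ref{thm:gen_constr} by commuting two Hamiltonian reductions: the $\mathrm{GL}_{nd}(\mathbb{C})$-reduction at $\mathbf{O}_{nd,g}$ that defines the Calogero--Painlev\'e phase space, and the block reduction of Theorem~\ref{thm:gen_constr} by $\mathrm{GL}_n^{d-1}(\mathbb{C})$ (or $\mathrm{GL}_n^{3}(\mathbb{C})$ in the $C_2 \times C_2$ case). Throughout I set $d := |G|$.

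First I would lift a $\mathsf{G}$-invariant class to a canonical matrix representative. For $[(p,q)] \in (\mathsf{M}^{\mathrm{reg}}_\alpha)^{\mathsf{G}}$ and each generator $w \in G$, invariance yields $S_w \in \mathrm{GL}_{nd}(\mathbb{C})$ with $w(p,q) = (S_w p S_w^{-1}, S_w q S_w^{-1})$. Since every generator is an involution, $S_w^2$ commutes with $(p,q)$; by regularity this commutant is scalar, and after rescaling $S_w^2 = \mathbf{1}$. On the top-dimensional component of $(\mathsf{M}^{\mathrm{reg}}_\alpha)^{\mathsf{G}}$ the joint eigenspace decomposition of the $S_w$'s splits $\mathbb{C}^{nd}$ into $d$ summands of equal dimension $n$ (a dimension count shows that any unbalanced multiplicity pattern gives a strictly smaller stratum). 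By conjugation in $\mathrm{GL}_{nd}(\mathbb{C})$ I align the intertwiners with the standard block matrices ($S_d$ for $G = C_d$, or $S_+, S_-$ for $G = C_2 \times C_2$) of Sec.~\ref{sec:block_reductions}, so the representative $(p,q)$ lies in $M^{\bar{G}}_\alpha$.

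Next I analyze the moment condition and apply Theorem~\ref{thm:gen_constr}. By Remark~\ref{rem:moment_bl_diag}, $[p,q]$ is block-diagonal in the resulting $d$-fold decomposition; the Calogero--Painlev\'e constraint forces $[p,q] = \ri g(\mathbf{1}_{nd} - \xi \otimes \eta)$ with $\eta\xi = nd$, so the rank-one piece $\xi \otimes \eta$ must live in a single block, which the block-permuting subgroup of the normalizer of $S_d$ places in block $1$. The remaining blocks satisfy $m_j = \ri g\,\mathbf{1}_n$ for $j \geq 2$, exactly the hypothesis of Theorem~\ref{thm:gen_constr} with $g_j = g$; that theorem then identifies the dynamics with the $n \times n$ matrix Painlev\'e system of the Image type in Darboux coordinates $(P,Q)$, and the block-$1$ copy of $\mathrm{GL}_n(\mathbb{C})$ survives and acts on $(P,Q)$ by simultaneous conjugation.

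To complete the reduction I would compute the surviving moment map by tracking the substitutions of Sec.~\ref{sec:block_reductions}. The model calculation for PII reads
\begin{equation*}
	[\tilde P, \tilde Q] \;=\; \mathfrak{p}_{12}\mathfrak{q}_{21} - \mathfrak{q}_{12}\mathfrak{q}_{21}\mathfrak{p}_{12}\mathfrak{q}_{12}^{-1} \;=\; m_1 + \ri g\,\mathbf{1}_n,
\end{equation*}
where the second equality uses $m_2 = \ri g\,\mathbf{1}_n$, and the subsequent affine shifts $Q = -2^{-1/3}\tilde P$, $P = \breve P - Q^2 - s/2$ preserve the commutator. The expected uniform identity in every case is $[P,Q] = m_1 + \ri g(d-1)\mathbf{1}_n$; substituting $m_1 = \ri g(\mathbf{1}_n - \xi_1 \otimes \eta_1)$ with $\eta_1\xi_1 = nd$ rewrites the right-hand side as $\ri(dg)(\mathbf{1}_n - \xi' \otimes \eta')$ with $\xi' = \xi_1$, $\eta' = \eta_1/d$ and $\eta'\xi' = n$, so $[P,Q] \in \mathbf{O}_{n,\, dg}$. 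The final $\mathrm{GL}_n(\mathbb{C})$-reduction then yields the $n$-particle Calogero--Painlev\'e system of the Image type with coupling $|G|g$. The open set $U$ is carved out by: $q$ being diagonalizable with distinct eigenvalues, nondegeneracy of the inverted blocks in the non-linear cases of Sec.~\ref{ssec:nonlinear_cases}, and balanced eigenvalue multiplicities for the $S_w$'s; these conditions single out an open dense subset of the top-dimensional component. The main obstacle I anticipate is the case-by-case verification of the shift formula $[P,Q] = m_1 + \ri g(d-1)\mathbf{1}_n$ for each row of Tables~\ref{table:automorphisms} and \ref{table:automorphisms_C2xC2}; the non-linear and $C_2 \times C_2$ rows involve more intricate substitutions, though everything reduces to finitely many noncommutative block identities.
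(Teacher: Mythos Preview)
Your strategy is the paper's: deduce the Calogero--Painlev\'e statement from Theorem~\ref{thm:gen_constr} by commuting the $\mathrm{GL}_{nd}$ reduction at $\mathbf{O}_{nd,g}$ with the block reduction by $\mathrm{GL}_n^{d-1}$, and then checking the shift identity $[P,Q]=m_1+\ri g(d-1)\mathbf{1}_n$ so that the residual moment map lands in $\mathbf{O}_{n,dg}$. The paper organizes this as a commutative square (Fig.~\ref{fig:CP_red_thm}) and splits the argument into the same ingredients you list: lift to $M_\alpha^{\bar G}$, block-diagonality of $[p,q]$, placement of the rank-one piece in a single block, the shift formula, and the final $\mathrm{GL}_n$ reduction.

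The one substantive difference is how the top stratum is singled out. The paper works directly at the particle level: it observes that each $\mathsf{w}$ has the form $(p_j,q_j)\mapsto (a_{\mathsf w}(q_j,t)p_j+b_{\mathsf w}(q_j,t),\,c_{\mathsf w}(q_j,t))$, so a $\mathsf w$-fixed configuration determines a permutation $\sigma\in S_{2n}$ with $\sigma^2=\mathrm{Id}$, and an explicit count gives $\dim U_{[2^k1^{2n-2k}]}=2k$, maximized at $k=n$. You instead argue at the matrix level via eigenvalue multiplicities of the intertwiner $S_w$. The two pictures are equivalent (cycle type $[2^k1^{2n-2k}]$ corresponds to $S_\sigma$ having $(-1)$-eigenspace of dimension $k$), but the paper's count is self-contained, whereas your ``dimension count shows'' still owes the reader the computation of $\dim M_\alpha^{\bar w}$ for each multiplicity pattern together with the Calogero reduction. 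One point you leave implicit that the paper makes explicit (Step~3c) is well-definedness of the induced map $\varphi$: two lifts of the same class differ by an element of $\mathrm{GL}_{2n}$ that one shows must be block-diagonal (using that $S_2^{-1}B^{-1}S_2B$ centralizes a regular pair, hence is scalar), so the ambiguity is absorbed by the two successive $\mathrm{GL}_n$ reductions. Your sketch is correct once this is filled in.
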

In general $\left(\mathsf{M}^{\mathrm{reg}}_{\alpha}\right)^{\mathsf{G}}$ is not connected. We will see in the proof that the component of largest dimension in $\left(\mathsf{M}^{\mathrm{reg}}_{\alpha}\right)^{\mathsf{G}}$ is unique.

\begin{proof}
In the proof we combine two different Hamiltonian reductions. First, recall that above we defined reduction \eqref{CP_phase_space_def}, with the corresponding moment map $\mu_{N}:M^{\mathrm{reg}}_{\alpha}\rightarrow \mathfrak{gl}_{N}\left(\mathbb{C}\right)$ and the projection $\pi_N: \mu_{N}^{-1}\left(\mathbf{O}_{N,g}\right)\cap M_{\alpha, t}^{\mathrm{reg}}\rightarrow \mathsf{M}_{\alpha, t}^{\mathrm{reg}}$. Second, in the setting of Theorem \ref{thm:gen_constr} we have the corresponding moment map $\mathbf{m}:M_{\alpha}^{\bar{G}}\rightarrow \left(\mathfrak{gl}_{n}\left(\mathbb{C}\right)\right)^{|\bar{G}| - 1}$ and the projection $\mathrm{pr}: \mathbf{m}^{-1}\left(\mathbf{g}\right) \rightarrow \mathbb{M}_{\alpha}$. Recall that $\mathbf{m}$ is given just by the diagonal $n\times n$ blocks of $[p,q]$ from the second to the last one.

The main idea of the proof is to construct a map $\varphi$ such that the following diagram
\begin{figure}[h]
\begin{center}
\begin{tikzpicture}[scale=3]
\node (A) at (0,0.5) {$\mu_{|G|n}^{-1}\left(\mathbf{O}_{|G|n, g}\right)\cap \left(M_{\alpha}^{\mathrm{reg}}\right)^{\bar{G}}\cap \mathbf{m}^{-1}\left(\mathbf{g}\right)$};
\node (B) at (2,0.5) {$U$};
\node (C) at (0,0) {$\mu_{n}^{-1}\left(\mathbf{O}_{n, |G|g}\right)$};
\node (D) at (2,0) {$\mathsf{M}_{\beta}$};
\node [left=0 cm of A] {$M_{\alpha}^{\bar{G}} \supset$};
\node [right=0 cm of B] {$\subset \left(\mathsf{M}^{\mathrm{reg}}_{\alpha}\right)^{\mathsf{G}}$};
\node [left=0 cm of C] {$\mathbb{M}_{\alpha}\supset$};

\path[->,font=\scriptsize,>=angle 90]
(A) edge node[left]{$\mathrm{pr}$} (C)
(A) edge node[above]{$\pi_{|G|n}$} (B)
(C) edge node[below]{$\pi_{n}$} (D)
(B) edge [dashed] node [right] {$\varphi$} (D);
\end{tikzpicture}
\end{center}
\caption{Description of the map $\varphi$.}
\label{fig:CP_red_thm}
\end{figure}
is commutative.

Let us consider the cases $G \cong C_2$ only (for the cases $G \cong C_2\times C_2$ the proof is similar). Group $G\cong C_2$ generated by the transformation $w$. Let $\mathsf{w}$ be the corresponding transformation of the Calogero--Painlev\'e system.
\paragraph{Step 1.} Using explicit formulas for $w$ we see that $\mathsf{w}$ has special form.
\begin{equation}\label{Special_form_CP_Backlund}
\mathsf{w}: \{(p_j, q_j)\}_{j = 1,..., 2n}\mapsto \{(a_{\mathsf{w}}(q_j,t)p_j + b_{\mathsf{w}}(q_j,t), c_{\mathsf{w}}(q_j,t))\}_{j = 1,..., 2n}.
\end{equation}
Here $a_{\mathsf{w}}, b_{\mathsf{w}}, c_{\mathsf{w}}$ are certain rational functions. Invariance condition then can be written as follows.
\begin{equation}\label{CP_C2_fixed_point_eqns}
\exists \sigma\in S_{2n}: a_{\mathsf{w}}(q_i,t)p_i + b_{\mathsf{w}}(q_i,t) = p_{\sigma(i)},\;\;\;c_{\mathsf{w}}(q_i,t) = q_{\sigma(i)}.
\end{equation}
The permutation $\sigma$ is unique since $q_i$'s are pairwise distinct.

Only conjugacy class of $\sigma$ is well--defined since coordinates $((p_1,q_1), (p_2, q_2), ..., (p_{2n}, q_{2n}))$ are defined up to permutation. Hence for any point $x\in \left(\mathsf{M}_{\alpha}^{\mathrm{reg}}\right)^{\mathsf{w}}$ we have conjugacy class which we denote by $[\sigma_{x}]$. In notation $U_{[\sigma]} = \{x\in \left(\mathsf{M}_{\alpha}^{\mathrm{reg}}\right)^{\mathsf{w}} | [\sigma_{x}] = [\sigma]\}$, we get $\left(\mathsf{M}_{\alpha}^{\mathrm{reg}}\right)^{\mathsf{w}} = \sqcup_{[\sigma]}U_{[\sigma]}$.
\paragraph{Step 2.} Let us express $\sigma$ as the product of independent cycles. Recall that $\mathsf{w}$ is an involution, thus, taking square of $\mathsf{w}$, we get $q_{\sigma^2(i)} = q_{i}$, hence $\sigma^2 = \mathrm{Id}$. Then $\sigma$ is the product of independent transpositions. We denote the cyclic type of $\sigma$ by $[2^{k}1^{2n-2k}]$. Let us compute $\mathrm{dim}\left(U_{[2^{k}1^{2n-2k}]}\right)$. It is easy to see that each cycle in $\sigma$ imply system of equations \eqref{CP_C2_fixed_point_eqns} and systems for different cycles are independent.

Let $\sigma(i) = i$, then
\begin{equation}
a_{\mathsf{w}}(q_i,t)p_i + b_{\mathsf{w}}(q_i,t) = p_{i},\;\;\;c_{\mathsf{w}}(q_i,t) = q_{i}.
\end{equation}
Hence $(p_i(t), q_i(t))$ is an algebraic solution of the corresponding Painlev\'e equation.

If we have cycle $(l_1,l_2)$ in $\sigma$ then we get
\begin{subequations}
\begin{align}
a_{\mathsf{w}}(q_{l_1},t)p_{l_1} + b_{\mathsf{w}}(q_{l_1},t) &= p_{l_2},\;\;\;c_{\mathsf{w}}(q_{l_1},t) = q_{l_2},\label{C2_CP_inv_proof_transp_eq1}\\
a_{\mathsf{w}}(q_{l_2},t)p_{l_2} + b_{\mathsf{w}}(q_{l_2},t) &= p_{l_1},\;\;\;c_{\mathsf{w}}(q_{l_2},t) = q_{l_1}\label{C2_CP_inv_proof_transp_eq2}
\end{align}
\end{subequations}
Note that since $\mathsf{w}$ is an involution \eqref{C2_CP_inv_proof_transp_eq2} follows from  \eqref{C2_CP_inv_proof_transp_eq2}. Thus the cycle $(l_1,l_2)$ implies two independent equations.

As the result we get 
\begin{equation}
\mathrm{dim}\left(U_{[2^{k}1^{2n-2k}]}\right) = 4n - (2k + 2(2n - 2k)) = 2k.
\end{equation}
So $U:= U_{[2^{n}]}$ has the maximal dimension equal to $2n$.
\paragraph{Step 3a.} The map $\pi_{2n}$ should be surjective for the existence of $\varphi$. In other words we have to check $\forall x\in U: \pi_{2n}^{-1}(\{x\})\cap \left(M_{\alpha}^{\mathrm{reg}}\right)^{\bar{w}}\cap m_2^{-1}\left(\ri g \mathbf{1}_{n\times n}\right) \neq \varnothing$.

For $\sigma \in \mathrm{S}_{2n}$ let us denote by $S_{\sigma}$ the matrix corresponding to $\sigma$. Then for $x\in U_{[\sigma]}$ consider $(\breve{p},\breve{q}) = \zeta_{2n}(x)\in \pi_{2n}^{-1}(\{x\})$. Then we have
\begin{equation}
(S_{\sigma}\breve{p}S_{\sigma}^{-1}, S_{\sigma}\breve{q} S_{\sigma}^{-1}) = w(\breve{p}, \breve{q}).
\end{equation}
Since $x\in U_{[2^n]}$ we get  $\exists A\in \mathrm{GL}_{2n}\left(\mathbb{C}\right): AS_{\sigma}A^{-1} = S_{2}$. Recall that $S_2 = \mathrm{diag}\left(\mathbf{1}_{n\times n}, -\mathbf{1}_{n\times n}\right)$ was defined in Sec. \ref{ssec:gen_constr}. Then we get 
\begin{equation}\label{CP_C2_fixed_lift_Calogero_gauge}
(S_{2}A\breve{p}A^{-1}S_{2}^{-1}, S_{2}A\breve{q}A^{-1} S_{2}^{-1}) = w(A\breve{p}A^{-1}, A\breve{q}A^{-1})
\end{equation}
Therefore $\exists (\hat{p},\hat{q}) = (A\breve{p}A^{-1}, A\breve{q}A^{-1})\in \pi_{2n}^{-1}(\{x\}) \cap M_{\alpha, t}^{\bar{w}}$.

From Remark \ref{rem:moment_bl_diag} it follows that $[\hat{p}, \hat{q}]$ is block diagonal. Let us use the notation $[\hat{p}, \hat{q}] = \mathrm{diag}(m_1, m_2)$.

Since $(\hat{p}, \hat{q}) \in \mu_{2n}^{-1}\left(\mathbf{O}_{2n, g}\right)$ we get either $m_1 = \ri g \mathbf{1}_{n\times n}$ or $m_2 = \ri g \mathbf{1}_{n\times n}$. In the second case let us take $(p,q):=(\hat{p}, \hat{q}) \in m_{2}^{-1}\left(\ri g \mathbf{1}_{n\times n}\right)$. In the first case we take $(p, q) := (R\hat{p}R^{-1}, R\hat{q}R^{-1})$, where $ R = \begin{pmatrix}
0 & \mathbf{1}_{n\times n}\\
\mathbf{1}_{n\times n} & 0
\end{pmatrix}$ is a matrix, which preserves $\mu_{2n}^{-1}\left(\mathbf{O}_{2n, g}\right)\cap \left(M_{\alpha}^{\mathrm{reg}}\right)^{\bar{w}}$. In both cases we get 
\begin{equation}
(p,q) \in \pi_{2n}^{-1}(\{x\})\cap \left(M_{\alpha,t}^{\mathrm{reg}}\right)^{\bar{w}} \cap m_2^{-1}\left(\ri g\mathbf{1}_{n\times n}\right).
\end{equation}
\paragraph{Step 3b.} Let $(p,q) \in \pi_{2n}^{-1}(\{x\})\cap \left(M_{\alpha,t}^{\mathrm{reg}}\right)^{\bar{w}}\cap m_2^{-1}\left(\ri g \mathbf{1}_{n\times n}\right)$. Then we have to check that $\pi((p,q)) \in \mu_{n}^{-1}\left(\mathbf{O}_{n, |G|g}\right)$. 

We introduced coordinates $\tilde{P}, \tilde{Q}$ on $\mathbb{M}_{\alpha,t}$ on Steps 3 in Sec. \ref{ssec:linear_cases}, \ref{ssec:nonlinear_cases} and \ref{ssec:special_C2xC2}.
\begin{equation}
m_1 = [\tilde{P}, \tilde{Q}] - \ri g\mathbf{1}_{n\times n}.
\end{equation}
For the final coordinates $P, Q$ in each case one can check that 
\begin{equation}
[\tilde{P}, \tilde{Q}] = [P, Q].
\end{equation}
Since $[p,q] = \mathrm{diag}(m_1, \ri g\mathbf{1}_{n\times n})\in \mu_{2n}^{-1}\left(\mathbf{O}_{2n, g}\right)$ we get
\begin{equation}
[P, Q] - \ri g\mathbf{1}_{n\times n} = \ri g\left(\mathbf{1}_{n\times n} - \xi\otimes \eta\right),\;\; \eta\xi = 2n,
\end{equation}
which means $(P,Q) \in \mu_{n}^{-1}\left(\mathbf{O}_{n, 2g}\right)$. 
\paragraph{Step 3c.} Let us define $\varphi(x) = \pi_n \circ \mathrm{pr}\left(\left(p, q\right)\right)$. We have to check that the right side does not depend on the choice of $(p,q)\in \pi_{2n}^{-1}(\{x\})\cap \left(M_{\alpha,t}^{\mathrm{reg}}\right)^{\bar{w}} \cap m_2^{-1}\left(\ri g\mathbf{1}_{n\times n}\right)$.

Consider $(p', q')\in \pi_{2n}^{-1}(\{x\})\cap \left(M_{\alpha,t}^{\mathrm{reg}}\right)^{\bar{w}} \cap m_2^{-1}\left(\ri g\mathbf{1}_{n\times n}\right)$, then 
\begin{equation}
\exists B\in \mathrm{GL}_{2n}\left(\mathbb{C}\right) : (p', q') = (BpB^{-1}, BqB^{-1}).
\end{equation}
Acting on both sides by $\bar{w}$ we get
\begin{equation}
(p', q') = ((S_2BS_2^{-1})p(S_2BS_2^{-1})^{-1}, (S_2BS_2^{-1})q(S_2BS_2^{-1})^{-1}),
\end{equation}
which implies
\begin{equation}
[B^{-1}S_2BS_2^{-1}, p] = [B^{-1}S_2BS_2^{-1}, q] = 0.
\end{equation}
Since $(p,q)\in \mu_{2n}^{-1}\left(\mathbf{O}_{2n, g}\right)\cap M_{\alpha, t}^{\mathrm{reg}}$ we can consider these equalities in a gauge, where $q$ is diagonal and eigenvalues of $q$ will be different. Then we get
\begin{equation}
B^{-1}S_2BS_2^{-1} = \lambda\mathbf{1}_{2n\times 2n} \text{ for some } \lambda\in \mathbb{C}^{*}.
\end{equation}
Rewriting this as $B^{-1}S_2B = \lambda S_2$ and taking square of both sides we get $\lambda^2 = 1$. Then
\begin{equation}
B = 
\begin{cases}
\mathrm{diag}(b_1, b_2), \lambda = 1\\
R\mathrm{diag}(b_1, b_2), \lambda = -1.
\end{cases}
\end{equation}
Here $b_1, b_2\in \mathrm{GL}_{n}\left(\mathbb{C}\right)$. Note that conjugation $R$ does not preserve $\mu_{n}^{-1}\left(\mathbf{O}_{n, 2g}\right) \cap m_2^{-1}\left(\ri g\mathbf{1}_{n\times n}\right)$, while conjugation by $\mathrm{diag}(b_1, b_2)$ does. Then the only possible case is the $\lambda = 1$.

Conjugation by $\mathrm{diag}(\mathbf{1}_{n\times n}, b_2)$ preserves $\mathrm{pr}$ by the definition. Conjugation by $\mathrm{diag}(b_1, b_2)$ descends to the overall conjugation of $(P, Q)$ by $b_1$. But this conjugation preserves $\pi_n$ by the definition. Hence $\pi_n \circ \mathrm{pr} ((p,q)) = \pi_n \circ \mathrm{pr} ((p',q'))$.
\paragraph{Step 4.} One can inverse $\varphi$ on the $M_{\beta}^{\mathrm{reg}}$ taking $\varphi^{-1} = \pi_{2n}\circ s \circ \zeta_{n}$. Here $s:\mathbb{M}_{\alpha, t} \rightarrow M_{\alpha, t}^{\bar{w}}$ is a section for the reduction from Theorem \ref{thm:gen_constr} (for the example in non-linear cases see \eqref{Non-linear_cases_reduction_section}).
\paragraph{Step 5.}\label{thm:CP_red_step5}
It remains to check that $\varphi$ maps the dynamics on $U$ to the dynamics of the $n$--particle Calogero--Painlev\'e system called by Image in Table \ref{table:automorphisms}. 

Let $\gamma\subset U$ be an integral curve of $2n$--particle Calogero--Painlev\'e system. Then for $(x_0, t_0)\in \gamma$ let $(\tilde{x}_0, t_0)\in \mu_{2n}^{-1}\left(\mathbf{O}_{2n, g}\right)\cap \left(M_{\alpha}^{\mathrm{reg}}\right)^{\bar{w}}\cap \mathbf{m}^{-1}\left(\mathbf{g}\right)$ be a lift of $(x_0, t_0)$ i. e. $\pi_{2n}(\tilde{x}_0) = x_0$. Let us denote by $\Gamma$ the integral curve of the corresponding matrix $2n\times 2n$ Painlev\'e system through $(\tilde{x}_0, t_0)$. Note that $\mu_{2n}^{-1}\left(\mathbf{O}_{2n, g}\right)\cap \left(M_{\alpha}^{\mathrm{reg}}\right)^{\bar{w}}\cap \mathbf{m}^{-1}\left(\mathbf{g}\right)$ is locally preserved by the dynamics of $2n\times 2n$ matrix Painlev\'e system, so $\Gamma \subset \mu_{2n}^{-1}\left(\mathbf{O}_{2n, g}\right)\cap \left(M_{\alpha}^{\mathrm{reg}}\right)^{\bar{w}}\cap \mathbf{m}^{-1}\left(\mathbf{g}\right)$. Then $\Gamma$ is a lift of $\gamma$ to $\mu_{2n}^{-1}\left(\mathbf{O}_{2n, g}\right)\cap \left(M_{\alpha}^{\mathrm{reg}}\right)^{\bar{w}}\cap \mathbf{m}^{-1}\left(\mathbf{g}\right)$. By Theorem \ref{thm:gen_constr}, $\mathrm{pr}$ maps $\Gamma$ to an integral curve of the $n\times n$ matrix Painlev\'e system, called by Image in Table \ref{table:automorphisms}. Then by definition $\pi_{n}$ maps $\mathrm{pr}(\Gamma)$ to an integral curve of the corresponding $n$--particle Calogero--Painlev\'e system.
\end{proof}
\begin{remark}
In cases $4, 5$ from Table \ref{table:automorphisms} we have open subset 
\begin{equation}
U = \pi_{nd}\left(\mu_{|G|n}^{-1}\left(\mathbf{O}_{|G|n, g}\right)\cap \left(M_{\alpha}^{\mathrm{reg}}\right)^{\bar{G}}\cap \mathbf{m}^{-1}\left(\mathbf{g}\right)\right) \subset \mathsf{M}_{\alpha, t}^{\mathsf{w}}.
\end{equation}
For this subset one can also construct $\varphi$ as in Theorem \ref{thm:CP_red}. Namely surjectivity of $\pi_{nd}$ follows from definition of $U$, while steps $3b$--$5$ of the proof can be performed with the slight modification. Note that for cases $4,5$ B\"acklund transformations are not of the form \eqref{Special_form_CP_Backlund}, so steps $1,2$ of the proof have no sense, so for this $U$ we do not have the second statement of Theorem \ref{thm:CP_red}.
\end{remark}

\begin{example}\label{ex:CPIIID6_CPIIID8}
	Let us consider	Calogero--Painlev\'e $\mathrm{III\big(D_6^{(1)}\big)}$
	and corresponding matrix Painlev\'e $\mathrm{III\big(D_6^{(1)}\big)}$.
	Hamiltonian \eqref{Ham_CPIIID6} gives equations of motion 
	\begin{equation}\label{eqmot_CPIIID6}
	\begin{aligned}
		&t\dot{q}_i=2q_i^2 p_i-q_i^2+(\alpha_1+\beta_1)q_i+t,	\\ &t\dot{p}_i=-2p_i^2q_i+2p_iq_i-(\alpha_1+\beta_1)p_i+\alpha_1+
		2g^2\sum_{j=1, j\neq i}^N\frac{q_j(q_i+q_j)}{(q_i-q_j)^3}.
	\end{aligned}	 
    \end{equation}
	
	Let us at first illustrate Theorem \ref{thm:CP_red} for $G$ generated by $\pi\circ \pi'$ (case 2 from Table \ref{table:automorphisms}).
	We take $N=2n$ particles and consider open subset $U_{[2^n]}\subset\left(\mathsf{M}_{\alpha}^{\mathrm{reg}}\right)^{\mathsf{w}}$ with $\mathsf{w}=\pi_N \circ (\pi\circ \pi') \circ \zeta_N$,
	choosing $\sigma=(1,n+1) \ldots (n,2n)$.
	Then $\alpha_0{=}\alpha_1{=}\beta_0{=}\beta_1{=}1/2$ and on $U_{[2^n]}$ we have
	\begin{equation}\label{invsubset_CPIIID6_l} 
	q_{i+n}=-q_{i}, \quad p_{i+n}=1-p_i-t/q_i^2.
	\end{equation}
	Then equations of motion \eqref{eqmot_CPIIID6}
	reduce to equations on $\{(p_i,q_i)\}_{i=1,\ldots n}$
	\begin{equation}\label{eqmotinv_CPIIID6_l}
		t\dot{q}_i=2q_i^2 \tilde{p}_i+q_i, \qquad t\dot{\tilde{p}}_i=-2\tilde{p}_i^2q_i-\tilde{p}_i+\frac{q_i}2-\frac{t^2}{2q_i^3}+
		16g^2\sum_{j=1, j\neq i}^n\frac{q_iq_j^2(q_i^2+q_j^2)}{(q_i^2-q_j^2)^3}.
	\end{equation}
	Here for convenience we introduce $\tilde{p}_i=p_i-1/2+t/(2q_i^2)$,
	such that $\mathsf{w}^*(\tilde{p}_i)=-\tilde{p}_i$.
	Now we find coordinates in which this dynamics
	has Calogero--Painlev\'e type.
	In other words, we have to find map $\varphi$
	from Theorem \ref{thm:CP_red} using Hamiltonian reduction on the matrix level (with $g_2=g$). 
	We take $\{(p_i,q_i)\}_{i=1,\ldots 2n}\in U_{[2^n]}$,  namely under condition \eqref{invsubset_CPIIID6_l}. Choosing the following matrix $A$ from the proof of Theorem \ref{thm:CP_red} we have the following $(p,q)\in M_{\alpha}^{\bar{w}}$
	\begin{equation}
	A=\frac1{\sqrt2}	\begin{pmatrix}
		\mathbf{1}_{n\times n} & \mathbf{1}_{n\times n}\\
		-\mathbf{1}_{n\times n} & \mathbf{1}_{n\times n}
		\end{pmatrix},
	\end{equation}
    \begin{equation}
	\begin{aligned}
   \mathrm{Ad}_A\circ\zeta_{2n}(\{(p_i,q_i)\})
   =\Bigg(\begin{pmatrix}
  	\mathrm{diag}(\frac12-\frac{t}{2q_i^2}) & \mathfrak{p}_{12}\\
  	\mathfrak{p}_{21} & \mathrm{diag}(\frac12-\frac{t}{2q_i^2})
  \end{pmatrix}, \begin{pmatrix}
  0 & -\mathrm{diag} (q_i)\\
  -\mathrm{diag} (q_i) & 0
\end{pmatrix} \Bigg), \\
    (\mathfrak{p}_{12})_{ii}=-\tilde{p}_i+\frac{\ri g}{2q_i}, \,  (\mathfrak{p}_{21})_{ii}=-\tilde{p}_i-\frac{\ri g}{2q_i},\quad i\neq j:\, (\mathfrak{p}_{12})_{ij}=\frac{2\ri g q_j}{q_j^2-q_i^2}, \,
    (\mathfrak{p}_{21})_{ij}=\frac{2\ri g q_i}{q_j^2-q_i^2}.
    \end{aligned}
	\end{equation}
   After the conjugation by matrix $A$ the moment map condition becomes
   \begin{equation}
   [p,q]=\ri g\begin{pmatrix}
   \mathbf{1}_{n\times n} - 2 v_n v_n^t  & 0\\
   0 & \mathbf{1}_{n\times n} \\
   \end{pmatrix}\, \textrm{due to}\quad
A v_{2n}=\sqrt2 (\underbrace{1,\ldots 1}_n,\underbrace{0,\ldots 0}_n)^t.	
   \end{equation}
   So, using coordinates $(\tilde{P},\tilde{Q})=(\mathfrak{p}_{12}\mathfrak{q}_{12}^{-1},\mathfrak{q}_{12}\mathfrak{q}_{21})$ from Sec. \ref{ssec:linear_cases}
   we obtain on $ \mathrm{pr}\left(\mathrm{Ad}_A\circ\zeta_{2n}(U_{[2^n]})\right)$
   \begin{equation}\label{ex42_PQ_int}
   	\tilde{P}_{ii}=\tilde{p}_i/q_i-\frac{\ri g}{2q_i^2}, \quad i\neq j:\,
   	\tilde{P}_{ij}=
   	\frac{2\ri g }{q_i^2-q_j^2}, \qquad  \tilde{Q}_{ij}=\delta_{ij}q_i^2.
   	\end{equation} 
	In coordinates  \eqref{Proper_coordinates_case_3D6_to_3D8}
	we obtain following coordinates $\{(P_i,Q_i)\}_{i=1,\ldots n}$ and time $s$ on 
	$ \pi_n\circ\mathrm{pr}\left(\mathrm{Ad}_A\circ\zeta_{2n}(U_{[2^n]})\right)$
   \begin{equation}
   	P_i=4 \tilde{p}_i/q_i, \quad Q_i=q_i^2/4, \qquad s=t^2/16.
   \end{equation}
	Finally, one can check from \eqref{eqmotinv_CPIIID6_l} that these $\{(P_i,Q_i)\}_{i=1,\ldots n}$ satisfy Calogero--Painlev\'e $\mathrm{III\big(D_8^{(1)}\big)}$
	with coupling constant $2g$
	\begin{equation}
		s\frac{dQ_i}{ds}=2Q_i^2 P_i+Q_i, \qquad s\frac{dP_i}{ds}=-2P_i^2 Q_i-P_i+1-s/Q_i^2+
		8g^2\sum_{j=1, j\neq i}^n\frac{Q_j(Q_i+Q_j)}{(Q_i-Q_j)^3}.	
	\end{equation}
    \end{example}
     
	\begin{example}\label{ex:CPIIID6_CPV}
		Let us then illustrate Theorem \ref{thm:CP_red} also for Calogero--Painlev\'e $\mathrm{III\big(D_6^{(1)}\big)}$system, but for $G$ generated by  $\pi'$ (case 7 from Table \ref{table:automorphisms}).
	Note that this case is non-linear in difference with the previous example. We take $N=2n$ particles
	and consider $U_{[2^n]}\subset\left(\mathsf{M}_{\alpha}^{\mathrm{reg}}\right)^{\mathsf{w}}$ with $\mathsf{w}=\pi_N\circ \pi' \circ \zeta_N$.
	Then $\beta_0{=}\beta_1{=}1/2$ and on $U_{[2^n]}$ we have
	\begin{equation}\label{invsubset_CPIIID6_nl}
	q_{i+n}=\frac{t}{q_i}, \quad p_{i+n}=-\frac{q_i(p_iq_i+\alpha_1)}t.
	\end{equation} 
	Then equations of motion \eqref{eqmot_CPIIID6} reduce to equations on $\{(p_i,q_i)\}_{i=1,\ldots n}$
	\begin{equation}\label{eqmotinv_CPIIID6_nl}
	\begin{aligned}
		&t\dot{q}_i=2q_i^2 p_i-q_i^2+(\alpha_1+1/2)q_i+t	\\ &t\dot{p}_i=-2p_i^2q_i+2p_iq_i-\left(\alpha_1+\frac12\right)p_i+\alpha_1+
		2g^2 \frac{t q_i (q_i^2+t)}{(q_i^2-t)^3}
		+2g^2\sum_{j=1, j\neq i}^n\left(\frac{q_j(q_i+q_j)}{(q_i-q_j)^3}+\frac{tq^{-1}_j(q_i+tq_j^{-1})}{(q_i-tq_j^{-1})^3} \right).
	\end{aligned}
    \end{equation}
We want to find $\varphi$ from Theorem \ref{thm:CP_red}
using Hamiltonian reduction at matrix level  (with $g_2=g$). We choose $A$ as in the previous example, and take $\{(p_i,q_i)\}_{i=1,\ldots 2n}\in U_{[2^n]}$, namely
under condition~ \eqref{invsubset_CPIIID6_nl}.
Then we have the following $(p,q)\in M_{\alpha}^{\bar{w}}$
\begin{equation}
	\begin{aligned}
	 \mathrm{Ad}_A\circ\zeta_{2n}(\{(p_i,q_i)\})=	
	\Bigg(\begin{pmatrix}
		* & *\\
		\tilde{\mathfrak{p}}_{21}  & *
	\end{pmatrix},	
	\frac12\begin{pmatrix}
	\mathrm{diag} (q_i+t/q_i)	& \mathrm{diag} (-q_i+t/q_i)\\
	 \mathrm{diag} (-q_i+t/q_i)	& \mathrm{diag} (q_i+t/q_i)
	\end{pmatrix}\Bigg)
\\
(\tilde{\mathfrak{p}}_{21})_{ii}=-\frac{p_i}2-\frac{q_i(p_iq_i+\alpha_1)}{2t}+\frac{\ri g}2\frac{q_i+q_j}{t-q_iq_j},\quad i\neq j:\,
(\tilde{\mathfrak{p}}_{21})_{ij}=\frac{\ri g}2\left( \frac{q_iq_j+t}{t(q_j-q_i)}+\frac{q_i+q_j}{t-q_iq_j}\right),
\end{aligned}
\end{equation}
where we calculate only necessary matrix block for momentum. 
So, using formulas from Sec. \ref{ssec:nonlinear_cases}
we obtain following $(\tilde{P},\tilde{Q})$ on $ \mathrm{pr}\left(\mathrm{Ad}_A\circ\zeta_{2n}(U_{2^n})\right)$ (only diagonal elements of $\tilde{P}$ are calculated)
\begin{equation}
	\tilde{P}_{ii}=(2t \tilde{\mathfrak{q}}_{21}^{-1}
	\tilde{\mathfrak{p}}_{21} \tilde{\mathfrak{q}}_{11}^{-1})_{ii}=\frac4{q_i-tq_i^{-1}}\left(\left(p_i+\frac{\alpha_1}{q_i+tq_i^{-1}}\right)q_i+\frac{2t\ri g }{q_i^2-t^2q_i^{-2}}\right), \quad
	\tilde{Q}_{ij}=(\tilde{\mathfrak{q}}_{11})_{ij}=
	\frac12\delta_{ij}(q_i+tq_i^{-1}).
\end{equation}
In coordinates  \eqref{D6_D5_int} 
we obtain following coordinates $\{(P_i,Q_i)\}_{i=1,\ldots n}$ and time $s$ on 
$\pi_n\circ\mathrm{pr}\left(\mathrm{Ad}_A\circ\zeta_{2n}(U_{2^n})\right)$ 
\begin{equation}\label{ex_CPIIID6_CPV_PQ}
	P_i=8\frac{p_iq_i+\alpha_1/2}{q_i/\sqrt{t}-\sqrt{t}/q_i}+\frac{4\ri g \left(q_i/\sqrt{t}+\sqrt{t}/q_i\right)}{\left(q_i/\sqrt{t}-\sqrt{t}/q_i\right)^2}, \quad Q_i=\frac12+\frac14\left(\frac{q_i}{\sqrt{t}}+\frac{\sqrt{t}}{q_i}\right), \qquad s=-8\sqrt{t}, 
\end{equation}
Finally, these $\{(P_i,Q_i)\}_{i=1,\ldots n}$ satisfy Calogero--Painlev\'e  $\mathrm{V}(1-\alpha_1-\ri g,\ri g,\alpha_1-\ri g,\ri g)$ with coupling constant $2g$, which one can check from \eqref{eqmotinv_CPIIID6_nl}
\begin{equation}
\begin{aligned}
&s\frac{dQ_i}{ds}=(2P_i+s) Q_i (Q_i-1)+\ri g (1-2Q_i),\\
&s\frac{dP_i}{ds}=(P_i+s)P_i (1-2 Q_i)+\ri g (2 P_i+s)-\alpha_1 s+4g^2 \sum_{j=1, j\neq i}^n \frac{2Q_iQ_j+2Q_j^2-Q_i-3Q_j}{(Q_i-Q_j)^3}.	
\end{aligned}	
\end{equation}

\end{example}

\begin{example}\label{ex:CPII_CPII}
Let us now consider Calogero--Painlev\'e $\mathrm{II}$ system, as
in Example \ref{ex:intro_Calogero} from Introduction, but with an arbitrary number of particles. Then Hamiltonian \eqref{Ham_CPII_2} becomes
\begin{equation}
	H_N(\{(p_i,q_i)\};t) = \sum_{i=1}^N\left(\frac{1}{2}p_i^2 - \frac{1}{2}\left(q_i^2 + \frac{t}{2}\right)^2-(\alpha_1-1/2)q_i\right) + \sum_{j,i=1, j<i}^{N}\frac{g^2}{(q_i - q_j)^2}.
\end{equation}
Corresponding equations of motion are	
\begin{equation}\label{eqmot_CPII}
	\dot{q}_i=p_i, \qquad	\dot{p}_i=2q_i^3+tq_i+(\alpha_1-1/2)+2g^2 \sum_{j=1,j\neq i}^N \frac1{(q_i-q_j)^3}.
\end{equation}	

Let us illustrate Theorem \ref{thm:CP_red} for $G$ generated by $r$ (case 1 from Table \ref{table:automorphisms}). We take $N=2n$ particles and consider $U_{[2^n]}\subset\left(\mathsf{M}_{\alpha}^{\mathrm{reg}}\right)^{\mathsf{w}}$ with $\mathsf{w}=\pi_N \circ r \circ \zeta_N$. Then $\alpha_1=1/2$ and on $U_{[2^n]}$ we have
\begin{equation}\label{invsubset_CPII} 
q_{i+n}=-q_i,\, p_{i+n}=-p_i.
\end{equation}
Then equations of motion \eqref{eqmot_CPII} reduce to 
equations on $\{(p_i,q_i)\}_{i=1,\ldots n}$ 
\begin{equation}\label{eqmotinv_CPII}
	\dot{q}_i=p_i, \qquad	\dot{p}_i=2q_i^3+tq_i+
	\frac{g^2}{4q_i^3}+4g^2 \sum_{j=1, j\neq i}^n \frac{q_i(q_i^2+3 q_j^2)}{(q_i^2-q_j^2)^3}. 
\end{equation}	
We want to find $\varphi$ from Theorem \ref{thm:CP_red}
using matrix Hamiltonian reduction (with $g_2=g$).
The calculations just resemble those from Example \ref{ex:CPIIID6_CPIIID8}. As a result, we have \eqref{ex42_PQ_int} for $(\tilde{P}$, $\tilde{Q})$ with $p_i$ instead of $\tilde{p}_i$. Then, in coordinates \eqref{P2_proper_coordinates} we obtain momentum $\breve{P}$ in diagonal form
\begin{equation}\label{ex_CPII_CPII_PQ}
	\breve{Q}_i=-2^{-1/3}\tilde{P}=-2^{-1/3}\left(\frac{p_i}{q_i}-\frac{\ri g}{2q_i^2}\right), \quad \breve{P}_i=2^{1/3}\tilde{Q}=2^{1/3} q_i^2, \qquad s=-2^{1/3}t,
\end{equation}
where $\breve{Q}_i$ and $\breve{P}_i$
are diagonal elements of $\breve{Q}$ and $\breve{P}$
respectively. So the natural projection $\pi_N$ from $\mathrm{pr}(\mathrm{Ad}_A\circ\zeta_{2n}(U_{[2^n]}))$ gives us Calogero--Painlev\'e system in gauge, where $\breve{P}$ is diagonal. 
 
It follows from \eqref{eqmotinv_CPII} that
\begin{equation}
	\frac{d\breve{P}_i}{ds}=2\breve{P}_i\breve{Q}_i-\ri g, \qquad  \frac{d\breve{Q}_i}{ds}=
	\breve{P}_i-\breve{Q}_i^2-\frac{s}2+4 g^2 \sum_{j\neq i}\frac{\breve{P}_i+3\breve{P}_j}{(\breve{P}_i-\breve{P}_j)^3}.	
\end{equation}
This dynamics is Hamiltonian with 
\begin{equation}
	H_n(\{(\tilde{P}_i,\breve{Q}_i)\};t)=\sum_{i=1}^n (\frac{\breve{P}_i^2}2-(\breve{Q}_i^2+\frac{s}2)\breve{P}_i+\ri g \breve{Q}_i)-(2g)^2 \sum_{j\neq i} \frac{\breve{P}_i+\breve{P}_j}{(\breve{P}_i-\breve{P}_j)^2}.
\end{equation}
It is easy to see that this Hamiltonian can be obtained from the Matrix Painlev\'e Hamiltonian \eqref{Ham_MPII:f} in the gauge, where $\breve{P}$ is diagonal. So we get Calogero--Painlev\'e $\mathrm{II} (-\ri g-1/2)$ with coupling constant~$2g$. 

In order to obtain transformation of Calogero--Painlev\'e system to the standard gauge (as in r.h.s. of \eqref{CP_section}), one should diagonalize matrix $Q$. This cannot give an algebraic formula for general $n$.
\begin{remark}
Note that for $N=2$ and $g=0$ coordinate transformations from above Examples \ref{ex:CPIIID6_CPIIID8}, \ref{ex:CPIIID6_CPV}, \ref{ex:CPII_CPII} reproduce formulas \cite[5.10-5.12, 5.5-5.7, 9.5-9.7]{TOS05} 
for the folding transformations of the corresponding Painlev\'e equations. 
\end{remark}

\end{example}

\section{Further examples}
\label{sec:further}

\subsection{Matrix reduction generalization}
\label{ssec:matr_gen}
For the construction in Sec.
\ref{sec:block_reductions} we choose for a certain B\"acklund transformation the additional twist $S_d = \mathrm{diag}\left(\mathbf{1}_{n\times n}, e^{\frac{2\pi \ri }{d}}\mathbf{1}_{n\times n}, ..., e^{\frac{2\pi (d-1) }{d}}\mathbf{1}_{n\times n}\right)$.
At Calogero level this twist corresponds to a permutation class $[d^n]$.  We also choose the moment map value 
$\mathbf{g} = (\ri g_2\mathbf{1}_{n\times n}, ..., \ri g_{d}\mathbf{1}_{n\times n})$.
It is natural to try to weaken such restrictions.
Below we present two examples for such generalizations.

\begin{example}\label{ex:3n_II}
	Let us consider $3n\times 3n$ matrix Painlev\'e $\mathrm{II}$. We take $M_{\alpha}^{\bar{r}}$ with  $\bar{r}=\mathrm{Ad}_{S} \circ r$,
	where $S=\mathrm{diag}\left(\mathbf{1}_{n\times n},- \mathbf{1}_{2n\times 2n}\right)$. Recall that $r: (p,q)\mapsto (-p,-q)$ and $\theta\mapsto -\theta$, so we set $\theta=0$.
	
	\noindent $\mathbf{Step \, 1.}$
	The matrices $(p,q)\in M_{\alpha}^{\bar{r}}$ are given by
	\begin{equation}
		p=
		\begin{pmatrix}
		 0	& \mathfrak{p}_{12} & \mathfrak{p}_{13} \\
		 \mathfrak{p}_{21} & 0 & 0 \\
		 \mathfrak{p}_{31} & 0 & 0
		\end{pmatrix}, \qquad
	q=
	\begin{pmatrix}
		0	& \mathfrak{q}_{12} & \mathfrak{q}_{13} \\
		\mathfrak{q}_{21} & 0 & 0 \\
		\mathfrak{q}_{31} & 0 & 0
	\end{pmatrix}
    \end{equation}
    with a symplectic form $\mathrm{Tr}(\rd \mathfrak{p}_{12}\wedge \rd \mathfrak{q}_{21}) + \mathrm{Tr}(\rd \mathfrak{p}_{21}\wedge \rd \mathfrak{q}_{12})+\mathrm{Tr}(\rd \mathfrak{p}_{13}\wedge \rd \mathfrak{q}_{31}) + \mathrm{Tr}(\rd \mathfrak{p}_{31}\wedge \rd \mathfrak{q}_{13})$.
    
    \noindent $\mathbf{Step \, 2.}$
    The remaining gauge freedom is $\mathrm{diag} (\mathrm{GL}_{n}(\mathbb{C})),\mathrm{GL}_{2n}(\mathbb{C}))$ and the moment map is
    \begin{equation}\label{ex_3n_moment}
    [p,q]=\begin{pmatrix}
     (m_1)_{n\times n}	& 0\\
     0  & (m_2)_{2n\times 2n}
    \end{pmatrix}=	
    \begin{pmatrix}
    \mathfrak{p}_{12} \mathfrak{q}_{21} {+} \mathfrak{p}_{13} \mathfrak{q}_{31} {-} \mathfrak{q}_{12} \mathfrak{p}_{21} {-} \mathfrak{q}_{13} \mathfrak{p}_{31}	& 0 & 0  \\
    0	& \mathfrak{p}_{21} \mathfrak{q}_{12}{-} \mathfrak{q}_{21} \mathfrak{p}_{12}  & \mathfrak{p}_{21} \mathfrak{q}_{13} {-} \mathfrak{q}_{21} \mathfrak{p}_{13}  \\
    0	& \mathfrak{p}_{31} \mathfrak{q}_{12} {-} \mathfrak{q}_{31} \mathfrak{p}_{12}  & \mathfrak{p}_{31} \mathfrak{q}_{13} {-} \mathfrak{q}_{31} \mathfrak{p}_{13}
    \end{pmatrix}	
	\end{equation}
    
    \noindent $\mathbf{Step \, 3.}$
    We perform the Hamiltonian reduction with respect to
	$\mathrm{GL}_{2n}(\mathbb{C})=\{\mathrm{diag}(\mathbf{1}_{n\times n},h_2)|h_2\in \mathrm{GL}_{2n}(\mathbb{C}) \}$.
	We take the moment map value $m_2=\mathrm{diag}\left(\ri g_2\mathbf{1}_{n\times n},\ri g_3\mathbf{1}_{n\times n}\right)$, where $g_2\neq g_3$. Note that its stabilizer is $\mathrm{GL}_n^2(\mathbb{C})=\{\mathrm{diag}(\mathbf{1}_{n\times n},h_2,h_3)|h_2, h_3\in \mathrm{GL}_n(\mathbb{C}) \}$.
	On the reduction $\mathbb{M}_{\alpha}$ we can choose the following Darboux coordinates 
	\begin{equation}
		(\tilde{P},\tilde{Q})=(\mathfrak{p}_{12}\mathfrak{q}_{12}^{-1},(1-g_3/g_2)\mathfrak{q}_{12}\mathfrak{q}_{21}).
	\end{equation}
	
	\noindent $\mathbf{Step \, 4.}$
	After such Hamiltonian reduction we obtain $n\times n$ matrix system with 
	Hamiltonian
	\begin{equation}
		H(\tilde{P},\tilde{Q};t)+3t^2/8=\mathrm{Tr}((\ri g_2-\ri g_3) \tilde{P} - t\tilde{Q}  - \tilde{Q}^2 + \tilde{P} \tilde{Q} \tilde{P}).
	\end{equation}
	Omitting $3t^2/8$ and making standard substitution
	\begin{equation}
		\tilde{P}=2^{1/3}Q, \quad \tilde{Q}=2^{-1/3}(P+Q^2)-t/2, \qquad s= -2^{1/3} t
	\end{equation}
	we obtain standard Hamiltonian of $\mathrm{PII}(-\ri (g_2-g_3)-1/2)$.
\end{example}

The block sizes and the moment map value 
in above example, as well as in Sec. \ref{sec:block_reductions} are quite special.
At least after the Hamiltonian reduction the phase space dimension should correspond to a matrix system, namely it should be equal to a doubled square of an integer. For the standard situation from Sec. \ref{sec:block_reductions} we have
\begin{equation}
	\underbrace{2(dn)^2}_{\mathrm{dim} M_{\alpha, t}}\rightarrow 
	\underbrace{2dn^2}_{\mathrm{dim} M_{\alpha, t}^{\bar{w}}}=\underbrace{2n^2}_{\mathrm{dim}\mathbb{M}_{\alpha,t}}+\underbrace{(d-1)n^2}_{\mathrm{dim}\mathrm{Mat}^{d-1}_n(\mathbb{C})}+\underbrace{(d-1)n^2}_{\mathrm{dim}\mathrm{GL}^{d-1}_n(\mathbb{C})},
\end{equation}
where the last two terms correspond to the value of moment map and its stabilizer correspondingly.
For the situation from the Example \ref{ex:3n_II} analogous calculation gives
\begin{equation}
	\underbrace{2(3n)^2}_{\mathrm{dim} M_{\alpha,t}}\rightarrow 
	\underbrace{8n^2}_{\mathrm{dim} M_{\alpha,t}^{\bar{w}}}=\underbrace{2n^2}_{\mathrm{dim}\mathbb{M}_{\alpha,t}}+\underbrace{4n^2}_{\mathrm{dim}\mathrm{Mat}_{2n}(\mathbb{C})}+\underbrace{2n^2}_{\mathrm{dim}\mathrm{GL}^2_n(\mathbb{C})}.
\end{equation}
However, even if the dimension is not a doubled square of integer, sometimes we can make certain additional reduction to obtain matrix Painlev\'e. We illustrate this by the following example.
\begin{example}
Let us again consider a Hamiltonian reduction of $M^{\bar{r}}_{\alpha}$ for $3n\times 3n$ matrix Painlev\'e $\mathrm{II}$, but, in difference with Example \ref{ex:3n_II}, with respect to $\mathrm{GL_n}(\mathbb{C})=\mathrm{diag}(h_1,\mathbf{1}_{2n\times 2n}|h_1\in \mathrm{GL_n}(\mathbb{C}))$.
We take for the moment map (see \eqref{ex_3n_moment}) the value
\begin{equation}
	\mathfrak{p}_{12} \mathfrak{q}_{21} + \mathfrak{p}_{13} \mathfrak{q}_{31} - \mathfrak{q}_{12} \mathfrak{p}_{21} - \mathfrak{q}_{13} \mathfrak{p}_{31}=\ri g_1 \mathbf{1}_{n\times n},
\end{equation}
so after the reduction we obtain $6n^2$-dimensional system
\begin{equation}
	\underbrace{2(3n)^2}_{\mathrm{dim} M_{\alpha,t}}\rightarrow 
	\underbrace{8n^2}_{\mathrm{dim} M_{\alpha,t}^{\bar{w}}}=\underbrace{6n^2}_{\mathrm{dim}\mathbb{M}_{\alpha,t}}+\underbrace{n^2}_{\mathrm{dim}\mathrm{Mat}_{n}(\mathbb{C})}+\underbrace{n^2}_{\mathrm{dim}\mathrm{GL}_n(\mathbb{C})}.
\end{equation}
On the reduction we introduce Darboux coordinates
\begin{equation}
	\begin{pmatrix}
		\tilde{P}_1 & \tilde{P}_2 & \tilde{P}_3\\ \tilde{Q}_1 & \tilde{Q}_2 & \tilde{Q}_3
	\end{pmatrix}=
        \begin{pmatrix}
		\mathfrak{q}_{21}\mathfrak{p}_{13}-\mathfrak{p}_{21}\mathfrak{q}_{13} & (\mathfrak{p}_{31}-\mathfrak{q}_{31}\mathfrak{q}_{21}^{-1}\mathfrak{p}_{21})\mathfrak{q}_{21}^{-1} &\mathfrak{q}_{21}\mathfrak{q}_{12}-\frac12(\mathfrak{p}_{21}\mathfrak{q}_{21}^{-1})^2+\mathfrak{q}_{21}\mathfrak{q}_{13}\mathfrak{q}_{31}\mathfrak{q}_{21}^{-1}+\frac{t}2\\ \mathfrak{q}_{31}\mathfrak{q}_{21}^{-1} & \mathfrak{q}_{21}\mathfrak{q}_{13} & -\mathfrak{p}_{21}\mathfrak{q}_{21}^{-1}
		\end{pmatrix}.
\end{equation}	

This matrix system has the Hamiltonian
\begin{equation}\label{Ham_3M}
	H(\{\tilde{P}_i,\tilde{Q}_i\};t)+3t^2/8=\mathrm{Tr}\left(-\tilde{P}_3^2+\frac14 (\tilde{Q}_3^2-t)^2-(\ri g+1/2) \tilde{Q}_3+
	\tilde{P}_1\tilde{P}_2-2\tilde{Q}_2\tilde{P}_2\tilde{Q}_3\right).
\end{equation}

On the coordinates $(\tilde{P}_3,\tilde{Q}_3)$ we have almost matrix Painlev\'e $\mathrm{II}$ Hamiltonian. 
It appears that we can perform two successive Hamiltonian reductions to obtain matrix system only on $(\tilde{P}_3,\tilde{Q}_3)$.
For the first one, with respect to translations of $\tilde{Q}_1$, we fix moment map value $\tilde{P}_1=0$.
Then the Hamiltonian becomes invariant with respect to $(\tilde{P}_2,\tilde{Q}_2)\mapsto(h\tilde{P}_2,\tilde{Q}_2h^{-1}), h\in \mathrm{GL}_n(\mathbb{C})$.
We fix the value of corresponding moment map $\tilde{P}_2\tilde{Q}_2$ to be a scalar matrix, namely $\tilde{P}_2\tilde{Q}_2=\theta\mathbf{1}_{n\times n}=\tilde{Q}_2\tilde{P}_2$. Then after   coordinate and time rescaling
\begin{equation}
	P=2^{1/3}\tilde{P},\quad Q=2^{-1/3}\tilde{Q},\qquad s= -2^{1/3}t,
\end{equation}
we see that Hamiltonian \eqref{Ham_3M} becomes the standard Hamiltonian of matrix $n\times n$ $\mathrm{PII}\big(-\ri g-\frac12-2\theta \big)$. 
\end{example}

\subsection{Adding algebraic solutions to Calogero--Painlev\'e}\label{ssec:algebrCP}
Let us consider the Calogero--Painlev\'e system together with order $2$ B\"acklund transformation $\mathsf{w}$.
Let us suppose that $\mathsf{w}$
leads to the permutations of cyclic type $[2^n 1^m]$, in difference
with Step 2 in the proof of Theorem \ref{thm:CP_red}.
Then we obtain that $\mathsf{w}((p_i,q_i))=(p_i,q_i)$ for $2n<i\leq 2n+m$, so the last $m$ particles dynamics evolve as certain algebraic functions.  
Below we give two examples 
for the dynamics on the Calogero--Painlev\'e invariant subset involving such additional algebraic solutions.  
\begin{example}
	We modify Example \ref{ex:CPII_CPII} for Calogero--Painlev\'e $\mathrm{II}$.
	The additional algebraic solution is
	\begin{equation}
  \mathsf{w}((p_i,q_i))=(-p_i,-q_i)=(p_i,q_i) \Rightarrow (p_i,q_i)= (0,0).
   \end{equation}
  We can add only one such particle,
  due to condition $q_i\neq q_j$ for $i\neq j$ on $\mathsf{M}^{\mathrm{reg}}_{\alpha}$.
    Adding such particle $q_{2n+1}=p_{2n+1}=0$,
	for the rest of the particles we have equations of motion on $(p_i,q_i)$, $1\leq i\leq n$
	\begin{equation}
		\dot{q}_i=p_i, \qquad	\dot{p}_i=2q_i^3+tq_i+
		\frac{9g^2}{4q_i^3}+4g^2 \sum_{j=1, j\neq i}^n \frac{q_i(q_i^2+3 q_j^2)}{(q_i^2-q_j^2)^3}. 
	\end{equation}	
which differ from
\eqref{eqmotinv_CPII}
only in coefficient of term $q_i^{-3}$.
Then it is easy to see that we can modify
Example \ref{ex:CPII_CPII} by hands.
Namely, it is enough to modify $Q_i$ from \eqref{ex_CPII_CPII_PQ} by $g\rightarrow 3g$
\begin{equation}
	\breve{Q}_i=-2^{-1/3}\left(\frac{p_i}{q_i}-\frac{3\ri g}{2q_i^2}\right).
\end{equation}
Finally we obtain Calogero--Painlev\'e~$\mathrm{II}(-3\ri g-1/2)$ instead of $(-\ri g-1/2)$. 
	
\end{example}

\begin{example}
	We modify Example \ref{ex:CPIIID6_CPV} for Calogero--Painlev\'e $\mathrm{III}\big(D_6^{(1)}\big)$.
    The additional algebraic solutions are
    \begin{equation}
    	\mathsf{w}((p_i,q_i))=(-t^{-1}q_i(p_iq_i+\alpha_1),t/q_i)=(p_i,q_i) \Rightarrow (p_i,q_i)=\left(\mp\frac{\alpha_1}{2\sqrt{t}},\pm \sqrt{t}\right) 
    \end{equation}
	We can add one of them or both to our system.
	So let us add to the Calogero--Painlev\'e $\mathrm{III}\big(D_6^{(1)}\big)$ from Example \ref{ex:CPIIID6_CPV} $n_1=0,1$ particles $\sqrt{t}$ and $n_2=0,1$ particles $(-\sqrt{t})$.
	Modifying momenta $P_i$ from \eqref{ex_CPIIID6_CPV_PQ}
    by an additional term
    \begin{equation}
    	P_i=8\frac{p_iq_i+\alpha_1/2}{q_i/\sqrt{t}-\sqrt{t}/q_i}+\frac{4(1+2n_1n_2)\ri g \left(q_i/\sqrt{t}+\sqrt{t}/q_i+4(n_1-n_2)\right)}{\left(q_i/\sqrt{t}-\sqrt{t}/q_i\right)^2}
    \end{equation}
    we obtain Calogero--Painlev\'e $\mathrm{V}(1-\alpha_1-(1+2n_1n_2)\ri g,(1-2(n_1-n_2)+2n_1n_2)\ri g,\alpha_1-(1+2n_1n_2)\ri g,(1-2(n_2-n_1)+2n_1n_2)\ri g)$.
	
\end{example}

\begin{remark}
It would be interesting to obtain such Calogero--Painlev\'e relations from matrix Painlev\'e ones. Note that permutation matrix for $[2^n1^m]$ is conjugated to $S=\mathrm{diag}(\mathbf{1}_{(m+n)\times(m+n)},-\mathbf{1}_{n\times n})$, cf. Sec. \ref{ssec:matr_gen}.
\end{remark}

\subsection{Spin Calogero--Painlev\'e systems}\label{ssec:spin}
One can consider a more general case of Calogero--type systems taking in \eqref{CP_phase_space_def} general orbit $\mathbf{O}$ instead of $\mathbf{O}_{N, g}$. Let us denote the corresponding reduction map by $\pi_{\mathbf{O}}$ instead of $\pi_{N}$. 

The regular part of the phase space of spin Calogero--Painlev\'e system can be defined as
\begin{equation}\label{Spin_CM_phase_space}
\mathsf{M}_{\alpha, t}^{\mathrm{reg}}= M_{\alpha, t}^{\mathrm{reg}}//_{\mathbf{O}}\mathrm{GL}_{N}\left(\mathbb{C}\right).
\end{equation}
It will be more convenient for us to use the identification (for the details see \cite{Reshetikhin02}, Theorem 3).
\begin{equation}\label{Spin_CM_coordinates}
M_{\alpha, t}^{\mathrm{reg}}//_{\mathbf{O}}\mathrm{GL}_{N}\left(\mathbb{C}\right) = \left(\mathrm{T}^*\left(\mathbb{C}^N\backslash\mathrm{diags}\right)\times \mathbf{O}//_{0}\mathrm{GL}_{1}^{N}\left(\mathbb{C}\right)\right)/ \mathrm{S}_{N}.
\end{equation}
Let us recall the construction of the reduction $\mathbf{O}//_{0}\mathrm{GL}_{1}^{N}\left(\mathbb{C}\right)$ in the right hand side of \eqref{Spin_CM_coordinates}.

Let $X \in \mathfrak{gl}_{N}\left(\mathbb{C}\right)$, then the fundamental vector field, corresponding to the coadjoint action of $X$ on $\mathbf{O}$ is $v_{X}(m) = -\mathrm{ad}^*_{X}(m)$. Then Kirillov-Kostant-Souriau form can be written as $\omega_{\mathrm{KKS}}|_{m}(v_{X}, v_{Y}) = \left< m, [X, Y]\right>$. Then we have $\iota_{v_{X}}\omega_{\mathrm{KKS}}|_{m}(v_{Y}) = -\left< \mathrm{ad}^*_{Y}(m), X\right> = \left< \mathrm{d}m(v_{Y}), X\right>$, which means that the coadjoint action on $\mathbf{O}$ is Hamiltonian with the moment map $\mu_{KKS}: m \mapsto m$. Below we will identify points of $\mathbf{O} \subset \mathfrak{gl}_{N}\left(\mathbb{C}\right)^*$ with matrices using Killing form. 

Next, we restrict coadjoint action to the subgroup of diagonal matrices, so the moment map becomes the projection, which maps $(m_{ij})_{i,j=1,...,N}\mapsto (m_{ii})_{i = 1,..., N}$. Then the element of $\mathbf{O}//_{0}\mathrm{GL}_{1}^{N}\left(\mathbb{C}\right)$ is a matrix $m \in \mathbf{O}$ such that  $m_{ii} = 0\; \forall\; i = 1,..., N$, considered up to conjugations by diagonal matrices $(m_{ij})_{i,j=1,...,N} \sim (a_{i}a_{j}^{-1}m_{ij})_{i,j=1,...,N}$. We will denote this class by $[m]$. It will be useful to denote the coadjoint orbit of $m$ by $\mathbf{O}\left(m\right)$.

 Points of $\mathrm{T}^*\left(\mathbb{C}^N\backslash\mathrm{diags}\right)$ are ordered sets of pairs $((p_j, q_j))_{j = 1,..., N}$, where $q_j$'s are distinct. The action of $S_N$ on $\left(\mathrm{T}^*\left(\mathbb{C}^N\backslash\mathrm{diags}\right)\times \mathbf{O}//_{0}\mathrm{GL}_{1}^{N}\left(\mathbb{C}\right)\right)/ \mathrm{S}_{N}$ is defined as follows
\begin{equation}
\sigma: (((p_j, q_j))_{j = 1,..., N}, [m]) \mapsto (((p_{\sigma(j)}, q_{\sigma(j)}))_{j = 1,..., N}, [S_{\sigma}mS_{\sigma}^{-1}]),
\end{equation} 
where $S_{\sigma}$ is the matrix corresponding to $\sigma\in S_{n}$.

Finally the identification \ref{Spin_CM_coordinates} can be done as follows (cf. $\tilde{\zeta}_{N}$ in Sec. \ref{ssec:CP_systems_intro})
 \begin{equation}
 [(((p_j, q_j))_{j = 1,..., N}, [m])] \mapsto \Bigg[\Bigg(
\begin{pmatrix}
p_1& \frac{m_{12}}{q_1 {-} q_2}& \dots & \dots & \frac{m_{1N}}{q_1 {-} q_N}\\
\frac{m_{21}}{q_2 {-} q_1} & p_2&  \frac{m_{23}}{q_2 {-} q_3}& \ddots & \frac{m_{2N}}{q_2 {-} q_N}\\
\vdots & \frac{m_{32}}{q_3 {-} q_2} &  \ddots & \ddots & \vdots\\
\vdots & \ddots & \ddots & \ddots & \frac{m_{N{-}1,N}}{q_{N{-}1} {-} q_N}\\
\frac{m_{N1}}{q_{N} {-} q_1}& \dots & \dots & \frac{m_{N,N{-}1}}{q_N {-} q_{N{-}1}} & p_N
\end{pmatrix},\;\;
\begin{pmatrix}
q_1& 0& \dots & \dots & 0\\
0& q_2& 0& \ddots & \vdots\\
\vdots & 0 &  \ddots & \ddots & \vdots\\
\vdots & \ddots & \ddots & \ddots & 0\\
0& \dots & \dots & 0 & q_N
\end{pmatrix}\Bigg)\Bigg].
 \end{equation}
 The dynamics of a spin Calogero--Painlev\'e system is defined as descent of the matrix Painlev\'e dynamics with respect to the reduction \eqref{Spin_CM_phase_space}. Let us consider an analogue of Theorem \ref{thm:CP_red} for the spin Calogero--Painlev\'e $\mathrm{III\big(D_6^{(1)}\big)}$.
\begin{example}
The spin Calogero--Painlev\'e $\mathrm{III\big(D_6^{(1)}\big)}$ is defined by the Hamiltonian (cf. \eqref{Ham_CPIIID6}).
	\begin{equation}
	t H([(((p_j, q_j))_{j = 1,..., N}, [m])], t) = \sum_{i=1}^N (p_i^2 q_i^2+(-q_i^2+(\alpha_1+\beta_1)q_i+t)p_i-\alpha_1q_i) - \sum_{1\leq i<j \leq N}\frac{m_{ij}m_{ji}(q_i^2 + q_j^2)}{(q_i - q_j)^2}.
	\end{equation}
	We will consider a spin generalization of Example \ref{ex:CPIIID6_CPIIID8}. So, we take $N = 2n$ and $\alpha_1 = \beta_1 = \frac{1}{2}$. 
	
From the diagram \ref{fig:CP_red_thm} it can be seen that the natural candidate for $U$ from Theorem \ref{thm:CP_red} is $\pi_{\mathbf{O}}\left(\tilde{U}\right)$, where $\tilde{U} = \left(\left(M_{\alpha, t}^{\overline{\pi\circ \pi'}}\right)^{\mathrm{reg}}\cap \mu_{2n}^{-1}\left(\mathbf{O}\right) \cap \mathbf{m}^{-1}\left(\ri g_2 \mathbf{1}_{n\times n}\right)\right)$.  We want to obtain the coordinates on $\pi_{\mathbf{O}}\left(\tilde{U}\right)$ in which the dynamics corresponds to spin Calogero--Painlev\'e $\mathrm{III\big(D_8^{(1)}\big)}$ i. e. we aim to find an analogue of the map $\varphi$ from the diagram \ref{fig:CP_red_thm}.

\paragraph{Step 1.} Let us obtain coordinates on $\pi_{\mathbf{O}}\left(\tilde{U}\right)$. Consider $[(((p_j, q_j))_{j = 1,..., N}, [m])] = \pi_{\mathbf{O}}((p,q))$, where $(p,q) \in \tilde{U}$, then we get
$
[p, q] = \begin{pmatrix}
m_1 & 0\\
0 & \ri g_2 \mathbf{1}_{n\times n}
\end{pmatrix}.
$
Let $A$ be a matrix such that $\tilde{q}:= A^{-1}q A = \mathrm{diag}\left(q_1, ..., q_{2n}\right)$. Then we get 
\begin{equation}\label{ex:spin_CP_inv}
\begin{aligned}
\tilde{S}_2\tilde{q}\tilde{S}_{2}^{-1} &= -\tilde{q}\\
\tilde{S}_2\tilde{p}\tilde{S}_{2}^{-1} &= -\tilde{p} + 1 - t\tilde{q}^{-2}
\end{aligned}
\end{equation}
where $\tilde{S}_{2} = A^{-1}S_2 A,\;\; \tilde{p} = A^{-1} p A$.
Then without loss of generality we get 
\begin{equation}\label{ex:spin_CP_inv_q}
q_{i+n} = -q_{i}\textit{ where } 1\leq i \leq n.
\end{equation}
It can be seen that after additional multiplication of $A$ by a certain diagonal matrix from the right (which preserves $\tilde{q}$) we get $\tilde{S}_{2} = \begin{pmatrix}
0 & \mathbf{1}_{n\times n}\\
\mathbf{1}_{n\times n} & 0
\end{pmatrix}$.
So, for diagonal entries of $\tilde{p}$ from \eqref{ex:spin_CP_inv} we get
\begin{equation}\label{ex:spin_CP_inv_p}
p_{i+n} = 1 - p_{i} - \frac{t}{q_{i}^2}\textit{ where } 1\leq i \leq n.
\end{equation}

Now let us obtain conditions on spin variables $m_{ij}$'s. Note that since $A^{-1}S_2A = \tilde{S}_2$, we have
$
A^{-1} = \frac{1}{\sqrt{2}}\begin{pmatrix}
 \mathbf{1}_{n\times n} &  -\mathbf{1}_{n\times n}\\
 \mathbf{1}_{n\times n} &  \mathbf{1}_{n\times n}
\end{pmatrix}
\begin{pmatrix}
\alpha_{1} & 0\\
0 & \alpha_{2}
\end{pmatrix},
$ where $\alpha_1, \alpha_2 \in \mathrm{GL}_{n}\left(\mathbb{C}\right)$. Then we get
\begin{equation}\label{ex:spin_CP_inv_m}
m = [\tilde{p}, \tilde{q}] = A^{-1}[p, q]A = \begin{pmatrix}
\frac{1}{2}M & \frac{1}{2}M-\ri g_2 \mathbf{1}_{n\times n}\\
\frac{1}{2}M-\ri g_2 \mathbf{1}_{n\times n} & \frac{1}{2}M
\end{pmatrix},
\end{equation}
where $M = \alpha_{1}m_1\alpha_{1}^{-1} + \ri g_2 \mathbf{1}_{n\times n}$. Then we have $M\in \tilde{\mathbf{O}}$, where $\tilde{\mathbf{O}} = \mathbf{O}\left(m_1\right) + \ri g_2 \mathbf{1}_{n\times n}$. Note that $M$ has zero diagonal entries and is defined up to conjugations by diagonal matrices, which means that we have $[M] \in \tilde{\mathbf{O}}//_{0}\mathrm{GL}_{n}\left(\mathbb{C}\right)$.

Taking into account remaining symmetry corresponding to permutations of $(p_{i}, q_{i})$'s we get that $\pi_{\mathbf{O}}\left(\tilde{U}\right)$ is parametrized by points $[((P_{i}, Q_{i})_{i = 1,..., n}, [M])]\in \left(\tilde{\mathbf{O}}//_{0}\mathrm{GL}_{1}^{n}\left(\mathbb{C}\right)\times \mathrm{T}^*\left(\left(\mathbb{C}^n\backslash\mathrm{diags}\right)\right)\right)/ \mathrm{S}_{n}$, where
\begin{equation}\label{ex:spin_CP_proper_coord_PQ}
	Q_i = \frac{1}{4}q_i^2,\;\;P_i = \frac{4}{q_i}\left(p_{i} - \frac{1}{2} + \frac{t}{2q_i^2}\right),\;\; 1\leq i \leq n
\end{equation}
We will explain meaning of formulas \eqref{ex:spin_CP_proper_coord_PQ} below.

\paragraph{Step 2.} Let us explain why $[((P_{i}, Q_{i})_{i = 1,..., n}, [M])]$ are desired coordinates corresponding to the dynamics of spin Calogero--Painlev\'e $\mathrm{III\big(D_8^{(1)}\big)}$. Following the arguments similar to Step 5 from the proof of Theorem \ref{thm:CP_red} we get that the coordinates given by the map $\pi_{\mathbf{O}}\circ \mathrm{pr}$ match the dynamics of spin Calogero--Painlev\'e $\mathrm{III\big(D_8^{(1)}\big)}$. 

We have $\mathrm{pr}((p,q)) = (P,Q)$, where $(P,Q)$ is given by \eqref{Proper_coordinates_case_3D6_to_3D8}. One can check that the pair $(P, Q)$ is conjugated to $\left(\hat{P}, \hat{Q}\right) : =\left(2\left(\mathfrak{p}_{12}\mathfrak{q}_{12}^{-1} + \mathfrak{q}_{21}^{-1}\mathfrak{p}_{21}\right), \frac{1}{4}\mathfrak{q}_{12}\mathfrak{q}_{21}\right)$, where $\mathfrak{p}_{ij}, \mathfrak{q}_{ij}$'s are given by \eqref{linear_case_solutions} properly specialized for case \ref{sssec:IIID6_IIID8}.

To obtain $\pi_{\mathbf{O}}((P, Q)) = \pi_{\mathbf{O}}\left(\left(\hat{P}, \hat{Q}\right)\right)$ following the identification \eqref{Spin_CM_coordinates} one should consider the pair $\left(\hat{P}, \hat{Q}\right)$ in the basis, where $\hat{Q} = \frac{1}{4}\mathfrak{q}_{12}\mathfrak{q}_{21}$ is diagonal. Since $A^{-1}qA = \tilde{q}$ it is easy to see that
\begin{equation*}
\alpha_1 \mathfrak{q}_{12} \alpha_2^{-1} = \alpha_2  \mathfrak{q}_{21} \alpha_1^{-1} = -\mathrm{diag}\left(q_{1}, ..., q_{n}\right).
\end{equation*}
This proves that the formula $Q_{i} = \frac{1}{4}q_{i}^2$ gives part of proper coordinates. As well it implies that $\alpha_{1}$ diagonalizes $\frac{1}{4}\mathfrak{q}_{12}\mathfrak{q}_{21}$, so momenta conjugated to $Q_{i}$'s are diagonal entries of $\mathrm{Ad}_{\alpha_1}\left(\hat{P}\right)$. At first let us compute
\begin{equation*}
p_{i} = (A^{-1}pA)_{ii} = -\frac{1}{2}\left(\alpha_1 \mathfrak{p}_{12}\alpha_{2}^{-1} + \alpha_2 \mathfrak{p}_{21}\alpha_{1}^{-1}\right)_{ii} + \frac{1}{2} - \frac{t}{2q_{i}^2},\;\; 1\leq i \leq n.
\end{equation*}
Then we obtain
\begin{multline}
\left(\mathrm{Ad}_{\alpha_1}\left(\hat{P}\right)\right)_{ii} = \left(2\left(\alpha_1\mathfrak{p}_{12}\alpha_{2}^{-1}\alpha_{2}\mathfrak{q}_{12}^{-1}\alpha_1^{-1} + \alpha_{1}\mathfrak{q}_{21}^{-1}\alpha_{2}^{-1}\alpha_{2}\mathfrak{p}_{21}^{-1}\alpha_{1}^{-1}\right)\right)_{ii} =\\= -\frac{2}{q_{i}}\left(\alpha_1 \mathfrak{p}_{12}\alpha_{2}^{-1} + \alpha_2 \mathfrak{p}_{21}\alpha_{1}^{-1}\right)_{ii} = \frac{4}{q_{i}}\left(p_{i} - \frac{1}{2} + \frac{t}{2q_{i}^2}\right).
\end{multline}
So, we have proved that the formula \eqref{ex:spin_CP_proper_coord_PQ} gives part of coordinates corresponding to the dynamics of spin Calogero--Painlev\'e $\mathrm{III\big(D_8^{(1)}\big)}$. The rest of coordinates is given by $[\mathrm{Ad}_{\alpha_1}\left([\hat{P}, \hat{Q}]\right)]$. By straightforward computation we know that $[\hat{P}, \hat{Q}] = m_1 + \ri g_2 \mathbf{1}_{n\times n}$, so
\begin{equation}
\mathrm{Ad}_{\alpha_1}\left([\hat{P}, \hat{Q}]\right) = M.
\end{equation}
Additionally we can check that $[((P_{j},Q_{j}))_{j=1,...,n},[M]]$ are proper coordinates at the level of dynamics. Since coordinates $[M]$ on $\tilde{U}$ are tricky it is more convenient to use the analogue of $\varphi^{-1}$ which is defined as follows
\begin{multline}
\tilde{\varphi}^{-1}:\left(\tilde{\mathbf{O}}//_{0}\mathrm{GL}_{1}^{n}\left(\mathbb{C}\right)\times \mathrm{T}^*\left(\left(\mathbb{C}^n\backslash\mathrm{diags}\right)\right)\right)/ \mathrm{S}_{n} \rightarrow \tilde{U}\\
\begin{bmatrix}
(((P_{j},Q_{j}))_{j=1,...,n},\\
[M])
\end{bmatrix} \mapsto 
\begin{bmatrix}
\Big((\big(\frac{\sqrt{Q_j}P_j + 1}{2} + \frac{t}{8Q_{j}}, 2\sqrt{Q_j}\big))_{j=1,...,n}\cup (\big(\frac{{-}\sqrt{Q_j}P_j + 1}{2} + \frac{t}{8Q_{j}}, {-}2\sqrt{Q_j}\big))_{j=1,...,n},\\
[\begin{pmatrix}
\frac{1}{2}M & \frac{1}{2}M - \ri g_2 \mathbf{1}_{n\times n}\\
\frac{1}{2}M - \ri g_2 \mathbf{1}_{n\times n} & \frac{1}{2}M
\end{pmatrix}]\Big)
\end{bmatrix}.
\end{multline}
One can check that this map is symplectic. Let us compute the Hamiltonian with respect to variables $[(((P_{j},Q_{j}))_{j=1,...,n}, [M])]$. Taking $s = \frac{t^2}{16}$ as the new time variable we get
	\begin{multline}
	s H([((P_j, Q_j))_{j = 1,..., n}, [M]], s) = \sum_{i = 1}^{n}\left(P_{i}^2Q_{i}^2 + P_{i}Q_{i} - Q_{i} - sQ_{i}^{-1}\right)-\\-\sum_{1\leq i<j \leq n}M_{ij}M_{ji}\frac{2Q_{i}Q_{j}}{(Q_i - Q_j)^2} - \frac{1}{2}\sum_{1\leq i<j \leq n}M_{ij}M_{ji} + \frac{ng^2}{4}.
	\end{multline}
Note that on $\tilde{\mathbf{O}}$, the term $- \frac{1}{2}\sum_{1\leq i<j \leq n}M_{ij}M_{ji} + \frac{ng^2}{4}$ is just a constant since $\sum_{1\leq i<j \leq n}M_{ij}M_{ji}$ is a Casimir function. So we get the system with the Hamiltonian
\begin{equation}
s H([((P_j, Q_j))_{j = 1,..., n}, [M]], s) = \sum_{i = 1}^{n}\left(P_{i}^2Q_{i}^2 + P_{i}Q_{i} - Q_{i} - sQ_{i}^{-1}\right) - \sum_{1\leq i< j \leq n}M_{ij}M_{ji}\frac{2Q_{i}Q_{j}}{(Q_i - Q_j)^2},
\end{equation}
which is the Hamiltonian of spin Calogero--Painlev\'e $\mathrm{III\big(D_8^{(1)}\big)}$.
\end{example}

\bibliographystyle{alpha}
\bibliography{bibtex}

\end{document}